\newcommand{\hs} {\hspace{1pt}} 
\newcommand{\itspace} {\vspace{-1.2ex}}
\newcommand{\itspacec} {\vspace{-1.2ex}} 
\newcommand{\itspacecc} {\vspace{-0.5ex}}
\newcommand{\w} {\omega}
\newcommand{\Con} {\mathcal{A}}
\newcommand{\A} {\ovl{\Con}}
\newcommand{\AQR} {\ovl{\Con}_\inv}
\newcommand{\ARQ} {\ovl{\Con_\inv}}
\newcommand{\ConG} {\Con_\inv} 
\newcommand{\ishomc} {\mathfrak{v}}
\newcommand{\oishomc} {\ovl{\mathfrak{v}}}
\newcommand{\Borel}{\mathfrak{B}}
\newcommand{\BRq}{\Borel\big(\hspace{1pt}\qR\hspace{1pt}\big)}
\newcommand{\mAL}{\mu_{\mathrm{AL}}}
\newcommand{\mus}{\mu_{\mathrm{s}}}
\newcommand{\muL}{\lambda}
\newcommand{\muB} {\mu_{\mathrm{Bohr}}} 
\newcommand{\RB} {{\mathbb{R}_{\mathrm{Bohr}}}}
\newcommand{\CCC} {\mathbb{C}}
\newcommand{\RR} {\mathbb{R}}
\newcommand{\NN} {\mathbb{N}}
\newcommand{\ZZ} {\mathbb{Z}}
\newcommand{\ZN}{\mathbb{Z}\backslash\{0\}}
\newcommand{\Q}{\mathbb{Q}}
\newcommand{\murs} {\mathfrak{z}} 
\newcommand{\mj} {\mathfrak{j}}
\newcommand{\homiso} {\kappa}
\newcommand{\Add}[1] {\Ad_{#1}}
\newcommand{\IR} {\mathcal{R}}
\newcommand{\F} {\mathrm{F}}
\newcommand{\HHH} {\mathrm{H}}
\newcommand{\me} {\mathbb{1}}
\newcommand{\TT} {\mathfrak{T}}
\newcommand{\TO} {\mathrm{T}}
\newcommand{\DDD} {\mathrm{D}}
\newcommand{\res} {\mathfrak{r}}
\newcommand{\III} {\mathcal{I}}
\newcommand{\dd} {\mathrm{d}}
\newcommand{\inv} {\mathrm{inv}}
\newcommand{\gr} {\Gamma_0}
\newcommand{\aA} {\mathfrak{A}}
\newcommand{\cC} {\mathfrak{C}}
\newcommand{\gG} {\mathfrak{G}}
\newcommand{\Ge}{G}
\newcommand{\Gee} {\mathbb{R}^3 \rtimes_\uberl \SU}
\newcommand{\AP}{\mathrm{AP}}
\newcommand{\Ts}{\TO_{\mathrm{s}}}
\newcommand{\OO}{\mathcal{O}}
\newcommand{\chih}{\widehat{\chi}}
\newcommand{\pih}{\widehat{\pi}}
\newcommand{\wq}{\ovl{\omega}}
\newcommand{\RRD} {\widehat{\RR}}
\newcommand{\LGA} {\Theta}
\newcommand{\INDA} {\vartheta}
\newcommand{\conac} {\phi}
\newcommand{\mmu} {\wt{\mu}}
\newcommand{\Asp} {\Phi}
\newcommand{\SU} {\mathrm{SU}(2)}
\newcommand{\su} {\mathrm{su}(2)}
\newcommand{\SO} {\mathrm{SO}(3)}
\newcommand{\wt}[1] {\widetilde{#1}} 
\newcommand{\ovl}[1]{\overline{#1}} 
\newcommand{\cp} {\circ}
\newcommand{\im}{\mathrm{im}} 
\newcommand{\dom} {\mathrm{dom}}
\newcommand{\Spec}{\mathrm{Spec}}
\newcommand{\pred}{\Paths_{\mathrm{red}}}
\newcommand{\gred}{\Gamma_0}
\newcommand{\Hil} {\mathcal{H}}
\newcommand{\Lzw}[2] {L^{2}\left(#1,#2\right)}
\newcommand{\Rq}{\raisebox{0.2ex}{$($}\raisebox{-0.1ex}{$\qR$}\raisebox{0.2ex}{$)$}}
\newcommand{\pillstr} {\big(\raisebox{-0.1ex}{$\pi^{L'}_{L}$}\big)}
\newcommand{\BTK} {\raisebox{0pt}{$\Borel\raisebox{1pt}{$\big($}\raisebox{-0ex}{$\TO^{|L|}$}\raisebox{1pt}{$\big)$}$}}
\newcommand{\B} {\mathrm{B}}
\newcommand{\Stab}[2] {\mathrm{Stab}_{#1}(#2)}
\newcommand{\hommm} {\epsilon}
\newcommand{\hx} {\hat{x}}
\newcommand{\Homm} {\mathrm{Hom}} 
\newcommand{\T} {\mathcal{T}} 
\newcommand{\Paths} {\mathcal{P}}
\newcommand{\mm} {\tau}
\newcommand{\parall}[2] {\mathcal{P}_{#1}^{#2}}
\newcommand{\uberl} {\varrho}
\newcommand{\uberll}[2] {#1(#2)}
\newcommand{\Co}[1] {\alpha_{#1}}
\newcommand{\gc}[4] {\gamma_{\vec{#1},\vec{#2}}^{#3,#4}}
\newcommand{\pc} {\beta_c}
\newcommand{\CAP} {C_{\mathrm{AP}}(\RR)} 
\newcommand{\x} {\ovl{x}}
\newcommand{\leqZ} {\leq_{\mathbb{Z}}}
\newcommand{\qR} {\ovl{\mathbb{R}}}
\newcommand{\lin} {\mathrm{l}}
\newcommand{\mc} {\mathrm{c}}
\newcommand{\Cstar} {C^*}
\newcommand{\prfl}[2] {\im[#1]\sqcup \TO^{|#2|}}
\begin{document} 

\title{Projective Structures in Loop Quantum Cosmology}
\author{Maximilian Hanusch\thanks{e-mail:
    {\tt hanuschm@fau.edu}}\\   
  \\
  {\normalsize\em Institut f\"ur Mathematik}\\[-0.15ex]
  {\normalsize\em Universit\"at Paderborn}\thanks{Now at Florida Atlantic University.}  
  \\[-0.15ex]
  {\normalsize\em Warburger Stra\ss e 100}\\[-0.15ex]
  {\normalsize\em 33098 Paderborn}\\[-0.15ex]
  {\normalsize\em Germany}}    
\date{May 5, 2015}
\maketitle

\frenchspacing
\begin{abstract}
  Projective structures have successfully been used for the construction of measures in the framework of loop quantum gravity. In the present paper, we establish such structures for the configuration space $\RR\sqcup \RB$, recently introduced in the context of homogeneous isotropic loop quantum cosmology. In contrast to the traditional space $\RB$, the first one is canonically embedded into the quantum configuration space of the full theory. In particular, for the embedding of states into a corresponding symmetric sector of loop quantum gravity, this is advantageous. However, in contrast to the traditional space, there is no Haar measure on $\RR\sqcup \RB$ defining a canonical kinematical $L^2$-Hilbert space on which operators can be represented. The introduced projective structures allow to construct a family of natural measures on $\RR\sqcup \RB$ whose corresponding $L^2$-Hilbert spaces we finally investigate.
\end{abstract}

\thispagestyle{empty}

\section{Introduction} 
In the framework of loop quantum gravity (LQG), measures are usually constructed by means of projective structures on the quantum configuration space of interest. For instance, the Ashtekar-Lewandowski measure arises in this way \cite{{ProjTechAL}}, and the same is true for the Haar measure on the Bohr compactification $\RB$ of $\RR$. \cite{Vel} This has been used as quantum configuration space for homogeneous isotropic loop quantum cosmology (LQC); a symmetry reduced version of LQG designed to describe the early universe near the Big Bang. \cite{MathStrucLQG} Unfortunately, there is no continuous embedding of $\RB$ into the quantum configuration space of LQG which additionally extends the embedding of the respective reduced classical configuration space. \cite{Brunnhack} This property, however, is crucial for the embedding approach for states formulated in \cite{BojoKa}.

Now, non-embeddability arises from the fact that, in contrast to the full theory, the cosmological quantum configuration space has been defined by means of linear curves instead of all the embedded analytic ones. \cite{Brunnhack} Thus, to overcome this problem, in \cite{ChrisSymmLQG} the embedded analytic curves were used to define the reduced quantum space as well; now being given by $\qR=\RR\sqcup \RB$. In particular, the embedding approach from \cite{BojoKa} here can be applied once a reasonable measure has been fixed. Now, since no Haar measure exists on $\qR$, \cite{InvConnLQG} such measures have to be constructed by hand.

In the present paper, we attack this issue by means of projective structures on $\qR$ which we then use to motivate the family of normalized Radon measures 
\begin{align}
\label{meashhh}
  \mu_{\rho,t}(A)=t\cdot \rho (\muL)(A\cap \RR)+ (1-t)\cdot \muB(A\cap \RB) \qquad \forall\: A\in \Borel\Rq
\end{align}
for $0\leq t\leq 1$ and $\rho(\muL)$ the push forward of the Lebesgue measure $\muL$ on $(0,1)$ by some homeomorphism $\rho\colon (0,1)\rightarrow \RR$. For this, we first reformulate the definition of a projective limit in a way more practicable for defining measures on compact Hausdorff spaces, such as, e.g., $\qR$. Then, we motivate a certain family of projective structures which will provide us with the measures \eqref{meashhh}. 
 As we will see, these measures give rise to only two different Hilbert space structures on $\qR$.
More precisely, up to \emph{canonical} isomorphisms, we will have the following three cases:
\begin{align*}
  \Lzw{\RR}{\muL},\qquad\quad \Lzw{\RR}{\muL}\oplus \Lzw{\RB}{\muB},\qquad\quad \Lzw{\RB}{\muB},
\end{align*}
whereby $\Lzw{\RR}{\muL}\oplus \Lzw{\RB}{\muB}$ and $\Lzw{\RB}{\muB}$ are isometrically isomorphic, just by dimensional arguments.
Anyhow, since $\Lzw{\RR}{\muL}$ is separable and $\Lzw{\RB}{\muB}$ is not so, there cannot exist any isometric isomorphism between these two spaces. 
\newline
\vspace{-2ex} 
\newline
This paper is organized as follows:
\begingroup
\setlength{\leftmargini}{14pt}
\begin{itemize}
\item
  \vspace{-3.8pt} 
  In Section 2, we fix the notations and provide a characterisation of projective limits convenient for defining measures. In Section 3, we briefly review some facts on invariant homomorphisms \cite{InvConnLQG} that we will need in the main part of this paper.
\item
  \vspace{-1ex}
  In Section 4, we first discuss some elementary properties of the space $\qR$. In particular, we prove a uniqueness result concerning the assumptions made in \cite{Jon} to the inner product on $C_0(\RR)\oplus \CAP$. 
  
  Then, we investigate how to write $\qR$ as projective limit, in order to construct reasonable Radon measures thereon. 
  Here, we discuss several possibilities, finally leading to the projective structures presented in the third part of Section 4. Basically, there we will use the fact that for each nowhere vanishing\footnote{Here $C_0(\RR)$ denotes the set of continuous functions on $\RR$ that vanish at infinity.} $f\in C_0(\RR)$ the functions 
  $\{f\}\sqcup \{\chi_l\}_{l\in \RR}$ with $\chi_l\colon x \mapsto \e^{\I l x}$ generate the $\Cstar$-algebra $C_0(\RR)\oplus \CAP$. Then, each such $f$ which is in addition injective will give rise to a projective structure similar to that one introduced in \cite{Vel} for the space $\RB$. In the last part, we finally use these structures to construct a family of normalized Radon measures on $\qR$ which we then show to define two different non-isomorphic $L^2$-Hilbert spaces on $\qR$. 
\end{itemize}
\endgroup
\section{Preliminaries}
\label{sec:prelimin}
We start this section by fixing the notations. Then, we give a short introduction into projective structures on compact Hausdorff spaces and consistent families of normalized Radon measures.
\subsection{Notations}
\label{subse:Notations}
A curve $\gamma$ in a manifold $M$ is a continuous map $\gamma\colon I\rightarrow M$ for $I\subseteq \RR$ an interval, i.e., of the form $[a,b],[a,b),(a,b]$ or $(a,b)$ for $a<b$. Then, the curve $\gamma$ is said to be of class $C^k$ iff $M$ is a $C^k$-manifold and iff there is a $C^k$-curve $\gamma'\colon (a',b')\rightarrow M$ with $I\subseteq (a',b')$ and $\gamma'|_{I}=\gamma$. By a path, we will understand a curve which is $C^\infty$ or analytic ($C^\w$) and defined on some closed interval. 

Let $(P,\pi,M,S)$ be a principal fibre bundle with total space $P$, base manifold $M$, projection map $\pi\colon P\rightarrow M$ and structure group $S$. 
If $\w$ is a smooth connection on $P$, $\gamma\colon [a,b]\rightarrow M$ a path and $p\in \pi^{-1}(\gamma(a))$, then $\gamma_p^\w\colon [a,b]\rightarrow P$ denotes the \emph{horizontal lift} 
of $\gamma$ w.r.t.\ $\omega$ in the point $p$. 
The isomorphism 
\begin{align*}
	\parall{\gamma}{\omega}\colon \pi^{-1}(\gamma(a))\rightarrow \pi^{-1}(\gamma(b)),\quad 
p\mapsto \gamma_p^\w(b)
\end{align*}
is called \emph{parallel transport} along $\gamma$ w.r.t.\ $\omega$. In particular, 
here isomorphism means that $\parall{\gamma}{\omega}(p\cdot s)=\parall{\gamma}{\omega}(p)\cdot s$ holds for all $p\in F_{\pi(p)}$ and all $s\in S$. Finally, $\Con$ will denote the set of smooth connections on $P$.

For an abelian unital $\Cstar$-algebra $\aA$, we denote by $\Spec(\aA)$ the set of non-zero multiplicative, $\mathbb{C}$-valued functionals on $\aA$, equipped with usual Gelfand-topology.  
The Gelfand transform $\widehat{a}\in C(\Spec(\aA))$ of $a\in \aA$ is given by $\widehat{a}\colon \Spec(\aA)\rightarrow \CCC$, $\psi\mapsto\psi(a)$.

Finally, we declare the following
\begin{convention}
  \label{conv:Boundedfunc}
  Let $X$ be some set. Then, 
  \begingroup
  \setlength{\leftmargini}{14pt}
  \begin{itemize}
  \item
    \vspace{-3pt}
    By $B(X)$ we will denote the set of bounded functions on $X$.
    \item
    \itspacecc
    For $\aA\subseteq B(X)$ a $\Cstar$-subalgebra and $\ishomc\colon Y\rightarrow X$ some map, we let $\ishomc^*\aA$ denote the $\Cstar$-subalgebra of $B(Y)$ generated by the functions $\ishomc^*f$ for $f\in \aA$. 
  \item  
    \itspacecc
    For $\aA\subseteq B(X)$ a $\Cstar$-algebra, we let $X_\aA$ denote the set of all $x\in X$ for which
    \begin{align*}  
      \iota_X\colon X &\rightarrow \mathrm{Hom}(\aA,\mathbb{C})\\ x&\mapsto [f\mapsto f(x)]
    \end{align*}
    is non-zero, i.e., $X_\aA=\{x\in X\: |\:\exists\: f\in \aA : f(x)\neq 0\}$. This means that $x\in X_\aA$ iff $\iota_X(x)\in \Spec(\aA)$, and it is not hard to see that $\iota_X(X_\aA)$ is dense in $\Spec(\aA)$. \cite{Rendall}, \cite{ChrisSymmLQG}, \cite{InvConnLQG}.
  \item
    \itspacecc
    Motivated by this denseness property, the spectrum of $\aA$ is denoted by $\ovl{X}$ in the following.
  \item	
    \itspacecc
    If $X$ is a locally compact Hausdorff space, then $C_0(X)$ denotes the set of continuous functions on $X$ that vanish at infinity.
  \item
    \itspacecc
    By $\CAP$, we will denote the almost periodic functions on $\RR$. This is the $\Cstar$-subalgebra of $B(\RR)$ generated by the set $\RRD$ of continuous characters $\chi_l\colon \RR \rightarrow \TO$, $x\mapsto \e^{\I l x}$. 
    Here, and in the following, $\TO$ denotes the unit circle ($1$-Torus) $\{z \in \mathbb{C}\:|\: |z|=1\}$.
  \item
    \itspacecc
    We define the Bohr compactification $\RB$ of $\RR$ to be given by $\Spec(\CAP)$. 
    
    Let $\DDD$ denotes the set of all $^*$-homomorphisms $\psi \colon \RRD\rightarrow \TO$. Then, it follows from Subsection 1.8 in \cite{RudinFourier} that the restriction map $\res\colon \RB\rightarrow \DDD$, $\psi\mapsto \psi|_{\RRD}$ is bijective.\footnote{See, e.g., Lemma 3.8 in \cite{Thesis}.} In particular, this means that, in order to define an element of $\RB$, it suffices to determine its values on $\RRD$. 
    
    Then, $\muB$ will denote the Haar measure on $\RB$ that corresponds to the continuous abelian group structure 
    \begin{align*}
      \psi + \psi' := \res^{-1}\left(\res(\psi)\cdot\res(\psi')\right)\qquad -\psi:=\res^{-1}\big(\hs\ovl{\res(\psi)}\hs\big)\qquad\:\: e:=\res^{-1}(1)
    \end{align*}
    for $\psi,\psi' \in \RB$. Here, $(\zeta\cdot\zeta')(x):=\zeta(x)\cdot\zeta'(x)$, $\ovl{\zeta}(x):=\ovl{\zeta(x)}$ and $1(x):=1$ 
     for all $x\in \RR$ and $\zeta,\zeta'\in \DDD$.  
    \hspace*{\fill}{\footnotesize$\lozenge$}
  \end{itemize}
  \endgroup
\end{convention}
\noindent
Finally, if $G$ is a group, $X$ a set and $\varphi \colon G\times X\rightarrow X$ a left action, we define 
\begin{align*}
	\Stab{\varphi}{x}:=\{g\in G\:|\:\varphi(g,x)=x\}
\end{align*}
 as well as $\varphi_g\colon X\rightarrow X$,\:\: $x\mapsto \varphi(g,x)$. If $G$ is a Lie group and $g\in G$, then 
\begin{align*}
	\alpha_g\colon G\rightarrow G,\quad h\mapsto g\cdot h\cdot g^{-1}
\end{align*} 
 denotes the conjugation by $g$ and $\Add{g}\colon \mathfrak{g}\rightarrow \mathfrak{g}$ its differential $\dd_e\alpha_g$ at $e\in G$.
\subsection{Projective Structures and Radon measures}
\label{subsec:ProjStruc} 
\begin{definition}
  \label{def:ProjLim}
  Let $\{X_\alpha\}_{\alpha\in I}$ be a family of compact Hausdorff spaces for $(I,\leq)$ a directed set.\footnote{This means that $\leq$ is a reflexive and transitive relation on $I$, and that for each two $\alpha,\alpha'\in I$ we find some $\alpha''\in I$ such that $\alpha,\alpha'\leq \alpha''$ holds.} A compact Hausdorff space $X$ is called projective limit of $\{X_\alpha\}_{\alpha\in I}$ iff 
  \begin{enumerate}
  \item
    For each $\alpha\in I$, there is a continuous surjective map $\pi_\alpha\colon X\rightarrow X_\alpha$.
  \item
    For $\alpha_1,\alpha_2 \in I$ with $\alpha_1\leq \alpha_2$, there is a continuous map $\pi^{\alpha_2}_{\alpha_1}\colon X_{\alpha_2}\rightarrow X_{\alpha_1}$ for which $\pi^{\alpha_2}_{\alpha_1}\cp \pi_{\alpha_2}=\pi_{\alpha_1}$ holds. 
    
    It follows that each of these maps is surjective and that $\pi^{\alpha_2}_{\alpha_1}\cp\pi^{\alpha_3}_{\alpha_2}=\pi^{\alpha_3}_{\alpha_1}$ holds if  $\alpha_1\leq \alpha_2\leq\alpha_3$ for $\alpha_1,\alpha_2,\alpha_3\in I$.
  \item
    If $x,y\in X$ with $x\neq y$, there is some $\alpha\in I$ with $\pi_\alpha(x)\neq \pi_\alpha(y)$.
  \end{enumerate}
\end{definition}
It is proven in Lemma \ref{lemma:equivalence} that the above definition of a projective limit is equivalent to the usual definition \cite{ProjTechAL} as a subset
 \begin{align*}  
    \widehat{X}=\left\{\hx \in \textstyle\prod_{\alpha\in I}X_\alpha\: \:\big|\:\: \pi_{\alpha_1}^{\alpha_2}(x_{\alpha_2})=x_{\alpha_1}\:\: \forall\:\alpha_1\leq \alpha_2\right\}\subseteq\textstyle\prod_{\alpha \in I}X_\alpha
  \end{align*} 
 of the Tychonoff product $\prod_{\alpha \in I}X_\alpha$. In particular, each two projective limits of a fixed family of compact Hausdorff spaces are homeomorphic if the same transition maps are used. 
 
Anyhow, as it provides us with more flexibility, in the following we will use Definition \ref{def:ProjLim} instead of the Cartesian product version. 
\begin{definition}[Borel, Radon Measures]
  \label{def:Mass}
  Let $Y$ be a Hausdorff space and $\mathfrak{B}(Y)$ the Borel $\sigma$-algebra of $Y$.
  \begingroup
  \setlength{\leftmargini}{14pt}
   \begin{itemize}
  \item
    \label{def:Mass1}
    A Borel measure $\mu$ on $Y$ is a locally finite\footnote{This means that for each $y\in Y$ we find $U\subseteq Y$ open with $y\in U$ and $\mu(U)<\infty$.} measure $\mu\colon \mathfrak{B}(Y)\rightarrow [0,\infty]$. It is said to be normalized if $\|\mu\|:=\mu(Y)=1$ holds.  
  \item
    \label{def:Mass2}
    A Borel measure $\mu$ on $Y$ is called inner regular iff for each $A\in \mathfrak{B}(Y)$ we have 
	\begin{align*}    
    	\mu(A)=\sup\{\mu(K):K \text{ is  compact and }K\subseteq A\}.
    \end{align*} 
  \item
    \label{def:Mass3}
    A Radon measure $\mu$ is an inner regular Borel measure. It is called finite iff     
    $\mu(Y)<\infty$ holds. Recall that each finite Radon measure is outer regular (hence regular), i.e., for each $A\in \Borel(Y)$ we have
    \begin{align*} 
    	\mu(A)=\inf\{\mu(U):U \text{ is  open and }A\subseteq U\}.
    \end{align*}
  \item
    Assume that we are in the situation of Definition \ref{def:ProjLim}, and that $\{\mu_\alpha\}_{\alpha\in I}$ is a family of Radon measures $\mu_\alpha\colon \mathfrak{B}(X_\alpha)\rightarrow [0,\infty]$. Then, $\{\mu_\alpha\}_{\alpha\in I}$ is called consistent iff $\mu_{\alpha_1}$ equals the push forward measure $\pi^{\alpha_2}_{\alpha_1}(\mu_2)$ whenever $\alpha_1\leq \alpha_2$ holds for $\alpha_1,\alpha_2\in I$. 
  \end{itemize}
  \endgroup
\end{definition}
\begin{lemma}
  \label{lemma:normRM}
  Let $X$ and $\{X_\alpha\}_{\alpha\in I}$ be as in Definition \ref{def:ProjLim}. Then, the normalized Radon measures on $X$ are in bijection with the consistent families of normalized Radon measures on $\{X_\alpha\}_{\alpha\in I}$.
\end{lemma}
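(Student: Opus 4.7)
The plan is to establish the bijection via the Riesz–Markov representation theorem, using that $X$ is compact Hausdorff. The forward direction is essentially formal. Given a normalized Radon measure $\mu$ on $X$, set $\mu_\alpha := (\pi_\alpha)_\ast \mu$. Since $\pi_\alpha$ is continuous and $X$ is compact Hausdorff, $\mu_\alpha$ is a normalized Radon measure on $X_\alpha$, and consistency follows directly from $\pi^{\alpha_2}_{\alpha_1}\cp\pi_{\alpha_2}=\pi_{\alpha_1}$: for $A\in\Borel(X_{\alpha_1})$,
\begin{align*}
\pi^{\alpha_2}_{\alpha_1}(\mu_{\alpha_2})(A)=\mu\big(\pi_{\alpha_2}^{-1}((\pi^{\alpha_2}_{\alpha_1})^{-1}(A))\big)=\mu\big(\pi_{\alpha_1}^{-1}(A)\big)=\mu_{\alpha_1}(A).
\end{align*}

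For the reverse direction, given a consistent family $\{\mu_\alpha\}_{\alpha\in I}$, I would construct a positive linear functional $\Lambda$ on $C(X)$ on the subspace of \emph{cylindrical functions}
\begin{align*}
\mathcal{C}:=\{\pi_\alpha^\ast f_\alpha \mid \alpha\in I,\; f_\alpha\in C(X_\alpha)\}
\end{align*}
by setting $\Lambda(\pi_\alpha^\ast f_\alpha):=\int_{X_\alpha} f_\alpha\,\dd\mu_\alpha$. The key well-definedness check uses consistency: if $\pi_{\alpha_1}^\ast f_{\alpha_1}=\pi_{\alpha_2}^\ast f_{\alpha_2}$, pick $\alpha\geq \alpha_1,\alpha_2$ using directedness; surjectivity of $\pi_\alpha$ forces $(\pi^\alpha_{\alpha_i})^\ast f_{\alpha_i}$ to coincide on $X_\alpha$, and then the pushforward identity $\mu_{\alpha_i}=\pi^\alpha_{\alpha_i}(\mu_\alpha)$ gives equality of the integrals. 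Directedness also makes $\mathcal{C}$ a conjugation-closed subalgebra of $C(X)$ containing the constants, and property (3) of Definition \ref{def:ProjLim} ensures that $\mathcal{C}$ separates points (use Urysohn on $X_\alpha$ to separate $\pi_\alpha(x)$ from $\pi_\alpha(y)$). By Stone–Weierstrass, $\mathcal{C}$ is dense in $C(X)$.

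Next I would verify the bound $|\Lambda(\pi_\alpha^\ast f_\alpha)|\leq \|f_\alpha\|_\infty=\|\pi_\alpha^\ast f_\alpha\|_\infty$ (the last equality uses surjectivity of $\pi_\alpha$), so $\Lambda$ extends uniquely to a bounded positive linear functional on $C(X)$ with $\Lambda(1)=1$. Riesz–Markov then yields a unique normalized Radon measure $\mu$ on $X$ with $\Lambda(f)=\int_X f\,\dd\mu$. To confirm that $(\pi_\alpha)_\ast\mu=\mu_\alpha$, observe that for every $f_\alpha\in C(X_\alpha)$,
\begin{align*}
\int_{X_\alpha} f_\alpha \,\dd\big[(\pi_\alpha)_\ast\mu\big]=\int_X \pi_\alpha^\ast f_\alpha\,\dd\mu = \Lambda(\pi_\alpha^\ast f_\alpha)=\int_{X_\alpha} f_\alpha\,\dd\mu_\alpha,
\end{align*}
so uniqueness in Riesz–Markov applied to $X_\alpha$ gives the claim.

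Finally, the two constructions are mutually inverse: starting from $\mu$ and pushing forward, then reassembling via $\Lambda$, returns the same $\mu$ by uniqueness of the Riesz representation (the resulting functional agrees with $f\mapsto\int f\,\dd\mu$ on $\mathcal{C}$, hence on all of $C(X)$); conversely, starting from $\{\mu_\alpha\}$ and pushing forward the reconstructed $\mu$ returns $\{\mu_\alpha\}$ by the identity just displayed. The main obstacle is the well-definedness of $\Lambda$ on $\mathcal{C}$ and, tied to it, verifying that $\mathcal{C}$ is actually a subalgebra; both hinge on directedness of $I$ together with the compatibility $\pi^{\alpha_2}_{\alpha_1}\cp\pi_{\alpha_2}=\pi_{\alpha_1}$ and surjectivity of the projections, which together let one reduce any two cylindrical representatives to a common index $\alpha$ before applying consistency.
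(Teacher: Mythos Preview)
Your proposal is correct and follows essentially the same approach as the paper's proof (given in Lemma~\ref{lemma:ConstMeas}): define the cylindrical functions $\mathrm{Cyl}(X)=\bigcup_{\alpha}\pi_\alpha^\ast C(X_\alpha)$, use directedness and surjectivity to show this is a $^\ast$-subalgebra on which the functional $\Lambda$ is well defined and bounded, apply Stone--Weierstra\ss\ and Riesz--Markov, then verify $(\pi_\alpha)_\ast\mu=\mu_\alpha$ and uniqueness. The only cosmetic difference is that you phrase the bijection explicitly in terms of mutually inverse constructions, whereas the paper states injectivity separately via $\III|_{\mathrm{Cyl}(X)}=\III'|_{\mathrm{Cyl}(X)}$.
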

\begin{proof}
  See Lemma \ref{lemma:ConstMeas}.
\end{proof} 

\section{Quantum Configuration Spaces in LQG}
\label{sec:GenConInv}
In this section, we will give a short introduction into invariant generalized connections and homomorphisms of paths. 
For simplicity, here we restrict to the case of trivial principal fibre bundles; the general results can be found in \cite{InvConnLQG}. In the last part of this section, we will discuss the 
case of homogeneous isotropic loop quantum gravity. 
\subsection{Generalized Connections and Invariance}
\label{subsec:GenConInvvvv}
Let $P=M\times S$ be a trivial principal fibre bundle with base manifold $M$ and compact structure group $S$. Moreover, let $\Paths$ be a fixed 
set of paths in $M$.

 For $\gamma\in \Paths$ with $\dom[\gamma]=[a,b]$, we define 
\begin{align}
\label{eq:patralla} 
 h_\gamma\colon \Con \rightarrow S,\quad \w \mapsto \left(\pr_2\cp \parall{\gamma}{\w}\right)(\gamma(a),e)
\end{align} 
 and denote by $\gG$ the $^*$-algebra generated by all functions of the form $f\cp h_\gamma$ with $f\in C(S)$ and $\gamma\in \Paths$. Due to compactness of $S$, we have $\gG\subseteq B(\Con)$, so that we can define the $\Cstar$-algebra of cylindrical functions $\cC$ to be closure of $\gG$ in $B(\Con)$. 
The spectrum of $\cC$ is denoted by $\ovl{\Con}$, and its elements are called generalized connections in the following. 

Let $(G,\LGA)$ be a Lie group of automorphisms of $P$, i.e., a Lie group $G$ together with a smooth left action $\LGA\colon G\times P\rightarrow P$ with
$\LGA(g,p\cdot s)=\LGA(g,p)\cdot s$ for all $p\in P$, $g\in G$ and $s\in S$.
Then, $\LGA$ gives rise to the following two further left actions:
\begingroup
\setlength{\leftmargini}{16pt}
\begin{itemize}
\item
  \vspace{-0.3pt}
  $\INDA\colon G\times M\rightarrow M$, $(g,m)\mapsto \pi(\LGA(g,p_m))$ for $p_m\in F_m$ arbitrarily chosen,
\item
  \vspace{-1pt}
  $\conac\colon G\times \Con \rightarrow \Con$, $(g,\w)\mapsto \LGA_{g^{-1}}^*\w$.
\end{itemize}
\vspace{-1pt}
\endgroup
\noindent
The set of invariant connections is defined by 
\begin{align*}
\ConG:=\{\w \in \Con\:|\: \Stab{\conac}{\w}=G\}
\end{align*}
whereby, for later convenience, we will assume $\ConG$ to be given by the image of some injective map $\ishomc\colon \IR\hookrightarrow \Con$.

Now, if $\Paths$ is invariant in the sense that $\gamma'(t):=\INDA(g,\gamma(t))$ is in $\Paths$ for all $\gamma\in \Paths$ and all $g\in G$, by Corollary 3.8 in \cite{InvConnLQG} $\conac$ can be uniquely extended to an action on $\A$. More precisely, 
there exists a unique left action $\Asp\colon G \times \A\rightarrow \A$ (namely $\Asp\colon (g,\wq)\mapsto \wq\cp \phi_g^*$), such that
\begingroup
\setlength{\leftmargini}{16pt}
\begin{itemize}
\item
  \vspace{-0.3pt}
	$\Asp_g$ is continuous for all $g\in G$,
\item
  \vspace{-1pt}
	$\iota_\Con \cp \conac_g = \Asp_g \cp \iota_\Con$, i.e., the following diagram commutes for each $g\in G$:
	\vspace{-8pt}
	
\begin{center}
  \makebox[0pt]{
    \begin{xy}
      \xymatrix{
        \ovl{\Con} \ar@{->}[r]^-{\Phi_g}   &  \ovl{\Con}   \\
        \Con\ar@{.>}[u]^{\iota_\Con}\ar@{->}[r]^-{\phi_g} & \Con  \ar@{.>}[u]^{\iota_\Con}
      }
    \end{xy}
  }
\end{center} 
 Here, $\iota_\Con$ denotes the map from Convention \ref{conv:Boundedfunc}.   
\end{itemize}
  \vspace{-4pt}
\endgroup
\noindent
In analogy to $\ConG$, one can now define the set of invariant generalized connections by
\begin{align*}
	\AQR:=\{\wq \in \A\:|\:\Stab{\Phi}{\wq}=G\}.
\end{align*}
The space $\AQR$ is compact and \cite{InvConnLQG}
\begin{align*}
  \Spec(\ishomc^*\cC)\cong \ARQ:=\ovl{\iota_\Con(\ConG)}\subseteq \AQR
\end{align*}
holds. Here, the first homeomorphism is just given by
\begin{align}
\label{embext}
	\oishomc\colon  \Spec(\ishomc^*\cC)\rightarrow  \ARQ\subseteq \A,\quad\psi \mapsto [f\mapsto \psi(f\cp \ishomc)].
\end{align}

\begin{center}
  \makebox[0pt]{
    \begin{xy}
      \xymatrix{
        & \Spec(\ishomc^*\cC)\ar@{->}[r]^-{\oishomc}_-{\cong}& \ovl{\ConG} & \hspace{-26pt}\subseteq\AQR \subseteq \A  
        \\
        &  \IR \ar@{->}[r]^-{\ishomc}  \ar@{->}[u]^-{\iota_{\IR}}  & \ConG \ar@{->}[u]^-{\iota_{\Con}}& \hspace{-60pt}\subseteq\Con.  
      }
    \end{xy}
  }
\end{center}   
Then,\footnote{If a normalized Radon measure on $\Spec(\ishomc^*\cC)$ has been fixed, the corresponding $L^2$-Hilbert space $\Hil$ is canonically embedded via $\mj\colon \psi\mapsto \langle\psi,\cdot\rangle$ into the algebraic (topological) dual of $C(\Spec(\ishomc^*\cC))$. In fact, by regularity of finite Radon measures, $C(\Spec(\ishomc^*\cC))$ is dense in $\Hil$ so that injectivity of $\mj$ is clear because then $\mj(\psi)=0$ already implies that $\langle\psi, \psi\rangle=0$. Moreover, finiteness of $\|\mj(\psi)\|_{\mathrm{op}}$, i.e., continuity of $j(\psi)$ is immediate from the Cauchy-Schwarz inequality.} distributional states, 
i.e., elements $L$ of the algebraic (topological) dual of $C(\Spec(\ishomc^*\cC))$ can be embedded into the algebraic (topological) dual of $C(\A)$ by \cite{BojoKa}
\begin{align}
\label{eq:statesemb}
	L\mapsto \big[ \widehat{f} \mapsto L\big(\hs\widehat{f}\cp \oishomc\:\big)\big]\qquad \forall\: \widehat{f}\in C(\A).
\end{align}
In fact, $\widehat{f}$ is the Gelfand transform of some $f\in \cC$, hence $\widehat{f}\cp \oishomc=\widehat{f\cp \ishomc}\in C(\Spec(\ishomc^*\cC))$. Thus, \eqref{eq:statesemb} is well defined and injective by definition of $\ishomc^*\cC$. 
\subsection{Homomorphisms of Paths}
Let $P=\RR^3\times \SU$ and $\Paths$ denote the set of the linear or embedded analytic\footnote{A curve $\gamma\colon [a,b]\rightarrow M$ is said to be embedded analytic iff there is an analytic embedding $\gamma'\colon  (a',b')\rightarrow \mathbb{R}^3$ such that $[a,b]\subseteq (a',b')$ and $\gamma=\gamma'|_{[a,b]}$. Here, an embedding means an immersion which is a homeomorphism onto its image (equipped with the relative topology).} curves in $\RR^3$. Recall that two paths $\gamma_1,\gamma_2\in \Paths$ are said to be (holonomically) equivalent (write $\gamma_1\sim_\Con \gamma_2$) iff $\Paths_{\gamma_1}^\w=\Paths_{\gamma_2}^\w$ holds for all $\w\in \Con$.

Now, let $\Paths\ni \gamma\colon [a,b]\rightarrow M$. Then,
\begingroup
\setlength{\leftmargini}{14pt}
\begin{itemize}
\item
  \vspace{-0.8ex}
  The inverse curve of $\gamma$ is defined by $\gamma^{-1}\colon [a,b]\ni t\mapsto \gamma(b+a-t)$.  
\item
  \vspace{-0.8ex}
  A decomposition of $\gamma$ is a family of curves $\{\gamma_i\}_{0\leq i\leq k-1}$ 
  such that $\gamma|_{[\tau_i,\tau_{i+1}]}=\gamma_i$ for $0\leq i\leq k-1$ and real numbers $a=\tau_0<{\dots}<\tau_k=b$. 
\end{itemize}
\endgroup
\noindent
Obviously, $\Paths$ (chosen as above) is stable under inversion and decomposition of its elements; and we define the set $\Homm(\Paths, \SU)$ of Homomorphisms of paths as follows:\footnote{This definition differs from the usual one \cite{Ashtekar2008} in the point that we require $\hommm$ to be compatible w.r.t.\ decompositions of paths and not w.r.t.\ their concatenations. This helps to avoid unnecessary technicalities as it allows to restrict to embedded analytic curves instead of considering all the piecewise ones.}

An element $\hommm\in \Homm(\Paths, \SU)$ 
is a map $\hommm \colon \Paths\rightarrow \SU$ with
\begingroup
\setlength{\leftmargini}{14pt}
\begin{itemize}
\item
  \vspace{-0.8ex}
  $\hommm(\gamma^{-1})=\hommm(\gamma)^{-1}$\: and\: $\hommm(\gamma)= \hommm(\gamma_{k-1}) \cdot {\dots} \cdot\hommm(\gamma_{0})$\: for each decomposition\: $\{\gamma_i\}_{0\leq i\leq k-1}$ of $\gamma\in \Paths$,
\item
  \vspace{-0.8ex}
  $\hommm(\gamma)=\hommm(\gamma')$ if $\gamma\sim_\Con \gamma'$ for  $\gamma,\gamma'\in \Paths$.
\end{itemize}
\endgroup
\vspace{-0.3ex}
\noindent
In particular, for each $\w\in\Con$ the map $\gamma \mapsto h_\gamma(w)$, cf.\ \eqref{eq:patralla}, is such a homomorphism. 

Now, due to denseness of 
$\iota_\Con(\Con)$ in $\ovl{\Con}$, for each $\ovl{\w}\in \ovl{\Con}$ we find some net $\{\w_\alpha\}_{\alpha\in I}\subseteq \Con$ with $\{\iota_\Con(\w_\alpha)\}_{\alpha\in I}\rightarrow \w$, and it follows (cf.\ Appendix B in \cite{InvConnLQG}) that the map 
\begin{align}
  \label{eq:SpeczuHomm}
  \begin{split}
    \homiso\colon \ovl{\Con} &\rightarrow \Homm(\Paths,\SU)\\
    \ovl{\w}&\mapsto \left[\gamma \mapsto \lim_\alpha h_\gamma(\w_\alpha)\right] 
  \end{split}
\end{align}
is a well-defined bijection. Moreover, we have   
\begin{align}
  \label{eq:pigammawquer}
  \homiso(\ovl{\w})(\gamma)=\left(\hs\ovl{\w}\big([h_\gamma]_{ij}\big)\right)_{ij}\in \SU\qquad \forall\:\ovl{\w}\in \ovl{\Con}, 
\end{align}
where for $1\leq i,j\leq 2$ and $s\in \SU$ by $[s]_{ij}$ we mean the respective matrix entry. 
Finally, if $(G,\Theta)$ is a Lie group of automorphisms of $P$, we define the corresponding set of invariant homomorphisms by $\Homm_\inv(\Paths,\SU):=\homiso(\AQR)$.

\subsection{Homogeneous Isotropic Loop Quantum Cosmology}
\label{subsec:HogELQC}
Let $P=\RR^3\times \SU$ and $\Ge:=\Gee$ with $\varrho\colon \SU \rightarrow \SO$ the universal covering map given by
\begin{align*}
\varrho(\sigma)= \murs^{-1} \cp \Add{\sigma}\cp\: \murs.
\end{align*}
 Here, $\murs\colon \RR^3\rightarrow\su$ denotes the linear map with $\murs(\vec{e}_i)=\tau_i$ for $i=1,2,3$  
whereby $\{\vec{e}_1,\vec{e}_2,\vec{e}_3\}$ means 
the standard basis in $\mathbb{R}^3$ 
and
\begin{align*}
  \tau_1:=\begin{pmatrix} 0 & -\I  \\ -\I & 0  \end{pmatrix}\qquad\qquad \tau_2:=\begin{pmatrix} 0 & -1  \\ 1 & 0  \end{pmatrix}\qquad\qquad \tau_3:=\begin{pmatrix} -\I & 0  \\ 0 & \I  \end{pmatrix}.
\end{align*} 
Now, let $\Theta\colon G\times P\rightarrow P$ be given by $\Theta(g,p):=g \cdot_{\varrho} p$.  
Then, the corresponding set of invariant connections is parametrized by $\RR$, 
via $\ishomc\colon \RR\rightarrow \ConG\subseteq \Con$
\begin{align*}
  \ishomc(c)_{(x,s)}(\vec{v}_x,\vec{s}_s):= c \Add{s^{-1}}[\hs\murs(\vec{v}_x)\hs]+s^{-1}\vec{s}_s\qquad \text{for}\qquad (\vec{v}_x,\vec{s}_s)\in T_{(x,s)}P. 
\end{align*}
Let $\cC_{\lin}$ and $\cC_\w$ denote the $\Cstar$-algebras of cylindrical functions that correspond to the sets of linear and embedded analytic curves, respectively. Then, 
\begin{align*}
  \ishomc^*\cC_\lin\stackrel{\text{\hspace{-1pt}\cite{MathStrucLQG}}}{=} \CAP  \qquad\quad\text{and}\qquad\quad \ishomc^*\cC_\w \stackrel{\text{\hspace{-1pt}\cite{ChrisSymmLQG}}}{=} C_0(\RR)\oplus \CAP,\footnotemark
\end{align*}
\footnotetext{Let $\aA$ denote the set of all continuous, bounded function on $\RR$ which can be written as a sum $f_0 + f_{\mathrm{AP}}$ for $f_0\in C_0(\RR)$ and $f_{\mathrm{AP}}\in \CAP$. Then, Corollary B.2 in Version 2 of \cite{ChrisSymmLQG} shows that $\aA$ is a $\Cstar$-algebra and that $\aA=C_0(\RR)\oplus \CAP$ holds. Here, $\oplus$ means the direct sum of vector spaces not of $\Cstar$-algebras.}
giving rise to the spectra\footnote{The topology on $\qR$ will be specified in Subsection \ref{subsec:Rquer}.}
\vspace{-4pt}
\begin{align*}
	\hspace{-2pt}\Spec(\ishomc^*\cC_\lin)=\RB\qquad\quad\text{and}\qquad\quad\Spec(\ishomc^*\cC_\w)\stackrel{\text{\hspace{-1pt}\cite{ChrisSymmLQG}}}{\cong} \RR\sqcup \RB=:\qR,
\end{align*}
respectively. Originally, $\Spec(\ishomc^*\cC_\lin)=\RB$ has been used as cosmological quantum configuration space. This space, however, here cannot be compatibly embedded into the quantum configuration space $\ovl{\Con}:=\Spec(\cC_\w)$ of the full theory \cite{Brunnhack}, i.e., there is no embedding $\ovl{\ishomc}\colon \RB\rightarrow \A$  
which makes the following diagram commutative:
\begin{center}
  \makebox[0pt]{
    \begin{xy}
      \xymatrix{
        & \RB\ar@{->}[r]^{\ovl{\ishomc}}& \A
        \\
        &  \RR \ar@{->}[r]^-{\ishomc}  \ar@{->}[u]^-{\iota_R}  & \Con \ar@{->}[u]^-{\iota_{\Con}}.  
      }
    \end{xy}
  }
\end{center}   
This is exactly because $\ishomc^*\cC_\lin \neq \ishomc^*\cC_\w$ holds, cf.\ Theorem 2.20 in \cite{ChrisSymmLQG}, so that \eqref{embext} is not well defined in this case. In particular, the embedding approach \cite{BojoKa} for states, we have discussed in the end of Subsection \ref{subsec:GenConInvvvv}, cannot be applied. 
For this reason, in \cite{ChrisSymmLQG} 
 the space $\Spec(\ishomc^*\cC_\w)$ was introduced. Indeed, here \eqref{embext} exists and is even uniquely determined by its extension property.

Now, it follows \cite{InvConnLQG} that for $\epsilon \in \Homm(\Paths,\SU)$ we have 
$\hommm\in \Homm_\inv(\Paths,\SU)$ iff
\begin{align} 	
  \label{eq:InvGenConnRel}
  \hommm(v+\uberll{\sigma}{\gamma)}=(\Co{\sigma}\cp \hommm)(\gamma)\quad\qquad \forall\:(v,\sigma)\in \Gee,\quad \forall\:\gamma \in \Paths.
\end{align}
This has the following consequences for the quantum-reduced spaces $\AQR$ that correspond to the two choices of sets of curves $\Paths$ we have discussed so far: \cite{InvConnLQG}

\begin{center}
\begin{tabular}{c|cl}
	curves  &  $\ARQ$ & $\AQR$ \\[0.5pt]\hline\rule{0pt}{12pt}
	\text{linear}          &  $\phantom{\RR\sqcup}\RB$ & homeomorphic to $\RB$ \\
	\text{embedded analytic}  &  $\RR\sqcup\RB$ &  larger than $\RR\sqcup\RB$
\end{tabular}
\end{center}
Moreover, if
\begingroup
\setlength{\leftmargini}{14pt}
\begin{itemize}
\item
$\TO_{\vec{n}}:=\{\exp(t \cdot \murs(\vec{n}))\:|\:t\in \RR\}$ 
denote the maximal torus in $\SU$ which corresponds to $\vec{n}\in \RR^3$,
\item 
$\gamma\in \Paths$ is equivalent to a linear curve of the form $[0,l]\ni t\mapsto x + t\cdot \vec{n}$,
\end{itemize}
\endgroup
\noindent
then for each $\hommm\in \Homm_\inv(\Paths,\SU)$ we have 
  $\hommm(\gamma)\in \TO_{\vec{n}}$. In particular, $\ARQ$ and $\AQR$ are both of measure zero w.r.t.\ the Ashtekar-Lewandowski measure $\mAL$ (see, e.g., Appendix \ref{sec:AshLewMeasure}) on $\A$ because for 	  
  $\pi_\gamma\colon \AQR\rightarrow \SU$,\:\: $\w \mapsto \homiso(\w)(\gamma)$, we have 
  \begin{align}
    \label{eq:meas0}
    \mAL(\AQR)\leq \mAL\left(\pi_\gamma^{-1}\left(\TO_{\vec{n}}\right)\right)=\mu_1\left(\TO_{\vec{n}}\right)=0
  \end{align}
  for $\mu_1$ the Haar measure on $\SU$.

\section{$\boldsymbol{\qR}$ as a Projective Limit}   
\label{sec:projlim}
In the first part of this section, we will discuss the elementary properties of the cosmological quantum configuration space $\qR$. \cite{ChrisSymmLQG} Then, we will motivate certain projective structures on $\qR$ which we use to fix a family of natural Radon measures on this space. Finally, we will investigate the $L^2$-Hilbert spaces these measures define. 

So, in what follows, let $\Paths$ denote the set of embedded analytic curves in $\RR^3$. Moreover, let $P=\RR^3\times \SU$ and $\Ge=\Gee$ the Lie group with  action $\Theta \colon \Ge \times P\rightarrow P$,\:\:$(g,p)\mapsto g\cdot_\varrho p$ defined in Subsection \ref{subsec:HogELQC}. 

\subsection{Topological Aspects}
\label{subsec:Rquer} 
As already mentioned in Subsection \ref{subsec:HogELQC}, using the set of embedded analytic curves to define the configuration space of homogeneous isotropic LQC leads to the spectrum of the $\Cstar$-algebra $\ishomc^*\cC_\w=C_0(\RR)\oplus \CAP$. \cite{ChrisSymmLQG} 
Now, if we equip $\qR:= \RR \sqcup \RB$ with the topology generated by the sets of the following types \cite{ChrisSymmLQG} 
\begin{align*}
  \begin{array}{lcrclcl}
    \textbf{Type 1:} && V & \!\!\!\sqcup\!\!\! & \emptyset 
    && \text{with open $V \subseteq \RR$} \\
    \textbf{Type 2:} && K^c & \!\!\!\sqcup\!\!\! & \RB
    && \text{with compact $K \subseteq \RR$} \\[-1.5pt]
    \textbf{Type 3:} && f^{-1}(U) & \!\!\!\sqcup\!\!\! & \widehat{f}^{-1}(U) 
    && \text{with open $U \subseteq \mathbb{C}$ and $f \in \CAP$},
  \end{array}
\end{align*}
Proposition 3.4 in \cite{ChrisSymmLQG} states that $\Spec(\ishomc^*\cC_\w)\cong \RR\sqcup \RB$ via the homeomorphism 
\begin{align*}
\xi\colon \RR\sqcup \RB\rightarrow \Spec(\ishomc^*\cC_\w)
\end{align*}
 defined by
\begin{equation}
  \label{eq:Ksiii}
  \xi(\x) := 
  \begin{cases} 
    \hspace{27pt}f\mapsto f(\x) &\mbox{if } \x\in \RR\\ 
    f_0\oplus f_{\mathrm{AP}}\mapsto \x(f_{\mathrm{AP}})  & \mbox{if } \x\in \RB
  \end{cases}
\end{equation}
for $f_0\in C_0(\RR)$ and $f_{\mathrm{AP}}\in \CAP$. 
It is straightforward to see that the subspace topologies of $\RR$ and $\RB$ w.r.t.\ the above topology coincide with their usual ones, and obviously we have 
\begin{align}
\label{eq:iotaxiR}
	\xi(x) = \iota_\RR(x)\qquad \forall x\in \RR.
\end{align} 
Now, to fix a reasonable measure on $\qR$ which provides a kinematical Hilbert space on which the dynamics of the reduced theory can be defined, we have the following possibilities:
\begingroup
\setlength{\leftmargini}{18pt}
\begin{enumerate}
\item[{\bf 1)}]
  Mimicking the space $\RB$, we can try to define a group structure on $\qR$ continuous w.r.t.\ the above topology. This would provide us with a natural Haar measure on $\qR$, but is impossible as no such group structure exists. \cite{InvConnLQG}
\item[{\bf 2)}]
  Since $\qR$ is canonically embedded into the quantum configuration space $\ovl{\Con}$, it is measurable w.r.t.\ the Ashtekar-Lewandowski measure $\mAL$ on $\ovl{\Con}$. However, due to \eqref{eq:meas0}, restricting $\mAL$ to $\qR$ only gives the zero-measure.
\item[{\bf 3)}]
  We can construct a projective structure on $\qR$ in order to fix reasonable finite Radon measure thereon.
  This seems to be the most canonical approach as both the Ashtekar-Lewandowski measure on $\ovl{\Con}$ and the Haar measure on $\RB$ \cite{Vel} arise in this way.
\end{enumerate}
\endgroup
\noindent
So, we will follow up the third approach by making use of the following straightforward result from \cite{InvConnLQG} 
(see also Conclusions in \cite{ChrisSymmLQG}) characterizing the finite Radon measures on $\qR$.
\begin{lemma}
  \label{lemma:Radon}
  \begin{enumerate}
  \item	
  \label{lemma:Radon1}
    We have $\BRq=\Borel(\hs\RR)\sqcup\Borel(\hs\RB)$.
  \item
  \label{lemma:Radon2}
    If $\mu$ is a finite Radon measure on $\BRq$, then $\mu|_{\Borel(\mathbb{R})}$ and $\mu|_{\Borel(\RB)}$ are finite Radon measures as well. Conversely, if $\mu_\RR$ and $\mu_{\B}$ are finite Radon measures on $\Borel(\mathbb{R})$ and $\Borel(\RB)$, respectively, then  $\mu_\RR\oplus \mu_{\B}$ defined by  
    \begin{align}
      \label{eq:RadonMeasures}	
      (\mu_\RR\oplus \mu_{\B})(A):=\mu_\RR(A\cap \Borel(\mathbb{R}))+ \mu_{\B}(A\cap \Borel(\RB))\qquad \text{for}\qquad A\in \BRq	
    \end{align}
    is a finite Radon measure on $\BRq$.
  \end{enumerate}
  \begin{proof}
    \begin{enumerate}
  \item	
     The right hand side is a $\sigma$-algebra and contains the $\sigma$-algebra on the left hand side because $U\cap \RR\in \Borel(\hs\RR)$ as well as $U\cap \RB \in \Borel(\hs\RB)$ for $U\subseteq \qR$ open. Now, $\Borel(\hs\RR)\subseteq \BRq$ because if $U\subseteq \RR$ is open, it is open in $\qR$ as well. Finally, $\Borel(\hs\RB)\subseteq \BRq$ because if $A\subseteq \RB$ is closed, it is compact, hence compact (closed) in $\qR$.
  \item
  	It is clear that, under the given assumptions, the restrictions and the sum are finite Borel measures. Since compactness in some subspace topology is the same as compactness in the inducing one, 
  	inner regularity of the restriction measures is clear; that of the sum follows by a simple $\epsilon/2$-argument.
  \end{enumerate}
  \end{proof}
\end{lemma}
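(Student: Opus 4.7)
The plan is to handle the two assertions separately, exploiting that the topology on $\qR$ is generated by sets that decompose cleanly along $\RR \sqcup \RB$ and that, by the remark just before the lemma, the subspace topologies of $\RR$ and $\RB$ coincide with their intrinsic ones.

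For part 1, I would first verify that the right-hand side $\Borel(\RR)\sqcup\Borel(\RB)$ is genuinely a $\sigma$-algebra on $\qR$ (routine: complementation and countable unions respect the $\sqcup$-decomposition). For the inclusion $\BRq\subseteq\Borel(\RR)\sqcup\Borel(\RB)$, it suffices to check it on the generating open sets of Types 1--3; each such $U\subseteq\qR$ satisfies $U\cap\RR\in\Borel(\RR)$ and $U\cap\RB\in\Borel(\RB)$ by continuity of the coordinate restrictions, so $U=(U\cap\RR)\sqcup(U\cap\RB)$ lies in the RHS. For the reverse inclusion, any open $V\subseteq\RR$ is exactly a Type~1 open of $\qR$, and any closed $A\subseteq\RB$ is compact in $\RB$, hence compact in $\qR$ via the continuous inclusion, hence closed in $\qR$ (which is Hausdorff as the spectrum of a unital $\Cstar$-algebra). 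Closed and open sets generate Borel $\sigma$-algebras, giving $\Borel(\RR),\Borel(\RB)\subseteq\BRq$.

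For part 2, the forward direction is immediate: $\mu|_{\Borel(\RR)}$ and $\mu|_{\Borel(\RB)}$ are finite Borel measures, and inner regularity survives restriction because compact subsets of $\RR$ (resp. $\RB$) are exactly the compact subsets of $\qR$ contained in $\RR$ (resp. $\RB$). For the converse, set $\nu:=\mu_\RR\oplus\mu_B$. It is clearly a finite Borel measure on $\BRq$. To establish inner regularity, fix $A\in\BRq$ and write $A=A_\RR\sqcup A_B$ with $A_\RR:=A\cap\RR\in\Borel(\RR)$ and $A_B:=A\cap\RB\in\Borel(\RB)$ by part 1. Given $\varepsilon>0$, use inner regularity of $\mu_\RR$ and $\mu_B$ to pick compacts $K_\RR\subseteq A_\RR$ and $K_B\subseteq A_B$ with $\mu_\RR(A_\RR)-\mu_\RR(K_\RR)<\varepsilon/2$ and $\mu_B(A_B)-\mu_B(K_B)<\varepsilon/2$. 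Then $K:=K_\RR\sqcup K_B$ is compact in $\qR$ as a finite union of compacta, is contained in $A$, and satisfies $\nu(A)-\nu(K)<\varepsilon$.

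I do not expect any serious obstacle. The only subtlety worth being careful about is making sure to invoke the already-established fact that the subspace topologies on $\RR$ and $\RB$ agree with their native topologies, so that ``compact in $\qR$'' and ``compact in the factor'' mean the same thing for subsets lying entirely in one piece; everything else is a bookkeeping exercise with the decomposition $A=A_\RR\sqcup A_B$ and a standard $\varepsilon/2$-argument.
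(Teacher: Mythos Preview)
Your proposal is correct and follows essentially the same route as the paper's proof: both parts hinge on the fact that the subspace topologies of $\RR$ and $\RB$ agree with their native ones, so that openness (for $\RR$) and compactness (for $\RB$) transfer to $\qR$, and the inner regularity of the sum is handled by the same $\varepsilon/2$-argument you spell out. The only difference is cosmetic---you check the inclusion $\BRq\subseteq\Borel(\RR)\sqcup\Borel(\RB)$ on the generating Type~1--3 sets, whereas the paper simply observes it holds for arbitrary open $U$ (which is equivalent once the subspace-topology fact is in hand).
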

By the above lemma, each normalized Radon measure on $\qR$ can be written as
  $t\:\mu_1\hs \oplus\hs (1-t)\:\mu_2$ 
for $t\in [0,1]$ and normalized\footnote{Of course, if $t=0$ or $t=1$, it doesn't matter which normalized Radon measure we choose on $\RR$ or $\RB$, respectively. Thus, in these cases we also allow $\mu_1=0$ and $\mu_2=0$, respectively.}  Radon measures $\mu_1$ and $\mu_2$ on $\mathbb{R}$ and $\RB$, respectively. Thus, in the following the crucial point will be to fix the measures $\mu_1, \mu_2$ and the parameter $t$.
\begin{remark}[Borel Measures]
\label{rem:BorelsRadons}
\begin{enumerate}
\item
\label{rem:BorelsRadons1}
If, in the situation of Lemma \ref{lemma:Radon}, $\mu_\RR$ and $\mu_\B$ are Borel measures and $\mu_\RR$ is finite, the sum $\mu_\RR\oplus \mu_{\B}$ is still a Borel measure. Thus, by the same arguments as in the above proof, a Radon measure if $\mu_\RR$ and $\mu_\B$ are so.

In fact, since $\mu_\RR$ is finite, for $\x\in \RR\subseteq \qR$ the open set $\RR$ is a neighbourhood of $\x$ in $\qR$ with $(\mu_\RR\oplus \mu_{\B})(\RR)<\infty$. Similarly, since $\mu_\B$ is locally finite, for $\x\in \RB\subseteq \qR$ we find some open neighbourhood $V$ of $\x$ in $\RB$:
\begin{align*}
	V=\widehat{f}_1^{-1}(U_1)\cap{\dots}\cap \widehat{f}_{k}^{-1}(U_k)\quad\text{with}\quad f_1,{\dots}, f_k \in \CAP\quad\text{and}\quad U_1,{\dots},U_k\subseteq \RR\:\text{ open}
\end{align*}
with $\mu_\B(V)<\infty$. Thus, since $\RR\sqcup V$ is open in $\qR$ and $\mu_\RR$ is finite, $(\mu_\RR\oplus \mu_{\B})(\RR\sqcup V)< \infty$ holds as well.
\item
\label{rem:BorelsRadons2}
If $\mu$ is a Borel (Radon) measure but not finite, the restrictions $\mu|_{\Borel(\mathbb{R})}$ and $\mu|_{\Borel(\RB)}$ are still Borel (Radon) measures because $\mu$ is locally finite and the topology on $\qR$ induces the usual topologies on $\RR$ and $\RB$, respectively. 
\end{enumerate}
\end{remark}
Combining the two parts of the above remark, we see that each (finite) Borel measures on $\qR$ is (exactly) of the form $\mu_\RR \oplus \mu_\B$ for (finite) Borel measures $\mu_\RR$ and $\mu_\B$ on $\RR$ and $\RB$, respectively. 
This provides us with the following statement (see also Updates in \cite{ChrisSymmLQG}) concerning the assumptions made in \cite{Jon} on the inner product on (the image under the Gelfand transform of) $C_0(\RR)\oplus \CAP$.
\begin{lemma}
	\label{lemma:JonsmitRadon}
  Let $\mu=\mu_\RR \oplus \mu_\B$ be a Borel measure on $\qR$ with
  \begingroup
\setlength{\leftmargini}{17pt}
\begin{itemize}
\item
\itspacecc
$\big\langle \widehat{f}_0,\widehat{f}_\AP\big\rangle_{\xi(\mu)}=0$\hspace{115.5pt}$\forall\: f_0\in C_0(\RR),\:\:\forall\:f_\AP\in \CAP$ 
\item
\itspacecc
$\big\langle \widehat{f}_\AP,\widehat{g}_\AP\big\rangle_{\xi(\mu)}=\big\langle \widehat{f}_\AP,\widehat{g}_\AP\big\rangle_{\muB}$\hspace{40.45pt}$\forall\: f_\AP, g_\AP\in \CAP$,
\end{itemize}
\endgroup
\noindent
whereby on the right hand side of the second point the Gelfand transforms have to be understood as elements of $C(\RB)$. 
 Then, 
   \begingroup
   \setlength{\leftmargini}{17pt}
 \begin{itemize}
 	\item
 	\itspacecc
 	$\mu_\RR=0$
 	\item
 	\itspacecc
 	$\mu_\B$ is inner regular\qquad $\Longrightarrow$\qquad  $\mu_\B=\muB$.
 \end{itemize}
 \endgroup
 \noindent
 In particular, $0_\RR \oplus \muB$ is the only Radon measure on $\qR$ which fulfils the above conditions. 
\end{lemma}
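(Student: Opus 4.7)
The plan is to unwind both hypotheses against the direct-sum decomposition $\mu = \mu_\RR \oplus \mu_\B$ and reduce to Riesz-Markov uniqueness on each piece separately. The crucial input is the behavior of the Gelfand transform under $\xi$: for $f_0 \in C_0(\RR)$, viewed as $f_0 \oplus 0 \in C_0(\RR)\oplus \CAP$, formula \eqref{eq:Ksiii} shows that $\widehat{f}_0 \cp \xi$ agrees with $f_0$ on $\RR$ and vanishes identically on $\RB$; for $f_\AP \in \CAP$, viewed as $0 \oplus f_\AP$, the pullback $\widehat{f}_\AP \cp \xi$ agrees with $f_\AP$ on $\RR$ and with the ordinary Bohr-Gelfand transform on $\RB$.

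Feeding this into the first hypothesis kills the $\RB$-contribution and leaves
\begin{align*}
\int_\RR \overline{f_0}\, f_\AP \, \dd\mu_\RR = 0 \qquad \forall\: f_0 \in C_0(\RR),\ \forall\: f_\AP \in \CAP.
\end{align*}
Taking $f_\AP \equiv 1$ yields $\int_\RR f_0 \, \dd\mu_\RR = 0$ for every $f_0 \in C_0(\RR)$. To pass from here to $\mu_\RR = 0$, I would pick, for each compact $K\subseteq \RR$, an Urysohn cutoff $f_0 \in C_c(\RR)\subseteq C_0(\RR)$ with $0\leq f_0\leq 1$ and $f_0|_K\equiv 1$; local finiteness of $\mu_\RR$ ensures $f_0 \in L^2(\mu_\RR)$, so the hypothesis applies and gives $\mu_\RR(K) \leq \int f_0 \, \dd\mu_\RR = 0$. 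Exhausting $\RR$ by a countable sequence of compacts and invoking continuity from below then forces $\mu_\RR(\RR)=0$.

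With $\mu_\RR = 0$ in hand, the second hypothesis collapses to
\begin{align*}
\int_\RB \overline{\widehat{f}_\AP}\, \widehat{g}_\AP \, \dd\mu_\B = \int_\RB \overline{\widehat{f}_\AP}\, \widehat{g}_\AP \, \dd\muB \qquad \forall\: f_\AP,g_\AP \in \CAP.
\end{align*}
Specializing $g_\AP \equiv 1$ and recalling that the Gelfand transform $\CAP \ni f_\AP \mapsto \widehat{f}_\AP \in C(\RB)$ is surjective, $\mu_\B$ and $\muB$ determine the same positive linear functional on $C(\RB)$. Since $\RB$ is compact Hausdorff, both measures are finite (local finiteness on a compact space forces this) and inner regular --- the Haar measure $\muB$ automatically, $\mu_\B$ by hypothesis --- so Riesz-Markov uniqueness delivers $\mu_\B = \muB$.

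The only subtle step is the deduction $\mu_\RR = 0$: because $\mu_\RR$ is a priori only locally finite on the non-compact space $\RR$, one cannot invoke Riesz uniqueness there directly, and the compact exhaustion combined with Urysohn cutoffs is needed to circumvent this. Everything else is a clean separation of the two pieces of $\qR$ plus standard Riesz uniqueness on the compact space $\RB$.
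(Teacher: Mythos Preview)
Your proof is correct and follows essentially the same route as the paper: set $f_\AP\equiv 1$ in the first condition, use compactly supported bump functions together with a compact exhaustion of $\RR$ to force $\mu_\RR=0$, then set $g_\AP\equiv 1$ in the second condition and invoke Riesz--Markov on the compact space $\RB$. The only cosmetic omission is the trivial verification that $0_\RR\oplus\muB$ actually satisfies the two hypotheses, which the paper mentions explicitly to close the ``in particular'' clause.
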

\begin{proof}
 	For $f_\AP=1$, the first condition reads
  \begin{align*}
  0= \int_{\xi(\qR)} \widehat{f_0}\:\dd\xi(\mu)&= \int_{\qR} \widehat{f_0}\cp \xi \:\dd\mu
  =\int_{\RR} \widehat{f_0}\cp \xi\:\dd\mu_\RR + \int_{\RB} \widehat{f_0}\cp \xi\:\dd\mu_{\B}
  \stackrel{\eqref{eq:Ksiii}}{=}\int_{\RR} f_0 \:\dd\mu_\RR
  \end{align*}
  for all $f_0\in C_0(\RR)$. Since $\mu_\RR(\RR)=\lim_n \mu_\RR([-n,n])$, and since for each $n\in \NN_{>0}$ we find a positive function in $C_0(\RR)$ which is identically $1$ on $[-n,n]$, we conclude that $\mu_\RR=0$ holds.
  
  Then, evaluating the second condition for $g_\AP=1$, we obtain 
   \begin{align*}
  \int_{\RB} \widehat{f}_\AP\cp \xi\:\dd\mu_{\B}=\int_{\RB} \widehat{f}_\AP\:\dd\muB\qquad\forall\: f_\AP\in \CAP.
  \end{align*}         
  Since by \eqref{eq:Ksiii} the left hand side equals the integral of $\widehat{f}_\AP$ (understood as an element of $C(\RB)$) over $\RB$ w.r.t.\ $\mu_{\B}$, Riesz-Markov shows that $\mu_{\B}=\muB$ holds. As it is easily checked that the measure $0_\RR \oplus \muB$ fulfils the above conditions, the claim follows.  
\end{proof}

\subsection{Motivation of the Construction}
\label{subsec:Motivation}
In this subsection, we will investigate whether the projections maps used for the definition of the Ashtekar-Lewandowski measure on $\A$  are appropriate for defining a reasonable measure on $\qR$ as well. This will lead us to the constructions of Subsection \ref{subsec:ProjStrucon}.

So, we start with the same maps $\pi_\alpha$ being also used in Appendix \ref{sec:AshLewMeasure} to define the Ashtekar-Lewandowski measure $\mAL$ on $\ovl{\Con}$. More precisely, we consider the projection maps 
\begin{align}
\label{eq:projdef}
\begin{split}
  \pi_\alpha \colon \qR&\rightarrow \SU^{|\alpha|}\\
  \ovl{x}&\mapsto (\Delta(\ovl{x})(\gamma_1),{\dots},\Delta(\ovl{x})(\gamma_{|\alpha|})) 
\end{split}
\end{align}
for $\alpha =(\gamma_1,{\dots},\gamma_k)\in \Gamma :=\bigsqcup_{n=1}^\infty \Paths^n$, $|\alpha|:=k$ 
and $\Delta:=\homiso \cp \oishomc\cp\xi\colon \qR \rightarrow \Homm_\inv(\Paths,\SU)$
\begin{align}
  \label{eq:commdiag}
  \begin{split}
  \makebox[0pt]{
    \begin{xy}
      \xymatrix{
        \Delta\colon\hspace{-35pt} &\qR \ar@{->}[r]^-{\xi}_-{\cong}& \Spec(\ishomc^*\cC_\w)\ar@{->}[r]^-{\oishomc}_-{\cong}& \ovl{\ConG} & \hspace{-26pt}\subseteq\AQR \ar@{->}[r]^-{\homiso}_-{\cong}& \Homm_\inv(\Paths,\SU)  
        \\
        &&&  \RR \ar@{->}[r]^-{\ishomc}  \ar@{->}[u]^-{\iota_{\RR}}  & \ConG \ar@{->}[u]^-{\iota_{\Con}}& \hspace{-107pt}\subseteq\Con. &  
      }
    \end{xy}
  }
  \end{split}
\end{align}
As in Appendix \ref{sec:AshLewMeasure}, we now have to choose a subset $\gred$ of $\Gamma$ which can be turned into a directed set. Here, we have to take the following issues into account: 
\begingroup
\setlength{\leftmargini}{14pt}
\begin{itemize}
\item
  \itspacecc
  $\gred$ has to be large enough to guarantee that an element $\ovl{x}\in \qR$ is completely determined by the values $\pi_\gamma(\ovl{x})$ for $\gamma\in \gred$. 
\item
  \itspacec
  Since the elements of $\qR$ correspond to invariant homomorphisms, for each $\x\in \qR$ the values $\pi_{\gamma_1}(\x)$ and $\pi_{\gamma_2}(\x)$ are related via \eqref{eq:InvGenConnRel} if the curves $\gamma_1$ and $\gamma_2$ only differ by an euclidean transformation. In particular, in view of defining measures on the images of the maps $\pi_{(\gamma_1,\gamma_2)}$, this will give further restrictions to the set $\gred$. Indeed, for $\gamma_1,\gamma_2$ linear and related as above, this image is given by some 1-dimensional submanifold. In contrast to that, if $\gamma_1$ and $\gamma_2$ are not related, it will be homeomorphic to $\TO^2$, see also discussions below. 
\item
  \itspacec
  For each $\alpha\in \gred$ we have to find some reasonable measure $\mu_\alpha$ on $\im[\pi_\alpha]\subseteq \SU^{|\alpha|}$. Here, it can be seen from \eqref{eq:meas0} that we cannot stick to the Haar measures on $\SU^{|\alpha|}$ if we want to obtain something that is different from the zero measure on $\qR$.  
\end{itemize}
\endgroup
\noindent
Concerning the first point, we recall that the $\Cstar$-algebras $\ishomc^*\cC_{\w}$ and $\ishomc^*\cC_{\lin\mc}$ coincide \cite{ChrisSymmLQG}, i.e., that $\qR\cong \Spec(\ishomc^*\cC_{\lin\mc})$ holds. Here, $\cC_{\lin\mc}$ denotes the $\Cstar$-algebra of cylindrical functions that corresponds to the set of curves $\Paths_{\lin\mc}:=\Paths_\lin \sqcup \Paths_\mc$ for $\Paths_\lin$ and $\Paths_\mc$ defined as follows:
\begingroup
\setlength{\leftmargini}{14pt}
\begin{itemize}
\item
  $\Paths_\lin$ denotes the set of linear curves of the form 
  $x+\gamma_{\vec{v},l}$ for $x, \vec{v}\in \mathbb{R}^3$ with $\|\vec{v}\|=1$ and $\gamma_{\vec{v},l}\colon [0,l]\rightarrow \mathbb{R}$, $t\mapsto t \cdot\vec{v}$. 
\item
  $\Paths_\mc$ denotes the set of circular curves of the form
  \begin{align*}
    \gc{n}{r}{x}{m} \colon [0,2\pi m]&\rightarrow \mathbb{R}^3\\
    t&\mapsto x + \cos(t)\cdot\vec{r} + \sin(t) \cdot\vec{n}\times \vec{r}  
  \end{align*}
  for $\vec{n},\vec{r},x\in \mathbb{R}^3$ with $\|\vec{n}\|=1$ and $0< m< 1$.
\end{itemize}
\endgroup
\vspace{-3pt}
\noindent
Thus, by \eqref{eq:pigammawquer} and the definition \eqref{embext} of $\oishomc$, it suffices to consider the curves in $\Paths_{\lin\mc}$ in order to satisfy condition \ref{def:ProjLim3}) from Definition \ref{def:ProjLim}. 
Moreover, due to invariance \eqref{eq:InvGenConnRel}, it suffices to consider the set $\pred$ of all linear and circular curves of the form 
\begin{align*}
	\gamma_l:=\gamma_{\vec{e}_1,l}\qquad\qquad\text{and}\qquad\qquad\gamma_{m,r}:=\gamma_{\vec{e}_3,r\hs\vec{e}_1}^{0,m}.
\end{align*}
Thus, in a first step, we end up with the set
\begin{align*}
  \gred:=\{(\gamma_1,{\dots},\gamma_k) \:|\: k\in \mathbb{N}_{>0}, \: \gamma_1,{\dots},\gamma_k \in \pred\}.
\end{align*}
Now, for each $\x\in \qR$ we have $\pi_{\gamma_l}(\x)\in \TO_{\vec{e}_1}$. This follows from the invariance condition \eqref{eq:InvGenConnRel} or directly from the fact that for each $\x\in \qR$ and $l\in \RR$ we find some $x_0\in \RR$ with $\xi(\x)(\chi_l)=\chi_l(x_0)$:
\begin{align}
  \label{eq:Linearbild}
  \begin{split}
    \pi_{\gamma_l}(\x)&\hspace{-1pt}=\xi(\x)\left(\mathrm{Re}(\chi_l)\right)\cdot \me - \xi(\x)\left(\mathrm{Im}(\chi_l)\right)\cdot \murs\left(\vec{e}_1\right)\\
    &=\cos(x_0\cdot l)\cdot \me - \sin(x_0\cdot l)\cdot \murs\left(\vec{e}_1\right)\\
    &=\exp(x_0\cdot l\cdot \murs(\vec{e}_1)).
  \end{split}
\end{align}
Indeed, the first equality follows from \eqref{eq:pigammawquer}  
and (see e.g. Subsection 4.3 in \cite{InvConnLQG})
\begin{align}
\label{eq:patralinear}
  \big(h_{\gamma_{l}}\cp \ishomc\big)(c)=\pr_2\cp \parall{\gamma_{l}}{\ishomc(c)}(0,\me)
  =\cos(c\hs l)\cdot\me - \sin(c\hs l)\cdot\murs\left(\vec{v}\right).
\end{align}
Since $\im[\pi_{\gamma_l}]=\TO_{\vec{e}_1}$ is a Lie subgroup of $\SU$ isomorphic to the circle $\TO$, it is natural to choose $\mu_{\alpha}$ to be the Haar measure on $\TO^{k}_{\vec{e}_1}\cong \TO^k$ if $\alpha$ is of the form $(\gamma_{l_1},{\dots},\gamma_{l_k})$. Moreover, in order to guarantee that $\mu_\alpha(\im[\pi_\alpha])\neq 0$ holds, we might only allow indices $(\gamma_{l_1},{\dots},\gamma_{l_k})$ for which $l_1,{\dots} ,l_k$ are $\mathbb{Z}$-independent. In fact, in this case it follows from  
Kronecker's theorem (cf. Theorem 4.13 in \cite{Broecker1985}) 
that $\pi_\alpha(\RR)$ is dense in $\TO^{k}_{\vec{e}_1}$, hence $\TO^{k}_{\vec{e}_1}= \ovl{\im[\pi_\alpha]}=\pi_\alpha\Rq$  
by compactness of $\qR$ and denseness of $\RR$ in $\qR$.\footnote{Alternatively, one can effort Lemma \ref{prop:OpenMapp}.\ref{lemma:OpenMapp4}.} 
On the level of linear curves, this is the same as to consider the directed set\footnote{Define $(l_1,{\dots},l_k)\leqZ \big(l'_1,{\dots},l'_{k'}\big)\:\: \Longleftrightarrow\:\: l_i\in \spann_{\mathbb{Z}}\big(l'_1,{\dots},l'_{k'}\big)
  \text{ for all } 1\leq i\leq k$. Then, $(I,\leqZ)$ is directed because $\mathbb{R}$ is a $\mathbb{Q}$ vector space and $l_1,{\dots},l_k\in \mathbb{R}$ are $\mathbb{Z}$-independent iff they are $\mathbb{Q}$-independent.}  
\begin{align}
  \label{eq:dirset}
  I:=\{(l_1,{\dots},l_k)\in \RR^k \:|\: k\in \mathbb{N}_{>0}, \: l_1,{\dots},l_k \text{ are } \mathbb{Z}\text{-independent}\},
\end{align}
the projection maps $\pi_{L}\colon \qR \rightarrow \TO^{|L|}$,\:\: $\ovl{x}\mapsto (\xi(\ovl{x})(\chi_{l_1}),{\dots},\xi(\ovl{x})(\chi_{l_k}))$ for $L=(l_1,{\dots},l_k)\in I$, $|L|:=k$, and to take the Haar measure $\mu_{|L|}$ on the $k$-torus $\TO^k$. Then, if we restrict to $\RB \subseteq \qR$ and define the transition maps
$\pi^{L'}_{L}\colon \TO^{|L'|}\rightarrow \TO^{|L|}$ by
\begin{align*}
  \pi^{L'}_{L} (s_1,{\dots},s_{k'}):= \left(\prod_{i=1}^{k'}{s_i}^{n^i_1} ,{\dots}, \prod_{i=1}^{k'}{s_i}^{n^i_k} \right) 
  \qquad\text{ if }\qquad l_j=\sum_{i=1}^{k'} n^i_j\cdot l'_i 
\end{align*}
with $n_j^i\in \mathbb{Z}$ for $1\leq j\leq k$, $1\leq i\leq k'$, 
we obtain a projective structure and a consistent family $\{\mu_{|L|}\}_{L\in I}$ of normalized Radon measures that reproduce the Haar measure on $\RB$, cf. Section 4  in \cite{Vel}.
\newline
\vspace{-10pt}
\newline
However, we also have to take circular curves into account, and the first step towards this is to investigate the image of the maps $\pi_{\gamma_{m,r}}$. For this, recall that (see, e.g., Subsection 4.3 in \cite{InvConnLQG})
\begin{align}
  \label{eq:parallcirc}
  \pr_2\cp \parall{\gc{n}{r}{x}{m}}{\ishomc(c)}(\vec{r},e)=\exp\left(\textstyle\frac{\tau}{2}\cdot\murs(\vec{n})\right)\cdot \Co{\sigma}(A(\tau,c))
\end{align}       
for  $\sigma \in \SU$ with $\uberll{\sigma}{\vec{e}_3}=\vec{n}$ as well as 
\begin{align}
  \label{eq:matrixentr}
  \begin{split}
  A(\tau,c):=&\begin{pmatrix} \cos(\pc \mm)+\frac{i}{2\pc}\sin(\pc \mm) &\frac{cr}{\pc} \sin(\pc \mm)  \\ -\frac{cr}{\pc}\sin(\pc \mm) & \cos(\pc \mm)-\frac{i}{2\pc}\sin(\pc \mm)  \end{pmatrix}\\[4pt]
  =&\exp(-\textstyle\frac{\tau}{2}\cdot [2rc\cdot \tau_2+\tau_3]).
\end{split}
\end{align}
Here, $\pc:= \sqrt{c^2r^2+\frac{1}{4}}$ and $\Co{\sigma}$ denotes the conjugation by $\sigma$ in $\SU$. For the second equality in \eqref{eq:matrixentr} recall that
\begin{align}
  \label{eq:expformel}
  \exp\left(t \cdot \murs(\vec{n})\right)=\cos(t)\cdot\me + \sin(t)\cdot\murs(\vec{n})\qquad \forall\:\vec{n}\in \RR^3,\:\forall t\in \RR.
\end{align} 
The next lemma describes the images of the maps $\pi_{\delta}$ for $\delta \in \Paths_{\mc}$. 
\begin{lemma}
  \label{lemma:BildCirc}
  Let $\delta=\gc{n}{r}{x}{\mm}\in \Paths_{\mc}$ be fixed and $d:=\exp\left(\textstyle\frac{\tau}{2}\cdot\murs(\vec{n})\right)$. Then, 
  \begin{enumerate}
   \item
    \label{lemma:BildCirc1}	
    $\pi_\delta\Rq$ is of measure zero w.r.t.\ the Haar measure on $\SU$.
  \item
  	\label{lemma:BildCirc2}	
    There is no proper Lie subgroup $H\subsetneq \SU$ which contains $\pi_\delta\Rq$. 
  \item
    \label{lemma:BildCirc3}	
    The maps $\pi_\delta, \{\pi_{\gamma_l}\}_{l\in \mathbb{R}_{>0}}$  separate the points in $\qR$. 
  \item 
    \label{lemma:BildCirc4}	    
    We have  $\pi_\delta\Rq=\pi_\delta(\mathbb{R})\:\cup\: \underbrace{d \cdot T_{\dot\delta(0)}}_{\pi_\delta(\RB)}
    \qquad \text{with}\:\:\qquad \pi_\delta(\mathbb{R})\:\cap\: d\cdot T_{\dot\delta(0)}= \{\pm d\}$.
  \item
    \label{lemma:BildCirc5}	
     Let $a_n:=\frac{\sign(n)}{r}\sqrt{\frac{n^2\pi^2}{\tau^2}-\frac{1}{4}}$\: for\: $n\in \mathbb{Z}_{\neq 0}$ and
     \begin{align*}
     A_0:=(a_{-1},a_1)\qquad A_n:=(a_n,a_{n+1})\:\:\text{ for }\:\: n\geq 1\qquad A_n:=(a_{n-1}, a_n)\:\:\text{ for }\:\: n\leq -1.
     \end{align*}    
     Then, $\pi_{\gamma_{\tau,r}}|_{A_n}$ is injective for all $n\in \ZZ$, 
     \begin{align*}
     \pi_{\gamma_{\tau,r}}(a_{n})=d\:\:\text{ iff }\:\:|n|\:\:\text{ is even}\qquad\qquad \pi'_{\tau,r}(a_{n})=-d\:\:\text{ iff }\:\:|n|\:\:\text{ is odd}
     \end{align*}
     and $\pi_{\gamma_{\tau,r}}\big(A_m\big)\cap \pi_{\gamma_{\tau,r}}\big(A_n\big)=\emptyset$\: for all $m,n\in \ZZ$\: with\: $m\neq n$. 
     For increasing $|n|$, the sets
     \begin{align*}
     B_n:=[a_{2n},a_{2(n+1)}]\:\text{ for }\:n\geq 1\qquad\qquad B_n:=[a_{2(n-1)}, a_{2n}]\:\text{ for }\:n\leq -1
     \end{align*}  
     merge to $\TO_{\dot{\delta}(0)}$ in the following sense. For each 
     $\epsilon > 0$, we find $n_\epsilon\in \mathbb{N}_{\geq1}$ such that for $|n| \geq n_\epsilon$ we have
     \begin{align*}
     \forall\: s\in B_n:\exists\: s' \in \TO_{\dot{\delta}(0)}:\|s-s'\|_{\mathrm{op}}\leq \epsilon.
     \end{align*}
  \end{enumerate}
  \begin{proof}
    See Appendix \ref{sec:ProofOfLemmaImCirc}. 
  \end{proof}
\end{lemma}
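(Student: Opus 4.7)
The proof will rest on the explicit parallel transport formulas \eqref{eq:parallcirc}--\eqref{eq:matrixentr}. I would begin with part \ref{lemma:BildCirc4}, since its geometric content underlies the rest. Writing $\beta_c = \sqrt{c^2 r^2 + 1/4}$, one has $\beta_c - |c|r = O(1/|c|)$ and $cr/\beta_c - \mathrm{sgn}(c) \in C_0(\RR)$, so every matrix entry of $A(\tau,c)$ decomposes canonically as an element of $C_0(\RR)$ plus one of $\CAP$. A short calculation identifies the AP-part of $A(\tau,\cdot)$ with $\exp(-cr\tau\,\tau_2)$. Since $\xi(\x)$ for $\x\in\RB$ sees only the AP-component (cf.\ \eqref{eq:Ksiii}), this yields $\pi_\delta(\RB) = d\cdot \alpha_\sigma(\TO_{\vec{e}_2}) = d\cdot T_{\dot\delta(0)}$. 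To determine $\pi_\delta(\RR)\cap d\cdot T_{\dot\delta(0)}$ I require the $\tau_3$-coefficient of $A(\tau,c)$ to vanish, i.e.\ $\sin(\beta_c\tau)=0$; this simultaneously forces $A_{12}$ to vanish and gives $A(\tau,c)=\pm\me$, with real solutions of $\beta_c\tau = n\pi$ being exactly the points $c=a_n$ of part \ref{lemma:BildCirc5}.

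Part \ref{lemma:BildCirc5} then follows from the strict monotonicity of $c\mapsto \beta_c$ on $(-\infty,0]$ and on $[0,\infty)$: between two consecutive zeros $a_n,a_{n+1}$ the phase $\beta_c\tau$ sweeps an interval of length $\pi$, while the coefficient $\sin(\beta_c\tau)/(2\beta_c)$ of $\tau_3$ does not vanish in the interior of $A_n$, which is enough to rule out collisions of $\pi_\delta$ beyond those already forced by the AP-part. The convergence claim is immediate from $\|A(\tau,c)-\exp(-cr\tau\tau_2)\|_{\mathrm{op}}\in C_0(\RR)$: for $|n|$ large, every $c\in B_n$ has $A(\tau,c)$ within $\epsilon$ of $\TO_{\vec{e}_2}$, and as $\beta_c\tau$ sweeps out $[2n\pi,2(n+1)\pi]$ the AP-part $\exp(-cr\tau\tau_2)$ traces out all of $\TO_{\vec{e}_2}$ once.

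Parts \ref{lemma:BildCirc1} and \ref{lemma:BildCirc2} are short corollaries of part \ref{lemma:BildCirc4}: the set $\pi_\delta(\qR) = \pi_\delta(\RR)\cup d\cdot T_{\dot\delta(0)}$ is the union of two smooth $1$-dimensional submanifolds of the $3$-dimensional $\SU$ and hence has Haar measure zero. Any proper closed Lie subgroup of $\SU$ is either finite or (conjugate to) a maximal torus; $\pi_\delta(\RR)$ is infinite and a direct glance at $A(\tau,0)=\cos(\tau/2)\me - \sin(\tau/2)\tau_3\in\TO_{\vec{e}_3}$ versus any $c$ with $\sin(\beta_c\tau)\neq 0$ (which produces a nonzero $\tau_2$-coefficient) rules out containment in any maximal torus.

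For part \ref{lemma:BildCirc3} I would reduce separation of points on $\qR$ to the claim that the matrix entries of $h_\delta\cp\ishomc$ and $\{h_{\gamma_l}\cp\ishomc\}_{l>0}$ generate a dense $\Cstar$-subalgebra of $\ishomc^*\cC_\w = C_0(\RR)\oplus\CAP$, which together with Stone--Weierstrass on the compact Hausdorff space $\qR$ gives the conclusion. The $h_{\gamma_l}$-entries furnish all $\chi_l$ and therefore generate $\CAP$; using the AP-decomposition above, subtracting $\cos(cr\tau)$ from $A_{11}$ and $\sin(cr\tau)$ from $A_{12}$ extracts nonzero elements of $C_0(\RR)$, which can then be combined to produce a nowhere-vanishing element of $C_0(\RR)$. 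The generation principle recalled in the introduction then closes the argument. The main obstacle is precisely this last point: both natural $C_0$-extractions vanish at the discrete set $\{a_n\}$, so producing a truly nowhere-vanishing combination requires either a suitable positive combination of $|A_{11}-\cos(cr\tau)|^2$ and $|A_{12}-\sin(cr\tau)|^2$ (whose joint zero set is empty by part \ref{lemma:BildCirc4}, since $A(\tau,a_n)=\pm\me$ while the corresponding AP-value is $\exp(-a_n r\tau\tau_2)\neq\pm\me$ for $|n|\geq 1$), or an indirect argument using the injectivity of part \ref{lemma:BildCirc5} to distinguish $\RR$-points from one another and from their $\RB$-accumulation points.
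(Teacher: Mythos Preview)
Your outline for parts \ref{lemma:BildCirc4}, \ref{lemma:BildCirc5}, \ref{lemma:BildCirc1}, \ref{lemma:BildCirc2} matches the paper's argument closely: the paper also begins with the explicit $C_0\oplus\CAP$ decomposition of the entries of $A(\tau,c)$ (formulas \eqref{eq:cappluco}--\eqref{eq:cappluco2}), reads off $\pi_\delta(\RB)$ from the $\AP$-part, and then obtains \ref{lemma:BildCirc1}) and \ref{lemma:BildCirc2}) as corollaries (the latter via the observation that two values of $A(\tau,\cdot)$ fail to commute, which is essentially your torus argument).

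For part \ref{lemma:BildCirc3} the paper is more direct than your Stone--Weierstrass reduction. It observes that
\[
(\pi_\delta(\x))_{11}-\e^{-\frac{\I}{2}\tau}\,(\pi_{\gamma_{\tau r}}(\x))_{11}=\e^{-\frac{\I}{2}\tau}\,\xi(\x)(a_0),
\]
with $a_0$ the $C_0$-part of the $(1,1)$-entry $a$, and checks that $a_0$ is \emph{nowhere vanishing}. This single relation already detects whether $\x\in\RR$ or $\x\in\RB$, after which the linear maps $\pi_{\gamma_l}$ separate points within each piece. Your ``main obstacle'' is therefore illusory: the claim that $A_{11}(c)-\cos(cr\tau)$ vanishes at the points $a_n$ is false. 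At $c=a_n$ one has $A_{11}(a_n)=(-1)^{n}$, while $\cos(a_n r\tau)=\cos\!\big(\sqrt{n^2\pi^2-\tau^2/4}\,\big)$; equality would force $n^2\pi^2-\tau^2/4=k^2\pi^2$ for some integer $k$, hence $0<n^2-k^2<1$ from $0<\tau<2\pi$, which is impossible for integers. So the nowhere-vanishing element of $C_0(\RR)$ you were looking for is already $a_0$ itself, and your own parenthetical remark (that $A(\tau,a_n)=\pm\me$ while $\exp(-a_n r\tau\,\tau_2)\neq\pm\me$) is precisely this computation in disguise. The $\Cstar$-generation detour works in principle but is unnecessary.
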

The first and the second part of the above lemma already show that it is hard to equip $\im[\pi_{\gamma_{\tau,r}}]$ with a reasonable measure. 
In addition to that, it is difficult to define a reasonable ordering, i.e., a directed set labelling the projection spaces.  
Indeed, the first thing one might try, is to define $\gamma_{\tau,r}\leq \gamma_l$ or $\gamma_l\leq \gamma_{\tau,r}$. But, then one has to define reasonable transition maps between $\im[\pi_{\gamma_{\tau,r}}]$ and $\im[\pi_{\gamma_l}]$, i.e., maps 
\begin{align*}
\TT_1\colon \im[\pi_{\gamma_{\tau,r}}]\rightarrow \im[\pi_{\gamma_l}]  \quad\qquad &\text{or}\quad \qquad \TT_2\colon \im[\pi_{\gamma_l}]\rightarrow \im[\pi_{\gamma_{\tau,r}}]\\
\text{with} \hspace{80pt} \TT_1\cp \pi_{\gamma_{\tau,r}}=\pi_{\gamma_l}\quad\qquad &\text{or}\quad \qquad \TT_2\cp \pi_{\gamma_l} = \pi_{\gamma_{\tau,r}}, \hspace{75pt}\text{respectively}.
\end{align*}
 This, however, is difficult, and even impossible if $\tau r =l$ holds: 
\begingroup
\setlength{\leftmargini}{14pt}
\begin{itemize}
\item
  Let $0, \frac{2\pi}{l}\in \RR\subseteq \qR$, i.e., $0=\xi^{-1}(\iota_\RR(0))$ and $\frac{2\pi}{l}=\xi^{-1}\big(\iota_\RR\big(\frac{2\pi}{l}\big)\big)$, cf.\ \eqref{eq:iotaxiR}. Then
	\begin{align*}  
  	\pi_{\gamma_l}(0)=\pi_{\gamma_l}(\iota_\RR(0))\stackrel{\eqref{eq:Linearbild}}{=}\me=\pi_{\gamma_l}\big(\textstyle\frac{2\pi}{l}\big)\qquad\quad\text{but}\qquad\quad \pi_{\gamma_{\tau,r}}(0)\stackrel{\eqref{eq:parallcirc}}{=}\me\neq \pi_{\gamma_{\tau,r}}\big(\frac{2\pi}{l}\big).
	\end{align*}  	 
   Here, the inequality on the right hand side is because by Lemma \ref{lemma:BildCirc}.\ref{lemma:BildCirc4} we have that $\pi_{\gamma_{\tau,r}}(x)=\pi_{\gamma_{\tau,r}}(y)$ for $x\neq y$ enforces $\pi_{\gamma_{\tau,r}}(x)=\pm \exp(\frac{\tau}{2}\cdot \tau_3)$. Thus, there cannot exist a transition map $\TT_2\colon \im[\pi_{\gamma_l}]\rightarrow \im[\pi_{\gamma_{\tau,r}}]$, as then 
	\begin{align*}   
   	\pi_{\gamma_{\tau,r}}(0)=(\TT_2\cp \pi_{\gamma_l})(0)=(\TT_2\cp \pi_{\gamma_l})\big(\textstyle\frac{2\pi}{l}\big)=\pi_{\gamma_{\tau,r}}\big(\textstyle\frac{2\pi}{l}\big)
   \end{align*}
   would hold.
\item
  We have $\pi_{\gamma_{\tau,r}}(a_{2n})=d$ for all $n\in \mathbb{Z}_{\neq 0}$, so that for a transition map $\TT_1\colon \im[\pi_{\gamma_{\tau,r}}] \rightarrow \im[\pi_{l}]$ we would have  
  \begin{align*}
    \TT_1(d)=\left(\TT_1\cp\pi_{\gamma_{\tau,r}}\right)(a_{2n})=\pi_{\gamma_l}( a_{2n}) \qquad\forall \: n\in \ZZ_{\neq 0}.
  \end{align*}
  Then, for $\epsilon >0$ we find $n_\epsilon \in \mathbb{N}_{\geq 0}$ such that\footnote{It is clear that $\lim_n [l a_{2n}- 2\pi n] =0$, and that $\lim_n [l a_{2n}-  l a_{2(n+1)}]= 2\pi$. Hence, we have to show that we find $n_0\in \NN_{\geq 1}$ such that $l a_{2n}-  l a_{2(n+1)} \neq 2\pi$ for all $n\geq n_0$. Now, $l a_{2n}-  l a_{2(n+1)}=2n\pi\left[\sqrt{1-\textstyle\frac{\tau^2}{4(2n+2)^2\pi^2}}-\sqrt{1-\textstyle\frac{\tau^2}{4(2n)^2\pi^2}}\right] + 2\pi \sqrt{1-\textstyle\frac{\tau^2}{4(2n+2)^2\pi^2}}$, where the first summand is negative and tends to zero for $n\rightarrow \infty$. Since the second summand is smaller than $2\pi$, the whole expression is smaller than $2\pi$ for $n$ suitable large.} $l a_{2n}-  l a_{2(n+1)}\in B_\epsilon(2\pi)\backslash\{2\pi\}$ for all $n\geq n_\epsilon$. But, $\pi_{l}(a_{2n})=\pi_{l}(\raisebox{0pt}{$a_{2(n+1)}$})$ implies $la_{2(n+1)}-la_{2n} =k_n2\pi$ for some $k_n \in \mathbb{Z}$,
  so that we get a contradiction if we choose $\epsilon< 2\pi$.
\end{itemize}
\endgroup 
\noindent
Now, in order to satisfy Condition \ref{def:ProjLim3}) from Definition \ref{def:ProjLim}, by Lemma \ref{lemma:BildCirc}.\ref{lemma:BildCirc3} it suffices to 
take one fixed circular curve $\gamma_{\tau,r}$ into account. So, 
we can circumvent the above transition map problem by sticking to the directed set $I$ from \eqref{eq:dirset}. More precisely, we can incorporate the map $\pi_{\gamma_{\tau,r}}$ into each of the projection maps as follows: 
\begingroup
\setlength{\leftmargini}{14pt}
\begin{itemize}
\item
\label{it:fff}
  For $I\ni L=(l_1,{\dots},l_k)$ we can define 
	\begin{align*}
	\pi'_L\colon \qR &\rightarrow \SU^{k+1}\\
		\x&\mapsto \big(\pi_{\gamma_{l_1}}(\x),{\dots},\pi_{\gamma_{l_k}}(\x),\pi_{\gamma_{\tau,r}}(\x)\big).
	\end{align*}  
  But then $\im[\pi'_L]$ crucially depends on the $\mathbb{Z}$-independence of $l_1,{\dots},l_k,r\tau$ as, e.g., we have $\pi_L(\RB)=\TO_{\vec{e}_1}^k \times  \TO_{\vec{e}_2}$ in the $\mathbb{Z}$-independent case as well as $\pi_L(\RB)\cong \TO$ if $l_1,{\dots},l_k,r\tau$ are multiples of the same real number. For this, observe that we cannot restrict to independent tuples without adapting $\leq$. This is because for $(l',r\tau)$, $(l'',r\tau)$ independent and $(l_1,{\dots},l_k)\in I$ an upper bound of $L':=l',L'':=l''$, the tuple $(l_1,{\dots},l_k,r\tau)$
  does not need to be independent as well. In fact, for $l'=l-r\tau$ this cannot be true for any such $L$. 
  All this makes it difficult to find transition maps and suitable consistent families of normalized Radon measures on these spaces.
\item
\label{it:ffff}
  Basically, Lemma \ref{lemma:BildCirc}.\ref{lemma:BildCirc3} is due to the fact that the $C_0(\RR)$-part 
  of the function 
  $a\colon \RR\ni c\mapsto (\pi_{\gamma_{\tau,r}}(c))_{11}$  
  vanishes nowhere. Now, we can try to find some analytic curve $\gamma$ such that for one of the entries $(\pi_{\gamma}(\cdot))_{ij}$, $1\leq i,j\leq 2$, the $\CAP$-part is zero and the $C_0(\RR)$-part vanishes nowhere and is injective. Then, Condition \ref{def:ProjLim3}) from Definition \ref{def:ProjLim} would hold for the projection maps $\wt{\pi}_L\colon \qR \rightarrow \im[\pi_\gamma] \sqcup \TO_{\vec{e}_1}^k$
  \begin{align}
    \label{eq:PiLMuster}
    \wt{\pi}_L(\ovl{x}) := 
    \begin{cases} 
      \pi_\gamma(\x)  & \mbox{if } \ovl{x}\in \mathbb{R}\\
      \pi_L(\x) &\mbox{if } \ovl{x}\in \RB,
    \end{cases} 
  \end{align}
  and we could define the transition maps and measures (once the Borel $\sigma$-algebra is shown to split up) on $\im[\pi_\gamma]$ and $\TO_{\vec{e}_1}^k$ separately. However, even if such a curve $\gamma$ exists, it is not to be expected that it is easier to find reasonable measures on $\im[\pi_\gamma]$ than on $\im[\pi_{\gamma_{\tau,r}}]$.  
\end{itemize}
 \endgroup 
 \noindent
In the next subsection, we will follow the philosophy of the second approach. More precisely, we will use distinguished generators of $C_0(\RR)\oplus \CAP$ in order to define projective structures on $\qR$ in a much more direct way. This will allow us to circumvent the image problem we have for the projection maps $\pi_{\gamma_{\tau,r}}$ and $\pi_\gamma$ discussed above. Thus, the crucial part will not be the definition of the projective structure on $\qR$ but to determine the respective consistent families of normalized Radon measures. Here, the main difficulties will arise from determining the Borel $\sigma$-algebras of the projection spaces.

\subsection{Projective Structures on $\boldsymbol{\qR}$}
\label{subsec:ProjStrucon}
In this subsection, we will construct projective structures on $\qR$ by means of generators of $C_0(\RR)\oplus\CAP$ which are of the form $\{f\}\sqcup \{\chi_l\}_{l\in \mathbb{R}}$ with $f_0\in C_0(\RR)$ nowhere vanishing and injective, see Lemma \ref{lemma:WeierErzeuger}.\ref{lemma:WeierErzeuger1}. In fact, due to injectivity, $f$ is a homeomorphism (Lemma \ref{lemma:WeierErzeuger}.\ref{lemma:WeierErzeuger2}) so that the projection maps \eqref{eq:PiLMuster} can directly be translated to the niveau of the generators  $\{f\}\sqcup \{\chi_l\}_{l\in \mathbb{R}}$. This will be done in the definition following
\begin{lemma}
  \label{lemma:WeierErzeuger}
  Let $f\in C_0(\RR)$ vanish nowhere.
  \begin{enumerate}
  \item
    \label{lemma:WeierErzeuger1}
    The functions $\{f\}\sqcup \{\chi_l\}_{l\in \mathbb{R}}$ generate a dense $^*$-subalgebra of $C_0(\mathbb{R})\oplus \CAP$. If $f$ is injective, it generates a dense $^*$-subalgebra of $C_0(\RR)$.   
  \item
    \label{lemma:WeierErzeuger2}
    If $f$ is injective, it is a homeomorphism. 
  \end{enumerate}
  \begin{proof}
    \begin{enumerate}
    \item
	  Since $f(x)\neq 0$ for all $x\in \mathbb{R}$, the $^*$-algebra generated by $\{f \cdot\chi_l\}_{l \in \mathbb{R}}\subseteq C_0(\mathbb{R})$ separates the points in $\mathbb{R}$ and vanishes nowhere. Consequently, it is dense in $C_0(\mathbb{R})$ by the complex Stone-Weierstrass theorem for locally compact Hausdorff spaces. Since $\CAP$ is generated by the functions $\{\chi_l\}_{l \in \mathbb{R}}$, the first claim follows. If $f$ is in injective, the $^*$-algebra generated by $f$ is dense in $C_0(\mathbb{R})$ because $f$ separates the points in $\mathbb{R}$ and vanishes nowhere.     
    \item
   	Let $\RR\sqcup \{\infty\}$ denote the one point compactification of $\RR$. Then, $\ovl{f}\colon \RR\sqcup \{\infty\}\rightarrow \CCC$ defined by $\ovl{f}(\infty):=0$ and $\ovl{f}|_\RR:=f$ is continuous and injective, hence a homeomorphism onto its image $\im[f]\sqcup\{0\}$. Thus, $f^{-1}=\ovl{f}^{-1}|_{\im[f]}$ is continuous as well.
    \end{enumerate}
  \end{proof}
\end{lemma}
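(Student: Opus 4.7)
The plan is to treat the two statements by separate standard tools: the complex Stone--Weierstrass theorem for locally compact Hausdorff spaces handles part \ref{lemma:WeierErzeuger1}), and the one-point compactification trick handles part \ref{lemma:WeierErzeuger2}).

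For part \ref{lemma:WeierErzeuger1}), first I would observe that the product $f\cdot \chi_l$ lies in $C_0(\RR)$ for every $l\in \RR$, and that the family $\{f\chi_l\}_{l\in \RR}$ separates the points of $\RR$ (because the characters $\chi_l$ already do) and vanishes nowhere (because $f$ does). The complex Stone--Weierstrass theorem then yields density of the $^*$-subalgebra they generate in $C_0(\RR)$. On the other hand, the characters $\{\chi_l\}_{l\in \RR}$ generate a dense $^*$-subalgebra of $\CAP$ by the very definition of $\CAP$. Since the $^*$-algebra generated by $\{f\}\sqcup \{\chi_l\}_{l\in \RR}$ contains both of these families, it is dense in the vector-space sum $C_0(\RR)+\CAP$, which the footnote identifies with the stated algebra $C_0(\RR)\oplus \CAP$. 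For the injective case, the single function $f$ already separates points of $\RR$ and vanishes nowhere, so Stone--Weierstrass applied directly to $\{f\}$ gives density of the $^*$-algebra it generates in $C_0(\RR)$.

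For part \ref{lemma:WeierErzeuger2}), I would pass to the one-point compactification $\RR\sqcup \{\infty\}$ and extend $f$ to $\ovl{f}$ by setting $\ovl{f}(\infty):=0$. Continuity at $\infty$ is exactly the $C_0$-condition on $f$, and continuity elsewhere is clear. Both hypotheses are needed for injectivity: $f$ is injective on $\RR$, and $0\notin \im[f]$ since $f$ vanishes nowhere, so the value $0$ is attained only at $\infty$. A continuous injection from a compact space into a Hausdorff space is automatically a homeomorphism onto its image, hence $\ovl{f}$ is a homeomorphism from $\RR\sqcup \{\infty\}$ onto $\im[f]\sqcup \{0\}$. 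Restricting the inverse to $\im[f]$ then shows that $f^{-1}$ is continuous.

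Neither step presents a genuine obstacle; the only point deserving a moment's attention is that the symbol $\oplus$ in $C_0(\RR)\oplus \CAP$ denotes a vector-space direct sum of subspaces of $B(\RR)$ rather than a $C^*$-algebraic direct sum, which is why it suffices to exhibit generators for each summand separately, with no independent compatibility check between the $C_0(\RR)$ and $\CAP$ parts.
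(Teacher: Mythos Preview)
Your proposal is correct and follows essentially the same approach as the paper: Stone--Weierstrass via the products $f\chi_l$ for part \ref{lemma:WeierErzeuger1}), and the one-point compactification extension $\ovl{f}(\infty):=0$ for part \ref{lemma:WeierErzeuger2}). Your write-up is in fact slightly more explicit than the paper's in justifying injectivity of $\ovl{f}$ and continuity at $\infty$.
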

The next definition basically adapts (and collects) some of the notations we have already introduced in the previous subsection.
\begin{definition}
  \label{def:ProjLimit}
  Assume that $f\in C_0(\mathbb{R})$ is injective and $f(x)\neq 0$ for all $x\in \mathbb{R}$.
  \begin{enumerate}
  \item 
  \label{def:ProjLim1}
    Let $I$ denote the set of all finite tuples $L=(l_1,{\dots},l_k)$ consisting of $\mathbb{Z}$-independent real numbers $l_1,{\dots},l_k$, and let $|L|$ denote the length $k$ of the tuple $L$.
  \item
  \label{def:ProjLim2}
    For $L,L'\in I$ define $L\leqZ L'$ iff $l_i \in \spann_{\mathbb{Z}}(l'_1,{\dots},l'_{k'})$ for all $1\leq i\leq k$. 
  \item
  \label{def:ProjLim3}
    For $L\in I$ and $k:=|L|$ define $\pi_L\colon \qR \rightarrow \prfl{f}{L}=: X_L$ by
    \begin{equation}
      \label{eq:PiL}
      \pi_L(\ovl{x}) := 
      \begin{cases} 
        f(\ovl{x})  & \mbox{if }\:\:\: \ovl{x}\in \mathbb{R}\\
        (\ovl{x}(\chi_{l_1}),{\dots},\ovl{x}(\chi_{l_k})) &\mbox{if }\:\:\: \ovl{x}\in \RB,
      \end{cases} 
    \end{equation}
    and equip $X_L$ with the final topology w.r.t.\ this map.
  \item
    For $L,L'\in I$ with $L\leqZ L'$ define $\pi_L^{L'}\colon X_{L'}\rightarrow X_L$ by $\pi_L^{L'}(y):=y$ for $y\in \im[f]$ as well as
    \begin{align}
      \label{eq:transit}
      \pi^{L'}_{L} (s_1,{\dots},s_{k'}):= \left(\prod_{i=1}^{k'}{s_i}^{n^i_1} ,{\dots}, \prod_{i=1}^{k'}{s_i}^{n^i_{j}} \right)
      \qquad\text{ if }\qquad l_j=\sum_{i=1}^{k'} n^i_jl'_i
    \end{align}
    with $n_j^i\in \mathbb{Z}$ for $1\leq j\leq k=|L|$, $1\leq i\leq k'=|L'|$ and $(s_1,{\dots}, s_{k'})\in \TO^{|L'|}$.
  \end{enumerate} 
\end{definition}
We now show 
that $\qR$ is indeed a projective limit of $\{X_L\}_{L\in I}$. Moreover, we determine the Borel $\sigma$-algebras of the spaces $X_L$. This will lead to an analogous decomposition of finite Radon measures on the $X_L$ as we have for the space $\qR$. Here, the crucial point will be to show that the subspace topologies of $\im[f]$ and $\TO^{|L|}$ w.r.t.\ the final topology on $X_L$ are just their canonical ones.
For this, we will need the following definitions and facts:
\begingroup
\setlength{\leftmargini}{14pt}
\begin{itemize}
\item
  Let $\T_f$ and $\T_{L}$ denote the standard topologies
  on $\im[f]$ and $\TO^{|L|}$, respectively, i.e., the subspace topology on $\im[f]$ inherited from $\mathbb{R}$ and product topology on $\TO^{|L|}$.
\item
  \itspacecc 
  For $L\in I$ let $\pih_L \colon \RB \rightarrow \TO^{|L|}$ denote the restriction of $\pi_L$ to $\RB$. 
\item    
  \itspacecc
  Analogously, let $\pih_L^{L'} \colon \TO^{|L'|} \rightarrow \TO^{|L|}$ denote the restriction of $\pi^{L'}_{L}$ to $\TO^{|L|}$ if $L,L'\in I$ with $L\leqZ L'$.
\item
  \itspacecc
   For $q\in \Q_{\neq 0}$ define $\chi_{l,q}:=\chi_{l\slash q}$.
\item
  \itspacecc
  Since each $L\in I$ consists of 
  $\mathbb{Q}$-independent reals, we find (and fix) a subset $L^\perp\subseteq \mathbb{R}$ such that $\ovl{L}:=L\sqcup L^\perp$ is a $\mathbb{Q}$-base of $\mathbb{R}$. It is clear that, together with the constant function $1=\chi_{0}$, the functions $\left\{\chi_{l,n}\right\}_{(l,n) \in \ovl{L} \times \NN_{>0}}$ generate a dense $^*$-subalgebra 
  of $\CAP$.   
\item 
  \itspacecc
  For $p\in \mathbb{N}_{\geq 1}$ and $A\subseteq \TO$ define $\hat{p}\colon \TO\rightarrow \TO$,\:\: $s\mapsto s^p$ and let
  \begin{align*}
    \sqrt[p]{A}:=\{s\in T\:|\: s^p\in A\}\qquad\text{as well as}\qquad A^p:=\{s^p\:|\: s\in A\}.
  \end{align*}   	
  If $\OO\subseteq \TO$ is open, then $\OO^p$ and $\sqrt[p]{\OO}=\hat{p}^{-1}(\OO)$ are open as well. This is because $\hat{p}$ is open (inverse function theorem) and continuous.  
\item
  \itspacecc
  For $A\subseteq \TO$ and $m\in \ZN$, we define
  \begin{equation*}
    A^{\sign(m)}:=
    \begin{cases}
      A & \mbox{if }m >0\\
      \{\ovl{z}\:|\: z\in A\} &\mbox{if }m <0.
    \end{cases}
  \end{equation*}
\end{itemize}
\endgroup
\noindent
The next lemma highlights the relevant properties of the maps $\pih_L$.
\begin{lemma}
  \label{prop:OpenMapp}
  Let $L=(l_1,{\dots},l_k)\in I$.
  \begin{enumerate}
  \item 
  \label{lemma:OpenMapp1}
    Let $\psi \in \RB$, $q_i\in \Q$ and $s_i\in \TO$ for $1\leq i\leq k$. Then, we find $\psi'\in \RB$ with 
    \begin{align*}    
    \psi'(\chi_{l_i,q_i})=s_i\quad  \forall\:1\leq i\leq k\qquad\quad\text{and}\qquad\quad \psi'(\chi_{l})=\psi(\chi_{l})\quad\forall\: l \in \spann_{\mathbb{Q}}(L^\perp). 
    \end{align*}
  \item
  \label{lemma:OpenMapp2}
    Let $l\in \RR$, $m_i\in \ZN$ and $\OO_i\subseteq \TO$ open for $1\leq i\leq n$. Then, for $m:=|m_1\cdot{\cdots} \cdot m_n|$ and $p_i:=|\frac{m}{m_i}|$ we have
    \begin{align}
      \label{eq:WurzelUrbild}
      \bigcap_{i=1}^n \chih^{-1}_{l,m_i}(\OO_i)=\bigcap_{i=1}^n \chih^{-1}_{l,m}\left(\left[\!\sqrt[p_i]{\OO_i}\:\right]^{\sign(m_i)}\right)
      =\chih^{-1}_{l,m}\left(\OO\right).
    \end{align} 
    for the open subset $\OO=\displaystyle\left[\!\sqrt[p_1]{\OO_1}\:\right]^{\sign(m_1)} \cap {\dots} \cap \left[\!\sqrt[p_n]{\OO_n}\:\right]^{\sign(m_n)}\subseteq \TO$.
  \item  
  \label{lemma:OpenMapp3}
     Let $A_i, B_j\subseteq \TO^1$ for $1\leq i\leq k$, $1\leq j\leq q$, with $B_1,{\dots},B_q\neq \emptyset$. Moreover, let $h_1,{\dots},h_q \in \mathbb{R}$ such that $l_1,{\dots},l_k$, $h_1,{\dots},h_q$ are $\mathbb{Z}$-independent. For $m_1,{\dots},m_k, n_1,{\dots},n_q \in \ZN$ let
     \begin{align}
     \label{eq:BaseRbohr}
     \!\!\!\!W:=\underbrace{\chih_{l_1,m_1}^{-1}(A_1)\cap{\dots} \cap \chih_{l_k,m_k}^{-1}(A_k)}_{U}\:\cap\: \underbrace{\chih_{h_1,n_1}^{-1}(B_1)\cap {\dots} \cap \chih_{h_q,n_q}^{-1}(B_q)}_{U'}.
     \end{align}
     Then, $\pih_L(W)= A_1^{m_1} \times {\dots} \times A_k^{m_k}$ holds.
  \item
  \label{lemma:OpenMapp4}
    The map $\pih_L$ is surjective, continuous and open.
  \end{enumerate}
  \end{lemma}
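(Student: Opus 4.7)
For part \ref{lemma:OpenMapp1}, the idea is to work on the level of $\DDD$ via the bijection $\res$, so that $\psi'$ is determined by a group homomorphism $(\RR,+)\to (\TO,\cdot)$, $l\mapsto \psi'(\chi_l)$. Since $\ovl{L}=L\sqcup L^\perp$ is a $\mathbb{Q}$-basis of $\RR$, one has the $\mathbb{Q}$-vector-space decomposition $\RR=\mathrm{span}_\mathbb{Q}(L)\oplus \mathrm{span}_\mathbb{Q}(L^\perp)$, and because $q_i\neq 0$ the elements $l_i/q_i$ are still $\mathbb{Q}$-independent, hence a $\mathbb{Q}$-basis of $\mathrm{span}_\mathbb{Q}(L)$. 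Fixing lifts $\tilde s_i\in \RR$ with $\mathrm{e}^{2\pi\mathrm{i}\tilde s_i}=s_i$, I would set
\begin{align*}
\psi'(\chi_l):=\mathrm{e}^{2\pi\mathrm{i}\sum_{i=1}^k r_i \tilde s_i}\cdot \psi(\chi_{l^\perp})\quad\text{for}\quad l=\sum_{i=1}^k r_i\cdot (l_i/q_i)+l^\perp,\; r_i\in \mathbb{Q},\; l^\perp\in \mathrm{span}_\mathbb{Q}(L^\perp).
\end{align*}
Uniqueness of the decomposition makes $\psi'$ a well-defined group homomorphism; it satisfies $\psi'(\chi_{l_i,q_i})=s_i$ and agrees with $\psi$ on $\chi_l$ for $l\in \mathrm{span}_\mathbb{Q}(L^\perp)$, so $\res^{-1}(\psi')$ is the required element of $\RB$.

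Part \ref{lemma:OpenMapp2} is a short calculation: from $l/m_i=\mathrm{sign}(m_i)p_i\cdot (l/m)$ one has $\chi_{l/m_i}=\chi_{l/m}^{\mathrm{sign}(m_i)p_i}$ pointwise, so for $s:=\psi(\chi_{l/m})$ the condition $\psi(\chi_{l/m_i})\in \OO_i$ is equivalent to $s^{\mathrm{sign}(m_i)p_i}\in \OO_i$, i.e.\ to $s\in \bigl[\sqrt[p_i]{\OO_i}\bigr]^{\mathrm{sign}(m_i)}$ (using $\bar s=s^{-1}$ on $\TO$); intersecting over $i$ yields both equalities in \eqref{eq:WurzelUrbild}. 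Part \ref{lemma:OpenMapp3} then splits into two inclusions: the forward one is immediate from $\psi(\chi_{l_i})=\psi(\chi_{l_i/m_i})^{m_i}\in A_i^{m_i}$. For the reverse, given $(t_1,\dots,t_k)\in A_1^{m_1}\times\dots\times A_k^{m_k}$, I would pick $a_i\in A_i$ with $a_i^{m_i}=t_i$ and any $b_j\in B_j$ (available since $B_j\neq\emptyset$), then apply part \ref{lemma:OpenMapp1} to the tuple $(l_1,\dots,l_k,h_1,\dots,h_q)$, which is $\mathbb{Z}$-independent by assumption, with prescribed values $a_i$ and $b_j$, to produce $\psi'\in W$ with $\pih_L(\psi')=(a_1^{m_1},\dots,a_k^{m_k})=(t_1,\dots,t_k)$.

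For part \ref{lemma:OpenMapp4}, surjectivity is the case of part \ref{lemma:OpenMapp3} with $A_i=\TO$, $m_i=1$, and no $h_j$'s, while continuity is immediate since each coordinate $\psi\mapsto \widehat{\chi_{l_i}}(\psi)$ is continuous on $\RB$. The main obstacle is openness. Here the plan is to first observe that, because $\ovl{L}$ is a $\mathbb{Q}$-basis of $\RR$, any character $\chi_{l'}$ can be written as a finite product of integer powers of $\chi_{l_\alpha/r}$'s with $l_\alpha\in \ovl{L}$ and $r\in \ZN$ (choose a common denominator), so the Gelfand topology on $\RB$ admits the subbase $\{\chih_{l_\alpha,r}^{-1}(\OO):l_\alpha\in \ovl{L},\,r\in \ZN,\,\OO\subseteq \TO\text{ open}\}$. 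An arbitrary basic open set is then a finite intersection of such, and part \ref{lemma:OpenMapp2} lets me merge all factors sharing the same $l_\alpha$ into a single preimage. Separating the remaining distinct $l_\alpha$'s into those in $L$ (contributing $l_i$, $m_i$, $A_i$) and those in $L^\perp$ (contributing $h_j$, $n_j$, $B_j$), and padding with trivial $A_i=\TO$, $m_i=1$ on unconstrained $l_i\in L$, brings the basic open set precisely into the form \eqref{eq:BaseRbohr} covered by part \ref{lemma:OpenMapp3}. Its image is then $A_1^{m_1}\times\dots\times A_k^{m_k}\subseteq \TO^{|L|}$, open because the power map $s\mapsto s^{m_i}$ on $\TO$ is open by the inverse function theorem (as noted above the lemma). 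The delicate point is this subbase reduction together with the bookkeeping in the sorting step; once both are in place, part \ref{lemma:OpenMapp3} closes the argument at once.
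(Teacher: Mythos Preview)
Your proposal is correct and follows essentially the same route as the paper: in part~\ref{lemma:OpenMapp1} the paper also builds $\psi'$ as a homomorphism on $\RRD$ via the $\mathbb{Q}$-basis $\ovl{L}$ (phrasing it through evaluation points $x_\alpha$ rather than lifts $\tilde s_i$, which is the same device), parts~\ref{lemma:OpenMapp2} and~\ref{lemma:OpenMapp3} are argued identically, and for part~\ref{lemma:OpenMapp4} the paper likewise uses the subbase $\{\chih_{l,m}^{-1}(\OO):(l,m)\in \ovl{L}\times\ZN\}$, merges via part~\ref{lemma:OpenMapp2}, pads with $\chih_{l_i}^{-1}(\TO)$, and finishes with part~\ref{lemma:OpenMapp3}. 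Your write-up is slightly more streamlined (you collapse the paper's two-step argument in part~\ref{lemma:OpenMapp3} into one application of part~\ref{lemma:OpenMapp1} with the extended tuple), but no new idea is involved.
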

  \begin{proof}
    \begin{enumerate}
    \item
        We choose $\{x_\alpha\}_{\alpha\in I}\subseteq \RR$ such that $\chi_{q_\alpha\cdot l_\alpha}(x_\alpha)=s_\alpha$ holds for all $\alpha\in I$, and define
        \begin{align*}
        \zeta(\chi_{q\cdot l_\alpha}):=\chi_{q\cdot l_\alpha}(x_\alpha)\: \text{ if }\:\alpha\in I \qquad\quad\text{as well as}\qquad\quad \zeta(\chi_{q\cdot l}):=\psi(\chi_{q\cdot l})\: \text{ if }\:l\in L^\perp
        \end{align*}
        for all $q\in \mathbb{Q}$. 
        Then, for $l\in \RR$ arbitrary, we have a unique representation of the form 
        \begin{align*}    
        l=\textstyle\sum_{i=1}^k q_i\: l_{\alpha_i} + \textstyle\sum_{j=1}^{k'} q_j'\: l'_j
        \end{align*} 
        with $l'_j \in L^\perp$ and $q_i,q'_j \in \Q$ for $1\leq i\leq k$, $1\leq j\leq k'$. Here, we define
        \begin{align*}
        \zeta\!\left(\chi_{l}\right):=\textstyle\prod_{i=1}^k\zeta\big(\raisebox{1pt}{$\chi_{q_i\cdot l_{\alpha_i}}$}\big) \cdot \textstyle\prod_{j=1}^q\zeta\big(\raisebox{1pt}{$\chi_{q'_j\cdot l'_j}$}\big) \qquad\quad\text{as well as}\qquad\quad \zeta(1):=1.
        \end{align*}
         It is easy to see that $\zeta'\colon \RRD\rightarrow \TO$ is a well-defined homomorphism which, due to the last part of Convention \ref{conv:Boundedfunc}, gives rise to a well-defined element $\psi'\in \RB$. By construction, $\psi'$ has the desired properties. 
    \item 
      Obviously, $\OO$ is open, and since
      the second equality in \eqref{eq:WurzelUrbild} is clear, it suffices to show that
      \begin{align*}
        \chih^{-1}_{l,m}(A)=\chih^{-1}_{l,p\cdot |m|}\left(\big[\!\raisebox{0ex}{$\sqrt[p]{A}$}\:\big]^{\sign(m)}\right)
      \end{align*}
      holds for $A\subseteq \TO$, $l\in \RR$, $p \in \mathbb{N}_{\geq 1}$ and $m\in \ZN$. 
      For the inclusion $\supseteq$, we let
      \begin{align*}
      \psi \in \chih^{-1}_{l,p\cdot |m|}\left(\big[\!\raisebox{-0.3ex}{$\sqrt[p]{A}$}\:\big]^{\sign(m)}\right).
	  \end{align*}           
       Then,  
      \begin{align*}
        \psi(\chi_{l,p\cdot |m|}) \in \big[\!\raisebox{-0.3ex}{$\sqrt[p]{A}$}\:\big]^{\sign(m)}\qquad\Longrightarrow\qquad \psi(\chi_{l, m})=\big[\raisebox{-0.15ex}{$\psi(\chi_{l,p\cdot |m|})^{p}$}\big]^{\sign(m)}\in A.
      \end{align*}
      For the converse inclusion, let $\psi \in\chih_{l,m}^{-1}(A)$. Then, 
      \begin{align*}
        \big[\raisebox{-0.2ex}{$\psi(\chi_{l,p\cdot |m|})^{p}$}\big]^{\sign(m)}=\psi(\chi_{l,m})\in A\qquad \Longrightarrow\qquad \psi(\chi_{l,p\cdot |m|})\in \big[\!\raisebox{-0.15ex}{$\sqrt[p]{A}$}\:\big]^{\sign(m)}.
      \end{align*}
    \item
      We proceed in two steps:    
      \begingroup
      \setlength{\leftmarginii}{15pt}
      \begin{itemize}
      	\item
      	\itspace
      	We show that $\pih_L(W)=\pih_L(U)$ holds. For this, it suffices to verify $\pih_L(U)\subseteq \pih_L(W)$ because the converse inclusion is clear from $W\subseteq U$. 
      	So, for $\psi \in U$ we have to show that $\pih_L(\psi) \in \pih_L(W)$. Since $B_j\neq \emptyset$, we find $z_j \in B_{j}$ 
      	for all $1\leq j\leq q$.
      	By \ref{lemma:OpenMapp1}), we find $\psi'\in \RB$ with $\psi'(\chi_{l_i,m_i})=\psi(\chi_{l_i,m_i})\in A_i$ for all $1\leq i\leq k$ and $\psi'(\chi_{h_j,n_j})=z_j\in B_j$ for all $1\leq j\leq q$. This shows $\psi'\in U\cap U'=W$, hence $\pih_L(\psi')\in \pih_L(W)$. Thus,  
      	\begin{align*}
      	\pih_L(\psi)&=(\psi(\chi_{l_1,m_1}),{\dots},\psi(\chi_{l_k,m_k}))
      	=\big(\psi'(\chi_{l_1,m_1}),{\dots},\psi'(\chi_{l_k,m_k})\big)=\pih_L(\psi')\in \pih_L(W).
      	\end{align*}
      	\item
      	We show that $\pih_L(U)=A^{m_1}_1 \times {\dots} \times A^{m_k}_k$ holds. For this, it suffices to verify the inclusion $\supseteq$ as the opposite inclusion is clear from the definitions. For this, let $s_i \in A_i^{m_i}$ for $1\leq i\leq k$ and $z_i\in A_i$ with $z_i^{m_i}=s_i$. Then,   
      	\ref{lemma:OpenMapp1}) provides us with some $\psi'\in \RB$ with $\psi'(\chi_{l_i,m_i})=z_i\in A_i$ for $1\leq i\leq k$. Thus, $\psi'\in U$ and $\psi'(\chi_{l_i})=z_i^{m_i}=s_i \in A_i^{m_i}$ for all $1\leq i\leq k$.
      \end{itemize}
      \endgroup
    \item
     Continuity of $\pih_L$ is clear from 
     \begin{align*}
     \pih_L^{-1}(A_1,{\dots},A_k)= \chih_{l_1}^{-1}(A_1)\cap{\dots} \cap \chih_{l_k}^{-1}(A_k)\qquad\forall\:A_1,{\dots} A_k \subseteq \TO,
     \end{align*}
     and surjectivity is clear from Part \ref{lemma:OpenMapp3}) if we choose $A_1,{\dots},A_k=\TO$.
     
     For openness observe that the $^*$-algebra generated by $1$ and $\{\chi_{l,m}\}_{(l,m) \in \ovl{L} \times \ZN}$ is dense in $\CAP$ as it equals the $^*$-algebra generated by all characters $\chi_l$. Thus, the subsets of the form $\chi_{l,m}^{-1}(\OO)$ with $\OO\subseteq \TO$ open and $(l,m) \in \ovl{L}\times \ZN$ 	
     provide a subbasis for the topology of $\RB$.\footnote{This is because the Gelfand topology on $\RB$ equals the initial topology w.r.t.\ the Gelfand transforms of the elements of each subset of $\CAP$ which generates a dense $*$-algebra of $\CAP$. Consequently, the Gelfand topology on $\RB$ equals the initial topology w.r.t.\ the functions $1$ and $\chi_{l,m}$ for $(l,m) \in \ovl{L}\times \ZN$. 
     	Since the preimage of a subset of $\mathbb{C}$ under $\widehat{1}$ is either empty or $\RB$, the statement follows.} 
     Consequently, a base of this topology  is given by all finite intersections of such subsets. Then, Part \ref{lemma:OpenMapp2}) shows that, in order to obtain a base for the topology of $\RB$, it suffices to consider intersections of the form \eqref{eq:BaseRbohr} with $A_1,{\dots},A_k,B_1,{\dots},B_q$ open in $\TO$. In fact, since $\chih^{-1}_{l,m}(\TO)=\RB$, we can assume that each $l_1,{\dots},l_k$ occurs in each of these intersections. 
     Thus, since $A_i^{m_i}$ is open if $A_i$ is open, Part \ref{lemma:OpenMapp3}) shows that $\pih_L$ is an open map. 
    \end{enumerate}
  \end{proof}
The next lemma collects the crucial properties of the final topology of the spaces $X_L$. In addition to that, the Borel $\sigma$-algebras of these spaces are determined. 
\begin{lemma} 
  \label{lemma:reltop}
  \begin{enumerate}
  \item
    \label{lemma:reltop1}
    The subspace topologies of $\im[f]$ and $\TO^{|L|}$ w.r.t.\ the final topology $\T_F$ on $X_L$ are given by 
    $\T_f$ and $\T_L$, respectively. 
    For a subset $U\subseteq \im[f]$ we have $U\in \T_f$ iff $U$ is open in $X_L$.
  \item
    \label{lemma:reltop2}
    $X_L$ is a compact Hausdorff space.
  \item
    \label{lemma:reltop3}
    We have $\Borel\left(X_L\right)=\Borel(\im[f])\sqcup\BTK$ and 
    \begingroup
    \setlength{\leftmarginii}{20pt}
    \begin{enumerate}
    \item[(a)]
      If $\mu$ is a finite Radon measure on $\Borel(X_L)$, then $\mu|_{\Borel(\im[f])}$ and $\mu|_{\Borel(\TO^{|L|})}$ are finite Radon measures as well. 
    \item[(b)]
      If $\mu_{f}\colon \Borel(\im[f]) \rightarrow [0,\infty)$ and $\mu_{T}\colon \BTK \rightarrow [0,\infty)$ are finite Radon measures, then 
      \begin{align*}
 		(\mu_f\oplus \mu_{\TO})(A):=\mu_{f}(A\cap \im[f])+ \mu_{\TO}\big(\raisebox{-1pt}{$A\cap \TO^{|L|}$}\big)
        \qquad\forall\:  A\in \Borel(X_L)	
      \end{align*}
      is a finite Radon measure on $\Borel\big(X_L\big)$.
    \end{enumerate}
    \endgroup
  \end{enumerate}
  \begin{proof}
    \begin{enumerate}
    \item
      We first collect the following facts: 
      \begin{enumerate}
      \item[(a)]
 	The topology on $\qR$ induces the standard topologies on $\RR$ and $\RB$.
      \item[(b)]
        $U\in \T_F$ iff $\pi_L^{-1}(U)$ is open in $\qR$.
      \item[(c)]
        $W\subseteq \RR$ is open in $\qR$ iff $W$ is open in $\RR$.
      \item[(d)]
     	If $B\subseteq \RB$ is open, then there is $U\subseteq \im[f]$ such that $f^{-1}(U)\sqcup B$ is open in $\qR$.\footnote{By (e), this is equivalent to show that we find an open subset $W\subseteq \RR$ such that $W\sqcup B$ is open in $\qR$. But, this is clear if $B=\chih_l^{-1}(\OO)$ for some open subset $\OO\subseteq \TO$ and $l\in \RR$ (see \textbf{Type 3} sets defined in Subsection \ref{subsec:Rquer}). Now, since the sets of the form  $\chih_l^{-1}(\OO)$ provide a subbasis for the topology on $\RB$, the claim follows.}
      \item[(e)]
	$f\colon \RR\rightarrow \im[f]$ is a homeomorphism.
      \item[(f)]
        $\pih_L\colon \RB\rightarrow \TO^{|L|}$ is continuous and open.
      \end{enumerate}
	\vspace{4pt}      
      
      \noindent
      We show the statements concerning the subspace topologies: 
      
	\vspace{3pt} 
      
      $\boldsymbol{\im[f]}$:\hspace{4pt}Let $U\subseteq \im[f]$. Then:
      \vspace{2pt}
      
      \qquad\quad\:\hspace{10pt}
      $U$ is open w.r.t.\ the topology inherited from $X_L$

      \qquad$\Longleftrightarrow$\:\:\:$\exists\: V\subseteq \TO^{|L|}$ such that $U\sqcup V$ is open in $X_L$

      \qquad$\Longleftrightarrow$\:\:\:$\exists\: V\subseteq \TO^{|L|}$ such that $\pi_L^{-1}(U\sqcup V)$ is open in $\qR$\hspace*{\fill}{(b)}

      \qquad$\Longleftrightarrow$\:\:\:$\exists\: V\subseteq \TO^{|L|}$ such that $f^{-1}(U)\sqcup \pih_L^{-1}(V)$ is open in $\qR$
      
      \qquad$\Longleftrightarrow$\:\:\:$f^{-1}(U)$ is open in $\RR$\hspace*{\fill}{(c)}

      \qquad$\Longleftrightarrow$\:\:\:$U\in \T_f$  \hspace*{\fill}(e)
      
      \vspace{2ex}
      $\boldsymbol{\TO^{|L|}}$:\hspace{4pt}Let $V\subseteq \TO^{|L|}$. Then:
      \vspace{2pt}
      
      \qquad\quad\hspace{10pt}
      $V$ is open w.r.t.\ the topology inherited from $X_L$

      \qquad$\Longleftrightarrow$\:\:\:$\exists\: U\subseteq \im[f]$ such that $U\sqcup V$ is open in $X_L$

      \qquad$\Longleftrightarrow$\:\:\:$\exists\: U\subseteq \im[f]$ such that $\pi_L^{-1}(U\sqcup V)$ is open in $\qR$\hspace*{\fill}{(b)}

      \qquad$\Longleftrightarrow$\:\:\:$\exists\: U\subseteq \im[f]$ such that $f^{-1}(U)\sqcup \pih_L^{-1}(V)$ is open in $\qR$

      \qquad$\Longleftrightarrow$\:\:\:$\pih_L^{-1}(V)$ is open in $\RB$\hspace*{\fill}{(a),(d)}

      \qquad$\Longleftrightarrow$\:\:\:$V\in \T_L$  \hspace*{\fill}(f) 			
      
      \vspace{1ex}
      Finally, observe that $\im[f]$ is open in $X_L$ because $\pi_L^{-1}(\im[f])=\RR$ is open in $\qR$. Thus, $U\subseteq \im[f]$ is open in $X_L$ iff $U$ is open w.r.t.\ the subspace topology on $\im[f]$ inherited from $X_L$. Since this topology equals $\T_f$, the claim follows.
    \item
      The spaces $X_L$ are compact by compactness of $\qR$ and continuity of $\pi_L$. For the Hausdorff property observe that $\T_F$ contains the sets 
      \begin{align*}
        \begin{array}{lcrclcl}
          \textbf{Type 1':} && f(V) & \!\!\!\sqcup\!\!\! & \emptyset 
          && \text{with open $V \subseteq \RR$,} \\
          \textbf{Type 2':} && f(K^c) & \!\!\!\sqcup\!\!\! & \TO^{|L|}
          && \text{with compact $K \subseteq \RR$,} \\
          \textbf{Type 3':} && f\big(\chi_{l_i}^{-1}(\OO)\big) & \!\!\!\sqcup\!\!\! & \pr_i^{-1}(\OO)
          && \text{with $\OO\subseteq \TO$ open, $1\leq i\leq  k$.}
        \end{array}
      \end{align*}
      for $\pr_i\colon \TO^k\rightarrow \TO$,\:\: $(s_1,{\dots},s_k)\mapsto s_i$ the canonical projection map. In fact, as one easily verifies, the preimage of a set of \textbf{Type m'} is a subset of $\qR$ of \textbf{Type m}, cf.\ Subsection \ref{subsec:Rquer}.
       
      Then, by injectivity of $f$, the elements of $\im[f]$ are separated by sets of \textbf{Type 1'}. Moreover, if $x\in \im[f]$ and $(s_1,{\dots},s_k)\in \TO^k$, we can choose a relatively compact neighbourhood $W$ of $f^{-1}(x)$ in $\RR$, and define $U:= f(W)$ as well as $V:=f\raisebox{1pt}{$\big($}\raisebox{-1pt}{${\ovl{W}\hspace{0.9pt}}^c$}\raisebox{1pt}{$\big)$}\sqcup \TO^{|L|}$. Finally, if $(s_1,{\dots},s_k),(s'_1,{\dots},s'_k)\in \TO^k$ are different elements, then $s_i\neq s'_i$ for some $1\leq i\leq k$. Then, for open neighbourhoods $\OO,\OO' \subseteq \TO$ of $s_i$ and $s_i'$, respectively, with $\OO \cap \OO' =\emptyset$, we have 	
	\begin{align*}      
      \big[f\big(\chi_{l_i}^{-1}(\OO)\big) \sqcup \pr_i^{-1}(\OO)\big]\cap \big[f\big(\chi_{l_i}^{-1}(\OO')\big) \sqcup \pr_i^{-1}(\OO')\big]=\emptyset
    \end{align*}
    by injectivity of $f$.
    \item  
      We repeat the arguments from the proof of Lemma \ref{lemma:Radon}:
      
      If $U\subseteq X_L$ is open, then $U\cap \im[f]\in \T_f$ and $U\cap \TO^{|L|}\in \T_L$ 
      by Part \ref{lemma:reltop1}). This shows $U\in \Borel(\im[f])\sqcup\BTK$, i.e., $\Borel\left(X_L\right)\subseteq\Borel(\im[f])\sqcup\BTK$ as the right hand side is a $\sigma$-algebra. For the converse inclusion, recall that $U\in \T_f$ iff $U\subseteq \im[f]$ is open in $X_L$ again by Part \ref{lemma:reltop1}). Thus, $\Borel(\im[f])\subseteq \Borel(X_L)$. Finally, if $A\subseteq \TO^{|L|}$ is closed, then $A$ is compact w.r.t.\ $\T_L$. This means that $A$ is compact w.r.t.\ the subspace topology inherited from $X_L$, implying that $A$ is compact as a subset of $X_L$. Then $A$ is closed by the Hausdorff property of $X_L$, so that $A\in \Borel\left(X_L\right)$, hence $\BTK\subseteq \Borel\left(X_L\right)$. 
      \begingroup
      \setlength{\leftmarginii}{18pt}
      \begin{enumerate}
      \item[(a)]
        The measures $\mu|_{\Borel(\im[f])}$ and $\mu|_{\Borel(\TO^{|L|})}$ are well defined and finite. Their inner regularities follow from the fact that subsets of $\im[f]$ and $\TO^{|L|}$ are compact w.r.t.\ $\T_f$ and $\T_L$, respectively, iff they are so w.r.t.\ the topology on $X_L$, just by Part \ref{lemma:reltop1}).
      \item[(b)]   
        Obviously, $\mu_f\oplus \mu_{\TO}$ is a finite Borel measure, and its inner regularity follows by a simple $\epsilon\slash 2$-argument from inner regularities of $\mu_{f}$ and $\mu_{\TO}$.
      \end{enumerate}
      \endgroup
    \end{enumerate}
  \end{proof}
\end{lemma}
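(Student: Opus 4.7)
The plan is to prove the three parts in order, leveraging the already established topological structure of $\qR$, the homeomorphism property of $f$ (Lemma \ref{lemma:WeierErzeuger}.\ref{lemma:WeierErzeuger2}), and the continuity/openness of $\pih_L$ (Lemma \ref{prop:OpenMapp}.\ref{lemma:OpenMapp4}). Throughout, I would exploit the defining property of the final topology: $U\subseteq X_L$ lies in $\T_F$ iff $\pi_L^{-1}(U)$ is open in $\qR$; and the clean splitting $\pi_L^{-1}(A\sqcup B)=f^{-1}(A)\sqcup \pih_L^{-1}(B)$ for $A\subseteq \im[f]$ and $B\subseteq \TO^{|L|}$.

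For part \ref{lemma:reltop1}, I would chase each subspace-topology statement through a chain of equivalences. For $U\subseteq \im[f]$: $U$ is open in the subspace topology iff some $V\subseteq \TO^{|L|}$ exists with $U\sqcup V\in \T_F$, iff $f^{-1}(U)\sqcup \pih_L^{-1}(V)$ is open in $\qR$, iff $f^{-1}(U)$ is open in $\RR$ (since $\RR$ carries its standard topology inside $\qR$), iff $U\in \T_f$ by the homeomorphism property of $f$. The symmetric argument for $V\subseteq \TO^{|L|}$ uses that for each open $B\subseteq \RB$ one can find an open $W\subseteq \RR$ with $W\sqcup B$ open in $\qR$ (reducing to Type 3 basic sets), combined with openness and continuity of $\pih_L$. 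The last claim, that $U\subseteq \im[f]$ is open in $X_L$ iff $U\in \T_f$, uses that $\im[f]$ is itself open in $X_L$, which holds because $\pi_L^{-1}(\im[f])=\RR$ is a Type 1 open set in $\qR$.

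For part \ref{lemma:reltop2}, compactness follows immediately from compactness of $\qR$, continuity of $\pi_L$, and surjectivity. For the Hausdorff property, I would introduce the analogues \textbf{Type 1'}, \textbf{2'}, \textbf{3'} of the basic open sets of $\qR$, namely $f(V)\sqcup\emptyset$, $f(K^c)\sqcup \TO^{|L|}$, and $f(\chi_{l_i}^{-1}(\OO))\sqcup \pr_i^{-1}(\OO)$, and verify that their $\pi_L$-preimages are precisely the corresponding Type 1, 2, 3 sets in $\qR$, so they lie in $\T_F$. Then Type 1' separates any two points of $\im[f]$ by injectivity of $f$; a Type 2' paired with a Type 1' separates a point of $\im[f]$ from a point of $\TO^{|L|}$ (using local compactness of $\RR$ to choose a relatively compact neighbourhood); and Type 3' separates two distinct points of $\TO^{|L|}$ along any coordinate where they differ.

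For part \ref{lemma:reltop3}, the argument is a direct transcription of the proof of Lemma \ref{lemma:Radon}: the inclusion $\Borel(X_L)\subseteq \Borel(\im[f])\sqcup\BTK$ is because $U\cap\im[f]\in\T_f$ and $U\cap\TO^{|L|}\in\T_L$ for open $U\subseteq X_L$ by part \ref{lemma:reltop1}, while the reverse inclusion follows from $\im[f]$ being open in $X_L$ together with compactness (hence closedness in the Hausdorff space $X_L$) of $\T_L$-closed subsets of $\TO^{|L|}$. Statement (a) uses that compact subsets of $\im[f]$ or $\TO^{|L|}$ are the same whether viewed in their native topologies or as subsets of $X_L$, again by part \ref{lemma:reltop1}; statement (b) is a standard $\epsilon/2$-argument for inner regularity. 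The main obstacle I anticipate is part \ref{lemma:reltop1}, specifically the equivalence for $V\subseteq \TO^{|L|}$, because producing a companion open set $U\subseteq \im[f]$ making $U\sqcup V\in \T_F$ genuinely relies on the combination of Type 3 sets in $\qR$ and Lemma \ref{prop:OpenMapp}.\ref{lemma:OpenMapp4}; the remaining parts are essentially bookkeeping once part \ref{lemma:reltop1} is in hand.
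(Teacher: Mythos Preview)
Your proposal is correct and matches the paper's own proof essentially step for step: the same chain of equivalences in part~\ref{lemma:reltop1} (via facts (a)--(f) in the paper's notation), the same Type~1', 2', 3' sets for the Hausdorff argument in part~\ref{lemma:reltop2}, and the same transcription of the proof of Lemma~\ref{lemma:Radon} for part~\ref{lemma:reltop3}. Even the point you single out as the main obstacle --- producing a companion open $U\subseteq\im[f]$ for a given open $V\subseteq\TO^{|L|}$ via Type~3 sets together with openness and continuity of $\pih_L$ --- is precisely where the paper invokes its facts (d) and (f).
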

Combining the Lemmata \ref{prop:OpenMapp} and \ref{lemma:reltop}, we obtain
\begin{proposition}
  \label{th:projlim}
  \begin{enumerate}
  \item
    $\qR$ is a projective limit of $\{X_L\}_{L\in I}$.
  \item
    A family $\{\mu_L\}_{L \in I}$ of measures $\mu_L$ on $X_L$ is a consistent family of normalized Radon measures w.r.t.\ $\{X_L\}_{L\in I}$ iff the following holds:
    \begingroup
    \setlength{\leftmarginii}{20pt}
    \begin{enumerate}
    \item
      There is $t\in [0,1]$, such that for each $L\in I$ 
      we have
      \begin{align*}	
        \mu_L=t\:\mu_f\hs\oplus\hs (1-t)\:\mu_{\TO,L}
      \end{align*} 
      for $\mu_f$ and $\mu_{\TO,L}$ normalized\footnote{If $t$ equals $0$ or $1$, we allow $\mu_f=0$ or $\mu_{\TO,L}=0$, respectively.} Radon measure on $\im[f]$ and $\TO^{|L|}$, respectively.
    \item
      For all $L,L'\in I$ with $L\leqZ L'$, we have $\pih^{L'}_{L}(\mu_{\TO,L'})=\mu_{\TO,L}$.
    \end{enumerate}
    \endgroup
  \end{enumerate}
\end{proposition}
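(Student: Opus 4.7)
The plan for Part 1 is to check the three axioms of Definition \ref{def:ProjLim}. Each $X_L$ is already known to be compact Hausdorff by Lemma \ref{lemma:reltop}.\ref{lemma:reltop2}. The maps $\pi_L$ are continuous by the very definition of the final topology on $X_L$, and surjective because $\pi_L|_\RR = f$ has image $\im[f]$ while $\pih_L = \pi_L|_\RB$ is surjective onto $\TO^{|L|}$ by Lemma \ref{prop:OpenMapp}.\ref{lemma:OpenMapp4}. The relation $\pi^{L'}_L \cp \pi_{L'} = \pi_L$ then holds because on $\RR$ both sides equal $f$, and on $\RB$ it reduces to the torus identity familiar from Velhinho's construction for $\RB$. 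Continuity of $\pi^{L'}_L$ comes for free from the universal property of the final topology on $X_{L'}$: for $U \subseteq X_L$ open one has $\pi_{L'}^{-1}\bigl((\pi^{L'}_L)^{-1}(U)\bigr) = \pi_L^{-1}(U)$ open in $\qR$, whence $(\pi^{L'}_L)^{-1}(U)$ is open in $X_{L'}$. For the separation axiom I distinguish three cases for $\x \neq \ovl{y}$ in $\qR$: if both lie in $\RR$, injectivity of $f$ does it for every $L$; if both lie in $\RB$, some $\chi_l$ with $l \neq 0$ separates them and I take $L = (l)$; if they lie in different pieces, they are already separated because $\pi_L$ sends $\RR$ and $\RB$ into the disjoint pieces $\im[f]$ and $\TO^{|L|}$ of $X_L$.

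For Part 2, the forward direction starts by applying Lemma \ref{lemma:reltop}.\ref{lemma:reltop3}(a) to decompose each $\mu_L = \mu_{f,L} \oplus \mu_{\TO,L}$ into finite Radon measures on $\im[f]$ and $\TO^{|L|}$ whose masses sum to one. Setting $t_L := \mu_{f,L}(\im[f])$ and using that $(\pi^{L'}_L)^{-1}(\im[f]) = \im[f]$, consistency forces $t_{L'} = t_L$ whenever $L \leqZ L'$, so directedness of $I$ yields a common value $t \in [0,1]$. Consistency restricted to Borel subsets of $\im[f]$ then gives $\mu_{f,L} = \mu_{f,L'}$ (so a single $\mu_f$ on $\im[f]$ works), while restriction to Borel subsets of $\TO^{|L|}$, combined with the identity $(\pi^{L'}_L)^{-1}(B) = \emptyset \sqcup (\pih^{L'}_L)^{-1}(B)$ for $B \subseteq \TO^{|L|}$, yields $\pih^{L'}_L(\mu_{\TO,L'}) = \mu_{\TO,L}$. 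Conversely, starting from data $t$, $\mu_f$, $\{\mu_{\TO,L}\}_L$ of the stated form, the same disjoint-piece computations run in reverse, and Lemma \ref{lemma:reltop}.\ref{lemma:reltop3}(b) guarantees that the assembled $\mu_L = t\,\mu_f \oplus (1-t)\,\mu_{\TO,L}$ is again a normalized Radon measure.

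The main obstacle I expect is not conceptual but bookkeeping around the degenerate cases $t \in \{0,1\}$: when $t = 0$ the measures $\mu_{f,L}$ all vanish, so the ``common'' $\mu_f$ cannot actually be read off from the family and the footnote convention permitting $\mu_f = 0$ (and analogously $\mu_{\TO,L} = 0$ when $t = 1$) has to be invoked to keep the correspondence well-defined. Apart from this, the argument is a straightforward assembly of Lemmata \ref{prop:OpenMapp} and \ref{lemma:reltop} together with the disjoint-union bookkeeping already established for $\qR$ itself in Lemma \ref{lemma:Radon}.
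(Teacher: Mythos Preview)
Your proposal is correct and follows essentially the same route as the paper: Part 1 is assembled from Lemmata \ref{prop:OpenMapp} and \ref{lemma:reltop} exactly as you outline, and Part 2 hinges on the Borel splitting of Lemma \ref{lemma:reltop}.\ref{lemma:reltop3} in both directions. The one noteworthy difference is in the forward direction of Part 2: the paper invokes Lemma \ref{lemma:normRM} to first produce the unique normalized Radon measure $\mu$ on $\qR$ with $\pi_L(\mu)=\mu_L$ and then reads off $\mmu_{f,L}(A)=\mu(f^{-1}(A))$, which is manifestly $L$-independent; you instead argue directly from consistency and directedness of $(I,\leqZ)$ that the $\mu_{f,L}$ coincide pairwise. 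Your route is slightly more elementary (it avoids the Riesz--Markov machinery behind Lemma \ref{lemma:normRM}), while the paper's route makes the link to the limit measure on $\qR$ explicit at this stage already.
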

\begin{proof}
    \begin{enumerate}
    \item
      The spaces $X_L$ are compact and Hausdorff by Lemma \ref{lemma:reltop}.\ref{lemma:reltop2}. 
      Moreover, each  $\pi_L$ is surjective by Lemma \ref{prop:OpenMapp} and surjectivity of $f$.\ref{lemma:OpenMapp4}. 
      If $L,L'\in I$ with $L\leqZ L'$, continuity of $\pi^{L'}_{L}$ is clear from
      $\pi^{L'}_{L}\cp \pi_{L'}=\pi_{L}$, which, in turn, is immediate from  multiplicativity of the functions $\chi_l$. Finally, Condition \ref{def:ProjLim3}) from
      Definition \ref{def:ProjLim} follows from injectivity of $f$ and the fact that the functions $\{\chi_l\}_{l\in \mathbb{R}}$ generate $\CAP$.
    \item
 Let $\{\mu_L\}_{L\in I}$ be a consistent family of normalized Radon measures w.r.t.\ $\{X_L\}_{L\in I}$. Then, Lemma \ref{lemma:reltop}.\ref{lemma:reltop3} shows that for each $L\in I$ we have
      \begin{align*}
 		\mu_L=    \mmu_{f,L}\oplus \mmu_{\TO,L}
      \end{align*}
      for $\mmu_{f,L}$ and $\mmu_{\TO,L}$ finite Radon measures on $\im[f]$ and $\TO^{|L|}$, respectively. Moreover, consistency enforces $\mmu_{f,L}=\mmu_{f,L'}$ for all $L,L'\in I$. In fact, by Lemma \ref{lemma:normRM} there is a unique normalized Radon measure $\mu$ on $\qR$ with $\mu_L=\pi_L(\mu)$ for all $L\in I$. Consequently, for each $A\in \Borel(\im[f])$ and all $L\in I$ we have	  
      \begin{align*}     	
        \mmu_{f,L}(A)=\mu_{L}(A)=\pi_L(\mu)(A)= \mu\big(f^{-1}(A)\big)=:\mmu_f(A).
      \end{align*}
      By the same arguments, (b) follows from the consistencies of the measures $\{\mu_L\}_{L\in I}$. 
      Finally, if $t:= \mmu_f(\im[f])\in (0,1)$, then (a) holds for $\mu_f:=\frac{1}{t}\:\mmu_f$ and $\mu_{\TO,L}:=\frac{1}{1-t}\:\mmu_{\TO,L}$ for $L\in I$.
      If $t=0$, we let $\mu_{\TO,L}:=\mmu_{\TO,L}$ for all $L\in I$ and if $t=1$, we define $\mu_f:=\mmu_f$.

      For the converse implication, let
      $\{\mu_L\}_{L\in I}$ be a family of measures $\mu_L$ on $X_L$ such that (a) and (b) hold. Then, Lemma \ref{lemma:reltop}.\ref{lemma:reltop3} shows that each $\mu_L$ is a finite Radon measure, and obviously we have $\mu_L(X_L)=1$. 
      Finally, from (b), for $A\in \Borel(X_L)$ we obtain 
      \begin{align*}
        \pi^{L'}_{L}(\mu_{L'})(A)&=\mu_{L'}\left(\pillstr^{-1}(A)\right)\\[-4pt]
        &= t\: \mu_f\left(\pillstr^{-1}(A)\cap \im[f]\right)+(1-t)\:\mu_{\TO,L'} \left(\pillstr^{-1}(A)\cap \TO^{|L'|}\right)\\[-4pt]
        &=t\: \mu_f\left(A\cap \im[f]\right)+ (1-t)\:\mu_{\TO,L'} \Big(\!\left(\raisebox{-0.1ex}{$\pih^{L'}_{L}$}\right)^{-1}\left(\raisebox{-0.1ex}{$A\cap \TO^{|L|}$}\right)\!\Big)\\[-4pt]
        &=t\: \mu_f\left(A\cap \im[f]\right)+(1-t)\:\pih^{L'}_{L}(\mu_{\TO,L'}) \big(\raisebox{-1pt}{$A\cap \TO^{|L|}$}\big)\\[-1pt]
        &=t\: \mu_f\left(A\cap \im[f]\right)+(1-t)\:\mu_{\TO,L} \big(\raisebox{-1pt}{$A\cap \TO^{|L|}$}\big)\\[-1pt]
      	&=\mu_{L}(A).
      \end{align*}   
    \end{enumerate}		
\end{proof}

\subsection{Radon Measures on $\boldsymbol{\qR}$}
\label{subsec:CylMeas}
In this final subsection, we will use the results of the previous part in order to fix normalized Radon measures on $\qR$. Due to
Proposition \ref{th:projlim}, this can be done as follows:
\begingroup
\setlength{\leftmargini}{13pt}
\begin{enumerate} 
\item[{\bf 1}]
  Determine a family of normalized Radon measures $\{\mu_{\TO,L}\}_{L\in I}$ on $\TO^{|L|}$ that fulfil condition (b).
\item[{\bf 2}]
  Fix an injective and nowhere vanishing element $f\in C_0(\RR)$ with suitable image, together with a normalized Radon measure $\mu_f$ on $\im[f]$. 
\item[{\bf 3}]
  Adjust $t\in [0,1]$. 
\end{enumerate}
\endgroup
\vspace{3pt}

\noindent
{\bf Step 1}
\newline
\vspace{-2.3ex}
\newline
We choose 
$\mu_{\TO,L}$ to be the Haar measure $\mu_{|L|}$ on $\TO^{|L|}$ because
\begingroup
\setlength{\leftmargini}{18pt}
\begin{itemize}
\item
  \itspace
  This is canonical from the mathematical point of view and in analogy to the case $\RB$ \cite{Vel}, where this choice
  results in the Haar measure on this space, cf.\ Subsection \ref{subsec:Motivation}.
\item
  \itspace
  These measures will suggest a natural choice of $f$ and $\mu_f$ in \textbf{Step 2}.
\end{itemize}
\endgroup
\begin{lemma}
  \label{lemma:Projm}
  Let $\mu_f\colon \Borel(\im[f])\rightarrow [0,1]$ be a normalized Radon measure and $t\in[0,1]$. For each $L\in I$ let 
  \begin{align*}
 	 \mu_L:= t\: \mu_f\hs\oplus\hs (1-t)\: \mu_{|L|}. 
  \end{align*} 
  Then, $\{\mu_L\}_{L\in I}$ is a consistent family of normalized Radon measures, and the corresponding normalized Radon measure on $\qR$ is given by 
  \begin{align} 
    \label{eq:Radmeas}
    \mu:= t\: f^{-1}(\mu_f)\hs \oplus\hs (1-t)\:\muB.
  \end{align} 
  \end{lemma}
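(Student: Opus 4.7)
The plan is to verify conditions (a) and (b) of Proposition~\ref{th:projlim} for the family $\{\mu_L\}_{L\in I}$, and then to check that the explicit $\mu$ in \eqref{eq:Radmeas} satisfies $\pi_L(\mu)=\mu_L$ for every $L\in I$. The uniqueness clause of Lemma~\ref{lemma:normRM} will then identify $\mu$ with the projective-limit measure.

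Condition (a) is built into the very definition of $\mu_L$. For (b), I have to show that $\pih^{L'}_{L}$ pushes the Haar measure on $\TO^{|L'|}$ forward to the Haar measure on $\TO^{|L|}$ whenever $L\leqZ L'$. By \eqref{eq:transit}, $\pih^{L'}_L$ is a continuous group homomorphism between compact tori, and it is surjective: from $\pih_L=\pih^{L'}_L\cp \pih_{L'}$ together with surjectivity of $\pih_L$ and $\pih_{L'}$ furnished by Lemma~\ref{prop:OpenMapp}.\ref{lemma:OpenMapp4}, surjectivity of $\pih^{L'}_L$ follows. The pushforward is therefore a normalized, translation-invariant Borel probability measure on the target torus, hence equals $\mu_{|L|}$ by uniqueness of Haar measure. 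The same standard argument applied to $\pih_L$ itself (again continuous, surjective, and a group homomorphism with respect to the structures on $\RB$ fixed in Convention~\ref{conv:Boundedfunc}) delivers the auxiliary identity $\pih_L(\muB)=\mu_{|L|}$ that will be needed below; this is exactly the mechanism underlying the Bohr construction in \cite{Vel}.

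For the identification of the limit measure, fix $L\in I$ and $A\in \Borel(X_L)$. By Lemma~\ref{lemma:reltop}.\ref{lemma:reltop3} I may split $A=A_f\sqcup A_\TO$ with $A_f\in \Borel(\im[f])$ and $A_\TO\in \BTK$, and from \eqref{eq:PiL} the preimage decomposes cleanly as $\pi_L^{-1}(A)=f^{-1}(A_f)\sqcup \pih_L^{-1}(A_\TO)\subseteq \RR\sqcup \RB$. Using Lemma~\ref{lemma:Radon} together with $f^{-1}(\mu_f)(f^{-1}(A_f))=\mu_f(A_f)$ (which is immediate from bijectivity of $f\colon \RR\to \im[f]$) and $\pih_L(\muB)=\mu_{|L|}$, I obtain
\begin{align*}
  \mu\bigl(\pi_L^{-1}(A)\bigr)
  = t\:\mu_f(A_f)\:+\:(1-t)\:\mu_{|L|}(A_\TO)
  = \mu_L(A),
\end{align*}
so $\pi_L(\mu)=\mu_L$ for every $L\in I$, and Lemma~\ref{lemma:normRM} yields \eqref{eq:Radmeas}. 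The only genuinely analytic input is the Haar-pushforward statement used for $\pih^{L'}_L$ and $\pih_L$; everything else is bookkeeping with the $\sigma$-algebra decompositions from Lemma~\ref{lemma:Radon}.\ref{lemma:Radon1} and Lemma~\ref{lemma:reltop}.\ref{lemma:reltop3}, and I do not anticipate a real obstacle beyond keeping the two summands separated.
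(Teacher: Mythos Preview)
Your proof is correct and follows essentially the same route as the paper: the key analytic step in both is that $\pih_L(\muB)=\mu_{|L|}$, proved via translation invariance of the pushforward along a surjective continuous group homomorphism, after which the identification $\pi_L(\mu)=\mu_L$ is bookkeeping with the $\sigma$-algebra splittings. The only organizational difference is that you verify consistency condition (b) of Proposition~\ref{th:projlim} separately via $\pih^{L'}_L$, whereas the paper observes that once $\pi_L(\mu)=\mu_L$ is shown for the explicit $\mu$, consistency of $\{\mu_L\}_{L\in I}$ comes for free (pushforwards of a single measure are automatically consistent), so the separate check of $\pih^{L'}_L(\mu_{|L'|})=\mu_{|L|}$ can be omitted.
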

  \begin{proof}
      Let $L\in I$, $A \in \Borel(X_L)$ and $\mu$ be defined by \eqref{eq:Radmeas}. Then,
    \begin{align*}
      \pi_L(\mu)(A)&=t\: f^{-1}(\mu_f)\big(f^{-1}(A\cap \im[f])\big) + (1-t)\: \mu_{\mathrm{Bohr}}\big(\pih_L^{-1}\big(\raisebox{-1pt}{$A\cap \TO^{|L|}$}\big)\big)\\
      &= t\:\mu_f(A\cap \im[f])+ (1-t)\:\pih_L(\mu_{\mathrm{Bohr}})\big(\raisebox{-1pt}{$A\cap \TO^{|L|}$}\big).
    \end{align*}
    Thus, if we know that 
    $\pih_L(\mu_{\mathrm{Bohr}})=\mu_{|L|}$ holds for all $L\in I$, the claim follows. In fact, then consistency of $\{\mu_L\}_{L\in I}$ is automatically fulfilled because $\mu$ is a well-defined normalized Radon measure on $\qR$.
    
    Now, in order to show $\pih_L(\mu_{\mathrm{Bohr}})=\mu_{|L|}$, it suffices to show the translation invariance of the normalized Radon measure $\pih_L(\mu_{\mathrm{Bohr}})$.
    For this, let $\tau \in \TO^{|L|}$. Then, by surjectivity of $\pih_L$ we find $\psi \in \RB$ with $\pih_L(\psi)=\tau$. Since $\pih_L$ is a homomorphism w.r.t.\ the group structure on $\RB$, for $A\subseteq \TO^{|L|}$ we have
    \begin{align*}
      \pih_L\left(\psi+ \pih_L^{-1}(A)\right)=\pih_L(\psi)\cdot \pih_L\left(\pih_L^{-1}(A)\right)=\tau \cdot A.
    \end{align*}
    Then, applying $\pih_L^{-1}$ to both sides gives $\psi+ \pih_L^{-1}(A)\subseteq \pih_L^{-1}(\tau \cdot A)$. For the opposite inclusion, let $\psi'\in \pih_L^{-1}(\tau\cdot A)$. Then, $\psi'- \psi \in \pih_L^{-1}(A)$ because 
    $\pih_L(\psi'- \psi)=\tau^{-1}\cdot \pih_L(\psi')\in A$. Consequently, $\psi'\in \psi+ \pih_L^{-1}(A)$, hence $\pih_L^{-1}(\tau\cdot A)\subseteq \psi+ \pih_L^{-1}(A)$. Thus, $\psi+ \pih_L^{-1}(A)=\pih_L^{-1}(\tau\cdot A)$ so that
    \begin{align*}
      \pih_L(\mu_{\mathrm{Bohr}})(\tau\cdot A)&=\mu_{\mathrm{Bohr}}\left(\pih_L^{-1}(\tau\cdot A)\right)=\mu_{\mathrm{Bohr}}\left(\psi +\pih_L^{-1}(A)\right)\\
      &= \mu_{\mathrm{Bohr}}\left(\pih_L^{-1}(A)\right)=\pih_L(\mu_{\mathrm{Bohr}})(A)
    \end{align*}
    for all $A\in \BTK$ shows that $\pih_L(\mu_{\mathrm{Bohr}})$ is translation invariant. 
  \end{proof}
\vspace{-2pt}
\noindent
{\bf Step 2}
\newline
\vspace{-2.3ex}
\newline
If $f,f'\in C_0(\mathbb{R})$ both are injective and vanish nowhere, the respective projective structures from Definition \ref{def:ProjLimit} are equivalent in the sense that the corresponding spaces $X_L, X'_L$ are homeomorphic via the structure preserving maps $\Omega_L\colon X_L\rightarrow X'_L$ defined by
\begin{align*}
	 \qquad \Omega_L|_{\TO^{|L|}}:=\id_{\TO^{|L|}}\quad\qquad\text{and}\quad\qquad \Omega_L|_{\im[f]}:=f'\cp f^{-1}.
\end{align*}
 Moreover, if $\mu_f$ is a normalized Radon measure on  $\im[f]$, then $\mu_{f'}:=\big(f'\cp f^{-1}\big)(\mu_f)$ is a normalized Radon measure on $\im[f']$, and it is clear from \eqref{eq:Radmeas} that the corresponding Radon measures $\mu,\mu'$ on $\qR$
from Lemma \ref{lemma:Projm} coincide. 

All this makes sense because 
in contrast to $\CAP$, where we have the canonical generators $\{\chi_l\}_{l\in \mathbb{R}}$, there is no distinguished (nowhere vanishing, injective) generator $f$ in $C_0(\mathbb{R})$. But, this also means that, in order to fix some measure on $\qR$, we can restrict to functions $f$ with a reasonable image; such as the ``shifted'' unitcircle $\Ts:=1+ \TO\backslash\{-1\}\subseteq \mathbb{C}$. In fact, here the analogy to $\TO^{|L|}$ suggests to take the Haar measure $\mu_1$ on $\TO$. Thus, in the sequel, we will restrict to 
\begin{align*}
	\F:=\{f\in C_0(\RR)\:|\: \im[f]=\Ts\} 
\end{align*}	
	 whereby for each $f\in \F$ we define $\mu_f:=\mus\colon \Borel\big(\Ts\big)\rightarrow [0,1]$. Here, 
\begin{align*}	
	 \mus(A):=\mu_1(A-1)=+_1(\mu_1)\qquad\forall\: A\in \Borel\big(\Ts\big)
\end{align*}	 
with $+_1\colon \TO\backslash\{-1\}\ni z\mapsto z+1\in \Ts$.
	  It follows that  
\begin{align}
  \label{eq:LebesgueHomeo}
  \big\{f^{-1}(\mu_f)\:\big|\: f\in \F\big\}=\big\{\rho(\muL)\:\big|\: \rho\in \HHH\big\}
\end{align}	
for $\HHH$ the set of homeomorphisms $\rho\colon (0,1)\rightarrow \RR$, $\lambda$ the Lebesgue measure on $\RR$, and $\rho(\lambda)$ the push forward of $\lambda|_{\Borel((0,1))}$ by $\rho$.
\newline
\vspace{-8pt}
\newline
{\bf Proof of \eqref{eq:LebesgueHomeo}:} 
We consider the function $h\colon(0,1]\ni t\mapsto e^{\I\hspace{1pt} 2\pi[t-1\slash 2]} \in \TO$. Then,
$\mu_1=h(\lambda)|_{\Borel(\TO)}$, and for $f\in \F$ we have $f^{-1}(\mu_f)= \rho(\lambda)$ for $\HHH\ni \rho:=f^{-1}\cp+_1\cp h|_{(0,1)}$. Conversely, if $\rho \in \HHH$, then $\rho(\lambda)=f^{-1}(\mus)$ for $\F\ni f:=+_1\cp h\cp \rho^{-1}$.\hspace*{\fill}{\scriptsize$\blacksquare$}
\newline
\vspace{-4pt}
\newline
So, if we restrict to projective structures arising from elements $f\in \F$, Lemma \ref{lemma:Projm} and \eqref{eq:LebesgueHomeo} select the normalized Radon measures 
\begin{align*}
  \mu_{\rho,t}= t\:\rho(\muL)\:\oplus\: (1-t)\:\mu_{\mathrm{Bohr}}
\end{align*}
for $\rho\colon (0,1)\rightarrow \mathbb{R}$ a homeomorphism and $t\in [0,1]$. 
\vspace{10pt}
  
\noindent
{\bf Step 3}
\vspace{6pt}

\noindent
To adjust the parameter $t\in[0,1]$, we take a look at the $L^2$-Hilbert space  $\Hil_{\rho,t}$ that correspond to $\mu_{\rho,t}$. 
\begin{lemma}
  \label{lemma:techlemma}
  For $A\in \BRq$, let $\chi_A$ denote the corresponding characteristic function.
  \begin{enumerate}
  \item
    \label{lemma:techlemma1}
    If $\rho_1,\rho_2\colon (0,1)\rightarrow \RR$ are homeomorphisms and $t_1,t_2 \in (0,1)$, then 
    \begin{align*}
      \begin{split}
        \varphi \colon \Lzw{\qR}{\mu_{\rho_1,t_1}}&\rightarrow \Lzw{\qR}{\mu_{\rho_2, t_2}}\\
        \psi & \mapsto  \sqrt{\frac{t_1}{t_2}} \:(\chi_{\RR}\cdot \psi)\cp \big(\rho_1\cp \rho_2^{-1}\big) + \sqrt{\frac{(1-t_1)}{(1-t_2)}}\:\chi_{\RB}\cdot \psi
      \end{split}
    \end{align*}
    is an isometric isomorphism. The same is true for 
    \begin{align*}  
      &\varphi \colon \Hil_{\rho_1,1}\rightarrow \Hil_{\rho_2,1},\quad \psi \mapsto (\chi_{\RR}\cdot \psi)\cp \big(\rho_1\cp \rho_2^{-1}\big),\\ 
      &\varphi \colon \Hil_{\rho_1,0}\rightarrow \Hil_{\rho_2,0},\quad \psi \mapsto  \psi. 
    \end{align*}
  \item
    \label{lemma:techlemma2}
    If $t=1$, then $\Hil_{\rho,1}\cong \Lzw{\RR}{\rho(\muL)}\cong \Lzw{\RR}{\muL}$ for each $\rho\in H$, whereby $\cong$ means canonically isometrically isomorphic.
  \end{enumerate}
  \end{lemma}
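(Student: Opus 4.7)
My plan is to exploit the orthogonal direct-sum decomposition induced by the partition $\qR=\RR\sqcup \RB$. Concretely, for every $\rho\in\HHH$ and $t\in[0,1]$, multiplication by the characteristic functions $\chi_\RR$ and $\chi_\RB$ yields
\begin{align*}
\Lzw{\qR}{\mu_{\rho,t}}=\Lzw{\RR}{t\hs\rho(\muL)}\oplus\Lzw{\RB}{(1-t)\hs\muB},
\end{align*}
with the convention that a zero weight contributes the trivial Hilbert space. Relative to this splitting the map $\varphi$ in part \ref{lemma:techlemma1}) is block-diagonal. Its $\RB$-block is multiplication by the constant $\sqrt{(1-t_1)/(1-t_2)}$, which is patently an isometric isomorphism $\Lzw{\RB}{(1-t_1)\muB}\to\Lzw{\RB}{(1-t_2)\muB}$ because the two measures on $\RB$ differ just by the positive scalar $(1-t_1)/(1-t_2)$.

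The substance lies in the $\RR$-block, which is pullback by the homeomorphism $g:=\rho_1\cp\rho_2^{-1}\colon\RR\to\RR$ together with rescaling by $\sqrt{t_1/t_2}$. To verify isometry I will apply the push-forward change-of-variables formula $\int_Y h\,\dd g(\nu)=\int_X h\cp g\,\dd\nu$ twice, routing through $\Lzw{(0,1)}{\muL}$ as an intermediate space: for $\psi\in\Lzw{\RR}{t_1\rho_1(\muL)}$ one checks
\begin{align*}
\int_\RR |\psi\cp g|^2\,\dd\rho_2(\muL)=\int_{(0,1)}|\psi\cp\rho_1|^2\,\dd\muL=\int_\RR|\psi|^2\,\dd\rho_1(\muL),
\end{align*}
so that the prefactor $\sqrt{t_1/t_2}$ precisely absorbs the mismatch between the weights $t_1$ and $t_2$. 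Surjectivity is then immediate because $g^{-1}=\rho_2\cp\rho_1^{-1}$ furnishes the obvious candidate inverse on each block, and a symmetric computation shows it composes with $\varphi$ to the identity on both sides.

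The boundary cases in \ref{lemma:techlemma1}) are handled identically, except that one of the two blocks is absent; note in particular that $\mu_{\rho,0}=\muB$ does not depend on $\rho$ at all, so the identity map is the required iso. Part \ref{lemma:techlemma2}) then follows at once: taking $t=1$ kills the $\RB$-summand, leaving $\Hil_{\rho,1}\cong\Lzw{\RR}{\rho(\muL)}$, and one further application of the push-forward formula with $g=\rho$ itself produces the canonical isometric isomorphism $\Lzw{\RR}{\rho(\muL)}\cong\Lzw{(0,1)}{\muL}\cong\Lzw{\RR}{\muL}$, the last identification being extension by zero outside $(0,1)$. I foresee no genuine obstacle; the whole argument is essentially bookkeeping around the push-forward change of variables together with a scalar rescaling of the measures on the two disjoint pieces.
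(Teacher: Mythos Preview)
Your argument for part~\ref{lemma:techlemma1}) is correct and coincides with the paper's; the paper merely writes ``immediate from the general transformation formula'', and you have spelled out exactly that computation via the block decomposition and the push-forward identity.

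For part~\ref{lemma:techlemma2}) there is a slip in your final step. Extension by zero outside $(0,1)$ gives only an isometric \emph{embedding} $\Lzw{(0,1)}{\muL}\hookrightarrow\Lzw{\RR}{\muL}$, not an isomorphism: its image is the proper closed subspace of functions vanishing a.e.\ on $\RR\setminus(0,1)$. So the chain $\Lzw{\RR}{\rho(\muL)}\cong\Lzw{(0,1)}{\muL}\cong\Lzw{\RR}{\muL}$ breaks at the last link with the map you propose. The paper avoids this by a different route: it first invokes part~\ref{lemma:techlemma1}) to reduce to the case where $\rho$ is a \emph{diffeomorphism}, and then uses the Radon--Nikodym density $\rho(\muL)=\frac{1}{|\dot\rho|}\,\muL$ to write down the explicit pointwise isometry $\psi\mapsto\frac{1}{\sqrt{|\dot\rho|}}\,\psi$ between $\Lzw{\RR}{\rho(\muL)}$ and $\Lzw{\RR}{\muL}$. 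Your approach is easily repaired in the same spirit---replace the extension-by-zero step by pull-back along any fixed diffeomorphism $(0,1)\to\RR$---but as written it does not close.
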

  \begin{proof}
    \begin{enumerate}
    \item
      This  is immediate from the general transformation formula.
    \item
      The first isomorphism is because $\mu_{\rho,0}(\RB)=0$. Then, by the first part, it suffices to specify the second isomorphism for the case that $\rho$ is a diffeomorphism. But, in this case we have $\rho(\lambda)=\frac{1}{|\dot\rho|}\hs\lambda$, so that for 
	\begin{align*}
		\textstyle\varphi\colon \Lzw{\RR}{\rho(\muL)}\rightarrow \Lzw{\RR}{\muL},\quad \psi \mapsto  \frac{1}{\sqrt{|\dot \rho|}}\hs\psi
	\end{align*}  we have 
      \begin{align*}
        \langle \varphi(\psi_1),\varphi(\psi_2)\rangle_\lambda&=\int_{\RR}  \psi_1 \ovl{\psi_2} \:\frac{1}{|\dot \rho|}\: \dd\muL= \int_{\RR}  \psi_1 \ovl{\psi_2} \:\: \dd\rho(\muL)
        =\langle \psi_1,\psi_2\rangle_{\rho(\lambda)}.
      \end{align*}
    \end{enumerate} 
  \end{proof}
So, if $\rho_0\in H$ and $t_0\in (0,1)$ are fixed, Lemma \ref{lemma:techlemma} shows that up to \emph{canonical} isometrical isomorphisms the parameters $\rho$ and $t$ give rise to the following three Hilbert spaces:
\begingroup
  \setlength{\leftmargini}{20pt} 
  \begin{enumerate}
  \item[1)]
    $\Hil_{\rho,1}\cong \Lzw{\RR}{\muL} \cong \Lzw{\RR}{\rho(\muL)}$\:\quad for all\quad $\rho\in \HHH$,\hspace*{\fill}{(Lemma \ref{lemma:techlemma}.\ref{lemma:techlemma2})}
  \item[2)] 
    $\Hil_{\rho,t}\cong L^2\big(\hspace{1pt}\raisebox{-0.1ex}{$\qR$},\mu_{\rho_0,t_0}\big)$\hspace{49pt}\quad for all\quad $\rho\in \HHH$, $t\in (0,1)$,\hspace*{\fill}{(Lemma \ref{lemma:techlemma}.\ref{lemma:techlemma1})}
  \item[3)]
    $\Hil_{\rho,0}\cong \Lzw{\RB}{\mu_{\mathrm{Bohr}}}$\hspace{31.5pt}\quad for all\quad $\rho\in \HHH$. \hspace*{\fill}{($\RR$ is of measure zero)}
  \end{enumerate}
  \endgroup
  \noindent
  Here, the Hilbert spaces in $2)$ and $3)$ are isometrically isomorphic because their Hilbert space dimensions coincide. In contrast to that, 
  the spaces in $1)$ and $3)$ cannot be isometrically isomorphic because $\Lzw{\RR}{\muL}$ is separable and $\Lzw{\RB}{\mu_{\mathrm{Bohr}}}$ is not so. 
  
\section{Conclusions}
\begingroup
\setlength{\leftmargini}{14pt}
\begin{itemize}
\item
In the first part of this paper, we have reformulated the definition of a projective limit of a compact Hausdorff space in order to make it more convenient for defining normalized Radon measures thereon. This, we have applied to the cosmological quantum configuration space $\RR\sqcup\RB$ for which, using Haar measures on tori, we have motivated the measures 
  $\mu_{\rho,t}=t\:\rho(\muL)\hs \oplus\hs  (1-t)\:\mu_{\mathrm{Bohr}}$.
Up to canonical isometric isomorphisms, these give rise to the Hilbert spaces 
\begin{align*}
	 \Lzw{\RR}{\muL} \qquad\qquad\qquad L^2\big(\hspace{1pt}\raisebox{-0.1ex}{$\qR$},\mu_{\rho,t}\big) \qquad\qquad\qquad \Lzw{\RB}{\mu_{\mathrm{Bohr}}}
\end{align*}
for $t\in (0,1)$ and $\rho\colon (0,1)\rightarrow \RR$ fixed. Although the last two Hilbert spaces are isometrically isomorphic, there might exist interesting representations of the reduced holonomy-flux algebra on $L^2\big(\hspace{1pt}\raisebox{-0.1ex}{$\qR$},\mu_{\rho,t}\big)$ which are not unitarily equivalent to the standard representation \cite{MathStrucLQG} on $\Lzw{\RB}{\mu_{\mathrm{Bohr}}}$.
\item
 In Subsection \ref{subsec:Rquer}, we have shown that the only Radon measure on $\qR$ which fulfils the two conditions in \cite{Jon} is given by $0_\RR\oplus \muB$. So, it should be investigated, whether the introduced projective structures can be used to single out this measure by certain invariance properties as well. This would provide an analogue to the uniqueness result for homogeneous LQC proven in \cite{ASHPARAM}. 
\end{itemize}
\endgroup 

\section*{Acknowledgements}
The author thanks Benjamin Bahr and Christian Fleischhack for numerous discussions
and many helpful comments on a draft of the present paper. He is grateful to two anonymous JMAA referees whose competent suggestions helped him to enhanced the presentation of the paper. 
 The author has been supported by the Emmy-Noether-Programm of the Deutsche Forschungsgemeinschaft under grant FL~622/1-1.


\appendix

\section*{Appendix}
\section{Proof of the Lemma in Subsection \ref{subsec:Motivation}}
\label{sec:ProofOfLemmaImCirc}
\noindent
{\bf Proof of Lemma \ref{lemma:BildCirc}:}
Let $\sigma\in \SU$ with $\varrho(\sigma)(\vec{e}_3)=\vec{n}$.
\begin{enumerate}
\item[\ref{lemma:BildCirc2})]	
  It suffices to show the claim for $\RR\subseteq \qR$. Now, if $c\in \mathbb{R}\subseteq \qR$, then 
  \begin{align}  
  \label{eq:equ111}    
    (\oishomc\cp \xi\hspace{1pt})(c)\stackrel{\eqref{eq:iotaxiR}}{=}(\oishomc\cp \iota_\RR)(c)\stackrel{\eqref{eq:commdiag}}{=}(\iota_{\Con} \cp \ishomc)(c), 
  \end{align}
  hence 
    \begin{align}
    \label{eq:imR}
    \begin{split}
      \pi_\delta(c)
      & \stackrel{\eqref{eq:projdef}}{=} \Delta(c)(\delta)=(\homiso\cp\oishomc\cp\xi)(c)(\delta) \stackrel{\eqref{eq:equ111}}{=}(\homiso\cp\iota_{\Con} \cp \ishomc)(c)(\delta)      \\
      &\hspace{2pt}\stackrel{\eqref{eq:SpeczuHomm}}{=} h_\delta\left((\iota_{\Con} \cp \ishomc)(c)\right) 
      \stackrel{\eqref{eq:patralla}}{=}\left(\pr_2\cp\parall{\delta}{\ishomc(c)}\right)(x,\me)
       \stackrel{\eqref{eq:parallcirc}}{=} \exp\left(\textstyle\frac{\tau}{2}\cdot \vec{n}\right)\cdot \Co{\sigma}(A(\tau,c)).
    \end{split}
  \end{align}
	Thus, $\pi_\delta|_{\RR}$ is continuous and $\pi_\delta(\mathbb{R})\subseteq \SU$ is connected. 
	
	Now, each proper Lie subalgebra of $\mathfrak{su}(2)$ is of dimension $1$ and, since $\SU$ is connected, the same is true for the proper Lie subgroups of $\SU$. Let $H\subseteq\SU$ be such a proper Lie subgroup with $\pi_\delta(\RR)\subseteq H$. Then, $\mathfrak{h}=\spann_{\mathbb{R}}(\vec{s}\hspace{2pt})$ for some $\vec{s}\in\mathfrak{su}(2)$, and $\TO_{\vec{s}}$ is the unique connected Lie subgroup of $H$ with Lie algebra $\mathfrak{h}$, i.e., the component of $\me$ in $H$. 
  But $\me=\pi_\delta(0_{\mathbb{R}})\in \pi_\delta(\mathbb{R})\cap H$, hence $\pi_\delta(\mathbb{R})\subseteq \TO_{\vec{s}}$ by connectedness of $\pi_\delta(\mathbb{R})$. Thus, each two elements of $\pi_\delta(\mathbb{R})$ have to commute, which is impossible by the second line in \eqref{eq:matrixentr}. 
\item[\ref{lemma:BildCirc4})]
  If $\ovl{x}\in \RB\subseteq \qR$, then 
  \begin{align*}
    \begin{split} 
      \pi_\delta(\ovl{x})&=\homiso\left(\left(\oishomc\cp \xi\right)(\x)\right)(\delta)
      \stackrel{\eqref{eq:pigammawquer}}{=} \left(\left(\oishomc\cp \xi\right)(\x)\big([h_\gamma(\cdot)]_{ij}\big)\right)_{ij} \\
      &\hspace{0pt}=d\cdot \Co{\sigma}\left( 
      \begin{pmatrix} \xi(\x)\left(c\mapsto \cos(\pc \tau)+\frac{\I}{2\pc}\sin(\pc \tau)\right) &\xi(\x)\left(c\mapsto\frac{cr}{\pc} \sin(\pc \tau)\right) \\ \xi(\x)\left(c\mapsto-\frac{cr}{\pc}\sin(\pc \tau)\right) & \xi(\x)\left(c\mapsto\cos(\pc \tau)-\frac{\I}{2\pc}\sin(\pc \tau)\right)  \end{pmatrix}
      \right)\\
      &\hspace{-0pt}=   d\cdot\Co{\sigma} \left(
      \begin{pmatrix} \xi(\x)\left(c\mapsto \cos(c r\tau)\right) &\xi(\x)\left(c\mapsto\sin(c r \tau)\right) \\ \xi(\x)\left(c\mapsto-\sin(c r \tau)\right) & \xi(\x)\left(c\mapsto\cos(cr \tau)\right)  \end{pmatrix}
      \right)   \\          
      &\hspace{-0pt}=d 
      \cdot\Co{\sigma}\left(\big(\ovl{x}\big(c\mapsto \exp(-c r\tau \cdot \tau_2)_{ij}\big) \big)_{ij}\right)\in d\cdot\Co{\sigma}(\TO_{\vec{e}_2})=d \cdot \TO_{\dot{\delta}(0)}.
    \end{split}
  \end{align*}
  Here, the third step follows from\eqref{eq:pigammawquer}, \eqref{eq:parallcirc}, \eqref{eq:matrixentr}, and multiplicativity and linearity of $\xi(\x)$. The fourth step is because the (unique) decompositions of the matrix entries  
  \begin{align*}
    a\colon c\mapsto \cos(\pc \mm)+\textstyle\frac{i}{2\pc}\sin(\pc \mm)\qquad\quad \text{and}\qquad\quad
    b\colon c\mapsto \textstyle\frac{cr}{\pc} \sin(\pc \mm)
  \end{align*}
  in \eqref{eq:matrixentr} into direct sums of the form $C_{0}(\mathbb{R})\oplus \CAP$ are given by
  \begin{align}
    \label{eq:cappluco}
    a(c)&= \left(\cos(\pc \mm)+\textstyle\frac{i}{2\pc}\sin(\pc \mm)- \cos(cr\tau)\right) + \cos(cr\tau)\\
    \label{eq:cappluco2}
    b(c)&=\left(\textstyle\frac{cr}{\pc} \sin(\pc \mm)-\sin(cr\tau)\right) + \sin(cr\tau).
  \end{align} 
  Finally, the last step is due to \eqref{eq:expformel}; and that $\pi_\delta(\ovl{x})$ is contained in $d \cdot \TO_{\dot{\delta}(0)}$ follows from closedness of $\TO_{\vec{e}_2}$ and that we find a net $\{c_\alpha\}_{\alpha\in I}\subseteq \RR$ such that for each $1\leq i,j\leq 2$ we have 
  \begin{align*}      
    \ovl{x}\Big(c\mapsto \exp\big(c r\tau \cdot \tau_2\big)_{ij}\Big)= \lim_\alpha \exp\big(c_\alpha r\tau  \cdot \tau_2\big)_{ij}. 
  \end{align*}
  Since each constant net represents an element of $\RB$ as well, we see that $\pi_\delta|_{\RB}\colon \RB\rightarrow d\cdot  \TO_{\dot\delta(0)}$ is surjective. 
  
For the intersection statement, let $x=0$ and $\vec{n}=\vec{e}_3$. Then, 
  it is clear from \eqref{eq:matrixentr} and \eqref{eq:expformel} that $s\in \pi_\delta(\mathbb{R})\cap d\cdot \TO_{\dot\delta(0)}$ iff $\sin(\pc\tau)=0$, hence $\cos(\pc\tau)=\pm 1$, i.e., $s=\pm d$. 
  
  Finally, combining invariance \eqref{eq:InvGenConnRel} of $\Delta(\ovl{x})$ with bijectivity of $\Co{\sigma}$ and 
  \begin{align*}
    \textstyle\exp\left(\frac{\tau}{2}\cdot\murs(\vec{n})\right)=\alpha(\sigma)\left(\exp\left(\frac{\tau}{2}\cdot \murs(\vec{e}_1)\right)\right),
  \end{align*}    
  we see that $\pi_\delta(\mathbb{R})\cap d\cdot \TO_{\dot\delta(0)}=\{-d,d\}$ holds iff it holds for $\vec{n}=\vec{e}_3$ and $x=0$. 
\item[\ref{lemma:BildCirc1})] 
  By \ref{lemma:BildCirc4}), we have $\mu_1(\pi_\delta(\raisebox{-0.3ex}{$\qR$}))= \mu_1(\pi_\delta(\RR))+\mu_1\big(\TO_{\dot\delta(0)}\big)$ for $\mu_1$ the Haar measure on $\SU$.
  Now, the derivatives of the functions \eqref{eq:cappluco} and \eqref{eq:cappluco2} are given by
  \begin{align*}
  \dot a(c)&= -\textstyle\frac{cr^2\tau}{\pc}\sin(\pc\tau)+ \textstyle\frac{\I c r^2}{2\pc^2}\left[\cos(\pc \tau)\tau-\textstyle\frac{1}{\pc}\sin(\pc\tau)\right]\\
  \dot b(c)&=\textstyle\frac{r}{\pc} \sin(\pc \tau) - \textstyle\frac{c^2r^3}{\pc^3} \sin(\pc \tau) +     \textstyle\frac{c^2r^3\tau}{\pc^2} \cos(\pc \tau),
  \end{align*} 
  so that we have $\dot b(0)= 2 r \sin\!\big(\textstyle\frac{\tau}{2}\big)\neq 0$ because $0<\tau<2\pi$ and $\dot a(c)\neq 0$ if $c\neq 0$. 
  Consequently, 
  $\pi_{\delta}(\RR)$ can be decomposed into countably many $1$-dimensional embedded submanifolds, each of measure zero w.r.t.\ the Haar measure on $\SU$. Thus, $\mu_1(\pi_{\delta}(\RR))=0$, and obviously we have $\mu_1\big(\TO_{\dot{\delta}(0)}\big)=0$ as well.   
\item[\ref{lemma:BildCirc3})]
  By invariance \eqref{eq:InvGenConnRel} of $\Delta(\ovl{x})$ and bijectivity of $\alpha_\sigma$ for $\sigma\in\SU$, it suffices to consider the case where $\delta=\gamma_{\tau,r}$ holds. Then, for $\x\in \qR$, we have $\x\in \RB$ iff
  \begin{align}
  \label{eq:dhdhd}
  \left(\pi_\delta(\ovl{x})\right)_{11}-\e^{-\frac{\I}{2}\tau}\cdot (\pi_{\gamma_{\tau r}}(\ovl{x}))_{11}=0.
  \end{align}
  In fact, for $\delta$ as above, we have $d=\exp(\frac{\tau}{2}\cdot \tau_3)$, so that we can choose $\sigma=\me$. Then, we see from the first equation in the proof of \ref{lemma:BildCirc4})that  
  \begin{align*}
  	\left(\pi_\delta(\ovl{x})\right)_{11}=\e^{-\frac{\I}{2}\tau}\cdot \xi(\x)(a),
  \end{align*}
  and by the same arguments it follows from \eqref{eq:patralinear} that 
  \begin{align}
  \label{eq:hsdfudfsu}
  (\pi_{\gamma_{\tau r}}(\ovl{x}))_{11}= \xi(\x)(\cos(c\cdot r\tau))
  \end{align} 
  holds. Thus, by \eqref{eq:cappluco}, \eqref{eq:dhdhd} equals $\xi(\x)$ applied to the $C_0(\RR)$-part $a_0$ of $\e^{-\frac{\I}{2}\tau}\cdot a$, which gives zero if $\x\in \RB\subseteq \qR$, and something non-zero if $\x\in \RR\subseteq \qR$ because $a_0$ vanishes nowhere.
   
   Now, by \eqref{eq:hsdfudfsu}, each $\ovl{x}\in \RB\subseteq \qR$ is uniquely determined by the values $\{(\pi_{\gamma_l}(\ovl{x}))_{11}\}_{l\in \mathbb{R}_{>0}}$, and the same is true for $\ovl{x}\in \mathbb{R}$ as the functions $\{\chi_l\}_{l\in \mathbb{R}_{>0}}$ separate the points in $\mathbb{R}$.      
\item[\ref{lemma:BildCirc5})]

  Again, it suffices to show the claim for $\delta=\gamma_{\tau,r}$. 
  
  For this, let $x\neq y$ with $\pi_{\gamma_{\tau,r}}(x)=\pi_{\gamma_{\tau,r}}(y)$. If $x=-y$, a closer look at the entry $(\pi_{\gamma_{\tau,r}})_{12}$ (the function $b$) shows that $\sin(\beta_x \tau)=0=\sin(\beta_y\tau)$ holds, hence $x=a_n$ and $y=a_{-n}$ for some $n\in \ZZ_{\neq 0}$. 
If $|x|\neq |y|$, then $\beta_x\neq \beta_y$, so that a closer look at
the entry $(\pi_{\tau,r})_{11}$ (function $a$) shows\footnote{Consider the graph of the curve $\RR_{\geq 0}\ni \beta\mapsto\cos(\beta \tau)\vec{e}_1+ \frac{1}{2\beta}\sin(\beta \tau)\vec{e}_2\in \RR^2$. Then, compare its self intersection points with the zeroes of the function  
	$b\colon c\mapsto \frac{cr}{\beta_c}\sin(\beta_c \tau)$.} 
that  either 
$\tau\beta_x, \tau\beta_y \in \{2\pi n\:|\: n\in \mathbb{N}_{\geq 1}\}$ or $\tau\beta_x, \tau \beta_y\in \{(2n-1)\pi\:|\: n\in \mathbb{N}_{\geq 1}\}$ holds. 
Thus, $x=a_n$ and $y=a_m$ for some $m,n\in \ZZ_{\neq 0}$, 
from which    
the first part follows. 
The merging property is immediate from the formulas \eqref{eq:cappluco} and \eqref{eq:cappluco2}. 
\hspace*{\fill}{\footnotesize$\blacksquare$}
\end{enumerate} 

\section{Projective Limits}
In this section, we will adapt the standard facts on projective structures to our definitions from Subsection \ref{subsec:ProjStruc}.
\begin{lemma} 
  \label{lemma:equivalence}
  Let $X$ be a projective limit of $\{X_\alpha\}_{\alpha \in I}$ w.r.t.\ the maps $\pi_\alpha \colon X\rightarrow X_\alpha$ for $\alpha\in I$ and $\pi^{\alpha_2}_{\alpha_1}\colon X_{\alpha_2}\rightarrow X_{\alpha_1}$ for $\alpha_1,\alpha_2 \in I$ with $\alpha_2 \geq \alpha_1$. Then, $X$ is homeomorphic to 
  \begin{align} 
  \label{eq:projlimprod} 
    \widehat{X}:=\left\{\hx \in \prod_{\alpha\in I}X_\alpha\: \:\Bigg|\:\: \pi_{\alpha_1}^{\alpha_2}(x_{\alpha_2})=x_{\alpha_1}\:\: \forall\:\alpha_2\geq \alpha_1\right\}
  \end{align}  
  equipped with the subspace topology inherited from the Tychonoff topology on $\prod_{\alpha\in I}X_\alpha$. 
  \begin{proof}
    The map $\eta\colon X\rightarrow \widehat{X}$,\:\: $x\mapsto \{\pi_\alpha(x)\}_{\alpha\in I}$ is well defined by Definition \ref{def:ProjLim}.\ref{def:ProjLim2}. Moreover, $\eta$ is continuous as it is so as a map from $X$ to $\prod_{\alpha\in I}X_\alpha$. This is  because $\pr_\alpha \cp \eta=\pi_\alpha$ is continuous for all projection maps $\pr_\alpha\colon\prod_{\alpha\in I}X_\alpha \rightarrow X_\alpha$ just by assumption, see Definition \ref{def:ProjLim}.\ref{def:ProjLim1}. 
    Thus, $\eta$ is a homeomorphism if it is bijective because $X$ is compact and $\widehat{X}$ is Hausdorff. 
    Now, injectivity of $\eta$ is immediate from Definition \ref{def:ProjLim}.\ref{def:ProjLim3}. For surjectivity, assume that $\{x_\alpha\}_{\alpha\in I}=\hx\in \widehat{X}$ with $\hx\notin \im[\eta]$, i.e.,
    $\bigcap_{\alpha\in I}\pi_\alpha^{-1}(x_\alpha)=\eta^{-1}(\hx)=\emptyset$. By continuity of $\pi_\alpha$ the sets $\pi_\alpha^{-1}(x_\alpha)\subseteq X$ are closed, hence compact by compactness of $X$. Consequently, there are finitely many $\alpha_1,{\dots},\alpha_k \in I$ such that $\pi_{\alpha_1}^{-1}(x_{\alpha_1})\cap {\dots} \cap \pi_{\alpha_k}^{-1}(x_{\alpha_k})=\emptyset$. By directedness of $I$ we find some $\alpha\in I$ with $\alpha_j\leq \alpha$ for all $1\leq j\leq k$, hence
    \begin{align}
      \label{eq:projlimmm}    
      \{x_{\alpha_j}\}= \big(\pi_{\alpha_j}^{\alpha}\cp \pi_\alpha\big)\left(\pi_\alpha^{-1}(x_\alpha)\right)= \pi_{\alpha_j}\left(\pi_\alpha^{-1}(x_\alpha)\right)\quad  \text{for all}\quad 1\leq j\leq k
    \end{align}
    because $\pi_\alpha^{-1}(x_\alpha)$ is non-empty ($\pi_\alpha$ is surjective) and $\pi^{\alpha}_{\alpha_j}(x_\alpha)=x_{\alpha_j}$ for all $1\leq j\leq k$. Applying $\pi_{\alpha_j}^{-1}$ to both sides of \eqref{eq:projlimmm} gives $\pi_{\alpha_j}^{-1}(x_{\alpha_j})\supseteq\pi_\alpha^{-1}(x_\alpha)$ for all $1\leq j\leq k$, which contradicts that $\pi_{\alpha_1}^{-1}(x_{\alpha_1})\cap {\dots} \cap \pi_{\alpha_k}^{-1}(x_{\alpha_k})=\emptyset$ holds.  
  \end{proof}
\end{lemma}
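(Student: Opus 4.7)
The plan is to exhibit an explicit comparison map
$$\eta\colon X\rightarrow\widehat X,\qquad x\mapsto (\pi_\alpha(x))_{\alpha\in I},$$
and show it is a homeomorphism by verifying well-definedness, continuity, injectivity, and surjectivity in turn; the final homeomorphism claim then follows for free since $X$ is compact and $\widehat X\subseteq\prod_\alpha X_\alpha$ is Hausdorff (as a subspace of a product of Hausdorff spaces).

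First I would check that $\eta$ actually lands in $\widehat X$: this is exactly the compatibility relation $\pi_{\alpha_1}^{\alpha_2}\cp\pi_{\alpha_2}=\pi_{\alpha_1}$ from Definition \ref{def:ProjLim}.2), applied pointwise. Continuity of $\eta$ is immediate from the universal property of the Tychonoff topology: it suffices to check that $\mathrm{pr}_\alpha\cp\eta=\pi_\alpha$ is continuous for each $\alpha$, and this is Definition \ref{def:ProjLim}.1). Injectivity is read directly off Definition \ref{def:ProjLim}.3): two distinct points of $X$ are separated by some $\pi_\alpha$, hence have distinct images under $\eta$.

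The only substantive step, and the one I expect to be the main obstacle, is surjectivity. Here I would argue by contradiction using the finite intersection property: suppose $\widehat x=(x_\alpha)\in\widehat X$ has empty preimage, i.e.\ $\bigcap_{\alpha\in I}\pi_\alpha^{-1}(x_\alpha)=\emptyset$. Each $\pi_\alpha^{-1}(x_\alpha)$ is closed in the compact space $X$, so by compactness some finite subcollection indexed by $\alpha_1,\dots,\alpha_k$ already has empty intersection. Using directedness of $I$, pick $\alpha\in I$ with $\alpha_j\leq\alpha$ for all $j$; then surjectivity of $\pi_\alpha$ guarantees that $\pi_\alpha^{-1}(x_\alpha)$ is non-empty, and the relation $\pi_{\alpha_j}^{\alpha}\cp\pi_\alpha=\pi_{\alpha_j}$ combined with $\pi^{\alpha}_{\alpha_j}(x_\alpha)=x_{\alpha_j}$ forces $\pi_\alpha^{-1}(x_\alpha)\subseteq\pi_{\alpha_j}^{-1}(x_{\alpha_j})$ for every $j$. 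Hence $\pi_\alpha^{-1}(x_\alpha)\subseteq\bigcap_j\pi_{\alpha_j}^{-1}(x_{\alpha_j})=\emptyset$, a contradiction.

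Finally, a continuous bijection from a compact space to a Hausdorff space is automatically a homeomorphism, so no separate argument for continuity of $\eta^{-1}$ is needed. The only non-routine ingredients used are compactness of $X$, the Hausdorff property of the $X_\alpha$, directedness of $I$, and the surjectivity of each $\pi_\alpha$ guaranteed by Definition \ref{def:ProjLim}.2); all of these are built into Definition \ref{def:ProjLim}.
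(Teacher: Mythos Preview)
Your proof is correct and follows essentially the same route as the paper's: the same comparison map $\eta$, the same verification of well-definedness, continuity, and injectivity directly from the three conditions of Definition~\ref{def:ProjLim}, and the same finite-intersection-property argument for surjectivity via a common upper bound in the directed set. One cosmetic slip: surjectivity of each $\pi_\alpha$ is stated in Definition~\ref{def:ProjLim}.1), not 2).
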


\begin{lemma}
  \label{lemma:ConstMeas}
  Let $X$ and $\{X_\alpha\}_{\alpha\in I}$ be as in Definition \ref{def:ProjLim}. Then, the normalized Radon measures on $X$ are in bijection with the consistent families of normalized Radon measures on $\{X_\alpha\}_{\alpha\in I}$.
\end{lemma}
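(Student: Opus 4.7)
The plan is to build the bijection in the standard projective-limit manner: push-forward in one direction, and Riesz--Markov applied to a cylindrical functional in the other. For the easy direction, given a normalized Radon measure $\mu$ on $X$, let $\mu_\alpha := \pi_\alpha(\mu)$. Each $\mu_\alpha$ is a normalized Borel measure since $\pi_\alpha$ is continuous and $\mu_\alpha(X_\alpha)=\mu(X)=1$, and it is inner regular because continuous images of compacts are compact, so compact approximation of Borel sets in $X$ transfers under $\pi_\alpha$. Consistency is immediate from functoriality of the push-forward:
\[
\pi^{\alpha_2}_{\alpha_1}(\mu_{\alpha_2})=\bigl(\pi^{\alpha_2}_{\alpha_1}\cp \pi_{\alpha_2}\bigr)(\mu)=\pi_{\alpha_1}(\mu)=\mu_{\alpha_1}.
\]

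For the converse, I would use Riesz--Markov. Let $\mathcal{A}\subseteq C(X)$ denote the set of cylindrical functions, i.e.\ functions of the form $f\cp \pi_\alpha$ with $f\in C(X_\alpha)$ and $\alpha\in I$, and set
\[
\Lambda(f\cp \pi_\alpha):=\int_{X_\alpha} f\,\dd \mu_\alpha.
\]
Well-definedness rests on consistency together with directedness: if $f\cp \pi_\alpha=g\cp \pi_\beta$ on $X$, pick $\gamma\geq\alpha,\beta$; surjectivity of $\pi_\gamma$ forces $f\cp \pi^\gamma_\alpha=g\cp \pi^\gamma_\beta$ on $X_\gamma$, and consistency yields
\[
\int_{X_\alpha} f\,\dd\mu_\alpha=\int_{X_\gamma}\bigl(f\cp \pi^\gamma_\alpha\bigr)\dd\mu_\gamma=\int_{X_\gamma}\bigl(g\cp \pi^\gamma_\beta\bigr)\dd\mu_\gamma=\int_{X_\beta} g\,\dd\mu_\beta.
\]
Linearity of $\Lambda$ on $\mathcal{A}$ again uses directedness (two cylindrical functions at levels $\alpha,\beta$ are rewritten simultaneously at a common upper bound $\gamma$), positivity is inherited from the $\mu_\alpha$, and normalization gives $|\Lambda(h)|\leq\|h\|_\infty$.

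The main obstacle is showing that $\mathcal{A}$ is dense in $C(X)$, which I would address by the complex Stone--Weierstrass theorem. Subalgebra structure follows by the same directedness trick used above for linearity; closure under complex conjugation is immediate; $\mathcal{A}$ contains the constants; and point separation is precisely the third axiom of Definition \ref{def:ProjLim}, combined with Urysohn's lemma applied in the compact Hausdorff space $X_\alpha$ to produce an $f\in C(X_\alpha)$ distinguishing $\pi_\alpha(x)$ from $\pi_\alpha(y)$. Once density is established, $\Lambda$ extends uniquely to a bounded positive linear functional on $C(X)$, and Riesz--Markov provides a normalized Radon measure $\mu$ on $X$ such that $\int h\,\dd\mu=\Lambda(h)$ for all $h\in C(X)$; evaluating on $f\cp \pi_\alpha$ gives $\pi_\alpha(\mu)=\mu_\alpha$ for every $\alpha$.

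Finally, the two assignments invert one another. Starting from a consistent family and push-forwarding the constructed $\mu$ recovers the family by design. Conversely, two Radon measures on $X$ with the same family of push-forwards agree on $\mathcal{A}$ (integrals of cylindrical functions are determined by the family), hence on $C(X)$ by density, and therefore coincide by the uniqueness clause of Riesz--Markov.
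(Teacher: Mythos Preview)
Your proof is correct and follows essentially the same route as the paper: define cylindrical functions, use directedness and surjectivity of the $\pi_\alpha$ to get a well-defined bounded linear functional, apply Stone--Weierstrass for density, then Riesz--Markov for existence and uniqueness. The only cosmetic difference is that you spell out the inner-regularity of the push-forwards and invoke Urysohn explicitly for point separation, whereas the paper leaves these as ``straightforward.''
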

\begin{proof}
  For $\mu$ a normalized Radon measure on $X$, it is straightforward to see that $\{\pi_\alpha(\mu)\}_{\alpha\in I}$ is a consistent family of normalized Radon measures on $\{X_\alpha\}_{\alpha\in I}$. For the converse statement, define $\mathrm{Cyl}(X):=\bigcup_{\alpha\in I}\pi_\alpha^*(C(X_\alpha))\subseteq C(X)$. Then, $\mathrm{Cyl}(X)$ is closed under involution, separates the points in $X$, and vanishes nowhere. Moreover, $\mathrm{Cyl}(X)$ is closed under addition because 
    \begin{align}
      \label{eq:addd}
      f\cp \pi_{\alpha_1} + g\cp \pi_{\alpha_2}&=\left(f \cp\pi^{\alpha_3}_{\alpha_1}\right) \cp \pi_{\alpha_3}+ \left(g \cp\pi^{\alpha_3}_{\alpha_2}\right)\cp \pi_{\alpha_3}\in \mathrm{Cyl}(X),
    \end{align}
    where $\alpha_1,\alpha_2,\alpha_3\in I$ with $\alpha_1,\alpha_2\leq \alpha_3$.
    It follows in the same way that $\mathrm{Cyl}(X)$ is closed under multiplication.
    By the Stone-Weierstrass theorem, $\mathrm{Cyl}(X)$ is a dense $^*$-subalgebra of $C(X)$. Moreover, the map
    \begin{align*}
    \III\colon \mathrm{Cyl}(X)\rightarrow \mathbb{C},\quad f\cp \pi_\alpha \mapsto \int_{X_\alpha}f\:\dd\mu_\alpha
    \end{align*}
    is well defined, linear, and continuous w.r.t.\ the uniform norm on $C(X)$. In fact, if $f\cp \pi_{\alpha_1}= g\cp \pi_{\alpha_2}$, then 
\begin{align*}
	   \big(f \cp\pi^{\alpha_3}_{\alpha_1}\big) \cp \pi_{\alpha_3}= \big(g \cp\pi^{\alpha_3}_{\alpha_2}\big)\cp \pi_{\alpha_3}\qquad\text{if}\qquad\alpha_1,\alpha_2\leq \alpha_3.
\end{align*}    
  Hence, $f \cp\pi^{\alpha_3}_{\alpha_1}=g \cp\pi^{\alpha_3}_{\alpha_2}$ by surjectivity of $\pi_{\alpha_3}$ so that the transformation formula yields
    \begin{align*}
      \III(f\cp \pi_{\alpha_1})&=\int_{X_{\alpha_1}}f\: \dd\mu_{\alpha_1} = \int_{X_{\alpha_3}}f\cp \pi_{\alpha_1}^{\alpha_3}\: \dd\mu_{\alpha_3}\\
      &= \int_{X_{\alpha_3}}g\cp \pi_{\alpha_2}^{\alpha_3}\: \dd\mu_{\alpha_3}=\int_{X_{\alpha_2}}g\: \dd\mu_{\alpha_2}=\III(g\cp \pi_{\alpha_2}).
    \end{align*}
    This shows well-definedness of $\III$. Linearity is clear from \eqref{eq:addd}, and continuity follows from surjectivity of $\pi_\alpha$ by
	\begin{align*}
		|\hs\III(f\cp \pi_\alpha)\hs|\leq\|f\|_\infty=\|f\cp \pi_\alpha \|_\infty.
	\end{align*}	    
   	Since $\III$ is linear and continuous, it extends to a continuous linear functional on $C(X)$ which, by the Riesz-Markov theorem (see, e.g., 2.5 Satz in \cite[Chap.VIII, \S 2]{Elstrodt}), defines a finite Radon measure $\mu$ on $\Borel(X)$. Then, 
    $\mu(X)=\III(1)=1$ and for each $f\in C(X_\alpha)$ we have
    \begin{align*}
      \int_{X_\alpha}f\:\dd(\pi_\alpha^*\mu)=\int_X (f\cp \pi_\alpha) \:\dd\mu=\III(f\cp \pi_\alpha)=\int_{X_\alpha}f\:\dd\mu_\alpha,
    \end{align*}
    hence $\pi_\alpha^*\mu=\mu_\alpha$ again by the Riesz-Markov theorem. Finally, if $\mu'$ is a further finite Radon measure with $\pi_\alpha^*\mu'=\mu_\alpha$ for all $\alpha\in I$, then $\III'\colon C(X)\rightarrow \mathbb{C}$, $f\mapsto \int_{X}f\:\dd\mu'$ is continuous and we have $\III|_{\mathrm{Cyl}(X)}=\III'|_{\mathrm{Cyl}(X)}$ by the transformation formula. Thus, $\III=\III'$ by denseness of $\mathrm{Cyl}(X)$ in $C(X)$, hence $\mu=\mu'$.
\end{proof}

\section{The Ashtekar-Lewandowski Measure}
\label{sec:AshLewMeasure}
In this subsection, we will reformulate the results from \cite{{ProjTechAL}} in terms of the Definitions \ref{def:ProjLim} and \ref{def:Mass}. 
For this, let $P=\RR^3\times \SU$, $\Paths$ the set of embedded analytic curves in $\RR^3$ and $\Gamma:=\bigsqcup_{n=1}^\infty \Paths^n$. To simplify the notations, we will write $S$ instead of $\SU$ in the sequel. 
\begingroup
\setlength{\leftmargini}{14pt}
\begin{itemize}
\item
  \itspacecc
  A refinement of $(\gamma_1,{\dots},\gamma_k)\in\Gamma$ is an element $(\delta_1,{\dots},\delta_n)\in \Gamma$ such that for each $\gamma_j$ we find a decomposition $\{(\gamma_j)_i\}_{1\leq i\leq k_j}$ such that each restriction $(\gamma_j)_i$ is equivalent to one of the curves $\delta_r$ or $\delta_r^{-1}$ for some $1\leq r\leq n$.
\item
  \itspace
  An element $(\delta_1,{\dots},\delta_n)\in \Gamma$ is said to be independent iff for each collection $\{s_1,{\dots},s_n\}\subseteq S$  
  there is some $\w\in \Con$ with $h_{\w}(\delta_i)= s_i$ for all $1\leq i\leq n$. 
\item
  \itspace
  Let $\gr\subseteq \Gamma$ be the set of finite tuples $(\gamma_1,{\dots},\gamma_k)\in \Gamma$ with 
  \begin{align*}
    \im[\gamma_i]\cap \im[\gamma_j]\subseteq \{\gamma_i(a_i),\gamma_i(b_i),\gamma_j(a_j),\gamma_j(b_j)\}\quad\text{ for }\quad 1\leq i\neq j\leq k
  \end{align*}
  where $\dom[\gamma_i]=[a_i,b_i]$.
\item
  \itspace
  For $(\gamma_1,{\dots},\gamma_k), (\gamma'_1,{\dots},\gamma'_l)\in \gr$ write $(\gamma_1,{\dots},\gamma_k)\leq (\gamma'_1,{\dots},\gamma'_l)$ 
  iff each $\gamma_i$ admits a decomposition $\{(\gamma_i)_j\}_{1\leq j\leq s_i}$ such that each restriction $(\gamma_j)_i$ is equivalent to one of the paths $\gamma'_1,{\dots},\gamma'_l$ or its inverse.
\end{itemize}
\endgroup
\noindent
Each $(\gamma_1,{\dots},\gamma_k)\in \gr$ is independent\footnote{Cf. Section 3 in \cite{Ashtekar2008} or Proposition A.1 in \cite{ParallTranspInWebs}.}
and $(\gr,\leq)$ is directed. In fact, if $(\gamma_1,{\dots},\gamma_k), (\gamma'_1,{\dots},\gamma'_l)$ $\in \gr$, the proof of Lemma D.1.3 in \cite{Thesis} shows that for $(\gamma_1,{\dots},\gamma_k, \gamma'_1,{\dots},\gamma'_l)\in \Gamma$ we find a refinement $(\delta_1,{\dots},\delta_n)\in \Gamma$, such that $\im[\delta_i]\cap \im[\delta_j]$ is finite for all $1\leq i\neq j\leq n$. Thus, splitting $(\delta_1,{\dots},\delta_n)$ at the respective intersection points gives the desired upper bound in $\gr$.
\begin{definition}
  \label{def:ProjLimAL}
  \begin{enumerate}
  \item
    For $\alpha=(\gamma_1,{\dots},\gamma_k)\in \gr$, $|\alpha|:=k$ we define the map $\pi_\alpha\colon \ovl{\Con}\rightarrow X_\alpha:=S^{|\alpha|}$ by
    \begin{align*}	
      \pi_\alpha(\ovl{\w}):= \left(\homiso(\ovl{\w})(\gamma_1),{\dots},\homiso(\ovl{\w})(\gamma_k)\right).
    \end{align*}
    This map is surjective by independence of $(\gamma_1,{\dots},\gamma_k)\in\gr$ and because $\Con$ is embedded into $\A$.
  \item
    Let $\alpha=(\gamma_1,{\dots},\gamma_k)\leq (\gamma'_1,{\dots},\gamma'_{k'})=\alpha'$ and $\{(\gamma_i)_j\}_{1\leq j\leq l_i}$ the corresponding decomposition of $\gamma_i$ for $1\leq i\leq k$.
    Then,
	\begin{align*}
	(\gamma_i)_j^{p_{ij}}\sim_\Con \gamma'_{m_{ij}} 
	\end{align*}
	for $p_{ij}\in\{1,-1\}$ and $1\leq m_{ij} \leq k'$ 
    uniquely determined by independence of $(\gamma'_1,{\dots},\gamma'_{k'})$.
    
    We define $\pi^{\alpha'}_\alpha\colon X_{\alpha'}\rightarrow X_\alpha$ by
    \begin{align}
      \label{eq:alphaalphsastrich}
      \pi^{\alpha'}_\alpha\left(x_1,{\dots},x_{k'}\right):=\Bigg(\prod_{j=1}^{l_1}\left(x_{m_{1j}}\right)^{p_{1j}},{\dots},\prod_{j=1}^{l_k}\left(x_{m_{kj}}\right)^{p_{kj}}\Bigg)
    \end{align}
    which is obviously continuous. Observe that \eqref{eq:alphaalphsastrich} cannot not depend on the decompositions $\{(\gamma_i)_j\}_{1\leq j\leq l_i}$ because $\pi_{\alpha'}$ is surjective and 
   $\pi^{\alpha'}_\alpha\cp \pi_{\alpha'}=\pi_\alpha$ holds. 
  \end{enumerate}
\end{definition}
\begin{lemma}
  \label{lemma:consFam}
  \begin{enumerate}
  \item
   \label{lemma:consFam1}
    $\ovl{\Con}$ is a projective limit of $\{X_\alpha\}_{\alpha\in \gr}$ w.r.t.\ the maps $\pi_\alpha$ for $\alpha \in \gr$ and $\pi_\alpha^{\alpha'}$ for $\alpha,\alpha'\in \gr$ with $\alpha \leq \alpha'$. 
  \item
   \label{lemma:consFam2}
    Let $\mu_k$ denote the Haar measure on $S^{k}$ for $k\in \mathbb{N}_{\geq 1}$ and $\mu_\alpha:=\mu_{|\alpha|}$ for $\alpha\in \gr$. Then, $\{\mu_{\alpha}\}_{\alpha\in \gr}$ is a consistent family of normalized Radon measures w.r.t.\ $\{X_\alpha\}_{\alpha\in \gr}$.
  \end{enumerate}
  \begin{proof}
    \begin{enumerate}
    \item
      For continuity of the maps $\pi_\alpha$ it suffices to consider the case where $\alpha=\gamma \in \Paths$ holds. 
      But, then continuity is clear from \eqref{eq:pigammawquer} because $[h_\gamma]_{ij}\in \cC$ and $\ovl{\w}\in \Spec(\cC)$.
      For the separation property \ref{def:ProjLim3}) from Definition \ref{def:ProjLim} observe that the Gelfand transforms of the functions $[h_\gamma]_{ij}$ separate the points in $\ovl{\Con}$ as they generate the continuous functions on $\ovl{\Con}$. Thus, the claim is clear from \eqref{eq:pigammawquer} as well.
    \item
      We have to show that $\pi^{{\alpha'}}_{\alpha}(\mu_{{\alpha'}})=\mu_{\alpha}$ holds if $\alpha\leq {\alpha'}$. By Riesz-Markov, for this 
      it suffices to verify $\int_{X_\alpha}f \:\dd\mu_\alpha=\int_{X_\alpha}f \:\dd\pi^{{\alpha'}}_{\alpha}(\mu_{{\alpha'}})$ for all $f\in C(X_\alpha)$. Now,  
      \begin{align*}
        \Borel(X_{\alpha'})=\bigotimes_{1\leq i\leq k'=|\alpha'|}\Borel(S)
      \end{align*}
      because $S$ is second countable so that, since $S$, $S^{k'-1}$ are $\sigma$-finite, for $g\in C(X_{\alpha'})$, $x=(x_1,{\dots},x_{k'})\in S^{k'}$ and $1\leq i\leq k'$  Fubini's formula gives
      \begin{align}
        \label{eq:Fubini}
        \int_{X_{\alpha'}}g \:\dd\mu_{\alpha'}=  \int_{S^{k'-1}} \left(\int_S g(x) \:\dd\mu_1(x_i)\right)\dd\mu_{k'-1}(\raisebox{-0.2ex}{$x^i$})
      \end{align}
      for $S^{k'-1}\ni x^i:=(x_1,{\dots},x_{i-1},x_{i+1},{\dots},x_{k'})$. Hence, for $f\in C(X_\alpha)$ we have
      \begin{equation*}
        \begin{split}
          \int_{X_\alpha}f \:\dd\pi^{{\alpha'}}_{\alpha}(\mu_{{\alpha'}})&=\int_{X_{\alpha'}}\Big(f\cp \pi^{{\alpha'}}_{\alpha}\Big) \:\dd\mu_{{\alpha'}}\\ &=\int_{S^{k'-1}}\left(\int_S\Big(f\cp \pi^{{\alpha'}}_{\alpha}\Big)(x) \:\dd\mu_1(x_i)\right)\dd\mu_{k'-1}(\raisebox{-0.2ex}{$x^i$}).
        \end{split}
      \end{equation*}
		By the definition of $\gr$, each of the variables $x_1,{\dots},x_{k'}$ occurs in exactly one of the products on the right hand side of \eqref{eq:alphaalphsastrich}. Consequently, the components $\big[\raisebox{-0.2ex}{$\pi^{{\alpha'}}_{\alpha}$}\big]_i$ of $\pi^{\alpha'}_\alpha$ mutually depend on different variables $\ovl{x}_i=\big(x_{m_{i,1}},{\dots},x_{m_{i,n(i)}}\big)$.
       
       Then, by the left-, right- and inversion invariance of $\mu_1$, for $x\in S^{k'}$ we have 
      \begin{align*}
        \int_S\Big(f\cp \pi^{{\alpha'}}_{\alpha}\Big)(x) \:\dd\mu_1(x_{m_{k,1}})
        & =\int_S f\Big(\big[\raisebox{-0.2ex}{$\pi^{{\alpha'}}_{\alpha}$}\big]_1(\raisebox{-0.1ex}{$\ovl{x}_1$}),{\dots},\big[\raisebox{-0.2ex}{$\pi^{{\alpha'}}_{\alpha}$}\big]_{k-1}(\raisebox{-0.1ex}{$\ovl{x}_{k-1}$}),\big[\raisebox{-0.2ex}{$\pi^{{\alpha'}}_{\alpha}$}\big]_{k}(\raisebox{-0.1ex}{$\ovl{x}_{k}$})\Big) \:\dd\mu_1(x_{m_{k,1}})
        \\
        & =\int_S f\Big(\big[\raisebox{-0.2ex}{$\pi^{{\alpha'}}_{\alpha}$}\big]_1(\raisebox{-0.1ex}{$\ovl{x}_1$}),{\dots},\big[\raisebox{-0.2ex}{$\pi^{{\alpha'}}_{\alpha}$}\big]_{k-1}(\raisebox{-0.1ex}{$\ovl{x}_{k-1}$}),x_{m_{k,1}}\Big) \:\dd\mu_1(x_{m_{k,1}}).
      \end{align*}
      Thus, applying the same argument, inductively 
       we obtain
      \begin{equation*}
        \int_{X_\alpha}f \:\dd\pi^{{\alpha'}}_{\alpha}(\mu_{{\alpha'}}) =\int_{S^{k'}} f(x_{m_{1,1}},{\dots},x_{m_{k,1}})\:\dd\mu_{\alpha'}(x)=\int_{X_\alpha} f\:\dd\mu_{\alpha},
      \end{equation*}
      whereby the last step follows inductively from \eqref{eq:Fubini} and $\mu_1(S)=1$. 
 \end{enumerate}
  \end{proof}
\end{lemma}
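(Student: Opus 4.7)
For Part 1, I plan to verify the three conditions of Definition \ref{def:ProjLim} directly. Continuity of $\pi_\alpha$ follows from \eqref{eq:pigammawquer}: each matrix entry of each component of $\pi_\alpha$ equals the Gelfand transform of an element of $\cC$ evaluated at $\ovl\w$. Surjectivity follows from the independence of $\alpha\in\gr$ together with the fact that $\Con$ is embedded into $\ovl\Con$, so any prescribed tuple $(s_1,\dots,s_{|\alpha|})\in S^{|\alpha|}$ is already attained at some $\iota_\Con(\w)\in\ovl\Con$. The compatibility $\pi^{\alpha'}_\alpha\cp\pi_{\alpha'}=\pi_\alpha$ is built into \eqref{eq:alphaalphsastrich} via the homomorphism axioms (compatibility with decompositions and with inverses). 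Finally, separation of points holds because the matrix entries $[h_\gamma]_{ij}$ generate $\cC$ by definition, so their Gelfand transforms separate $\Spec(\cC)=\ovl\Con$; via \eqref{eq:pigammawquer}, this translates into separation by the $\pi_\alpha$'s.

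For Part 2, the plan is to reduce the consistency condition $\pi^{\alpha'}_\alpha(\mu_{\alpha'})=\mu_\alpha$ (for $\alpha\leq\alpha'$) to the equality $\int_{X_\alpha} f\,\dd\mu_\alpha=\int_{X_{\alpha'}}(f\cp\pi^{\alpha'}_\alpha)\,\dd\mu_{\alpha'}$ for all $f\in C(X_\alpha)$ via Riesz-Markov, and then to attack this equality by iterated Fubini combined with bi- and inversion-invariance of the Haar measure $\mu_1$ on $S=\SU$. Fubini applies because $S$ is second countable, so $\mu_{|\alpha'|}$ agrees with the $|\alpha'|$-fold product measure. The key structural input is the claim that, writing $\pi^{\alpha'}_\alpha=\bigl([\pi^{\alpha'}_\alpha]_1,\dots,[\pi^{\alpha'}_\alpha]_k\bigr)$ as in \eqref{eq:alphaalphsastrich}, each variable $x_j$ enters into at most one component $[\pi^{\alpha'}_\alpha]_i$. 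Indeed, if $\gamma'_j$ contributed to the decompositions of two distinct $\gamma_{i_1}$ and $\gamma_{i_2}$, the corresponding pieces would have images equal to $\im[\gamma'_j]$, which is uncountable; but by the defining property of $\gr$, $\im[\gamma_{i_1}]\cap\im[\gamma_{i_2}]$ lies in the finite endpoint set.

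Given this disjoint-supports observation, the computation proceeds as follows: variables $x_j$ that do not appear in any component contribute factors $\mu_1(S)=1$, while for each component $[\pi^{\alpha'}_\alpha]_i=\prod_{j=1}^{l_i}(x_{m_{ij}})^{p_{ij}}$ I would integrate its variables in nested order, using left- or right-invariance of $\mu_1$ (and inversion-invariance when $p_{ij}=-1$) to absorb the surrounding factors at each step. After these reductions each component collapses to a single variable, and a final application of Fubini yields $\int_{X_\alpha} f\,\dd\mu_\alpha$. The main obstacle is the combinatorial bookkeeping required to perform the iterated integration cleanly: one must choose a consistent order for the variables so that, at each stage, the integrand is in the form needed for a single application of Haar invariance. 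Once the disjoint-supports lemma is granted, every elementary step is routine and no techniques beyond Fubini and bi-invariance of $\mu_1$ are required.
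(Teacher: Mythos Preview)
Your proposal is correct and follows essentially the same route as the paper: for Part~1 you invoke \eqref{eq:pigammawquer} for continuity and separation (surjectivity and the compatibility $\pi^{\alpha'}_\alpha\cp\pi_{\alpha'}=\pi_\alpha$ are already recorded in Definition~\ref{def:ProjLimAL}), and for Part~2 you reduce via Riesz--Markov to an integral identity and then use Fubini together with bi- and inversion-invariance of $\mu_1$, exploiting that the components $[\pi^{\alpha'}_\alpha]_i$ depend on disjoint sets of variables. Your phrasing ``at most one component'' (with unused variables integrating to $\mu_1(S)=1$) is in fact slightly more precise than the paper's ``exactly one'', and your justification of the disjoint-supports claim via the endpoint condition in the definition of $\gr$ makes explicit what the paper only gestures at.
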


\begin{definition}
  \label{def:AshLew}
  The normalized Radon measure $\mAL$ on $\ovl{\Con}$ that corresponds to the consistent family of normalized Radon measures $\{\mu_\alpha\}_{\alpha\in \gr}$ from Lemma \ref{lemma:consFam}.\ref{lemma:consFam2} is called Ashtekar-Lewandowski measure on $\ovl{\Con}$.
\end{definition}

\end{document}